\renewcommand{\textcolor}[3][]{#3}
\renewcommand{\textcolor}[3][]{%
	\ifx\relax#1\relax#3\else
	\ifx#1blue#3\else\textcolor[#1]{#2}{#3}\fi
	\fi}
\def\BibTeX{{\rm B\kern-.05em{\sc i\kern-.025em b}\kern-.08em
    T\kern-.1667em\lower.7ex\hbox{E}\kern-.125emX}}
\newcommand{\bma}{\bm{a}}
\newcommand{\x}{\bm{x}}
\newcommand{\y}{\bm{y}}
\newcommand{\e}{\bm{e}}
\newcommand{\g}{\bm{g}}
\newcommand{\z}{\bm{z}}
\newcommand{\h}{\bm{h}}
\newcommand{\T}{\mathcal{T}}
\newcommand{\K}{\mathcal{K}}
\newcommand{\C}{\mathcal{C}}
\newcommand{\M}{\mathcal{M}}
\DeclareMathOperator{\rad}{rad}
\DeclareMathOperator{\diam}{diam}
\DeclareMathOperator{\mismatch}{mismatch}
\DeclareMathOperator{\conv}{conv}
\newtheorem{theorem}{Theorem}[section]
\newtheorem{definition}{Definition}[section]
\newtheorem{lemma}{Lemma}[section]
\newtheorem{corollary}{Corollary}[section]
\newtheorem{remark}{Remark}[section]
\begin{document}

\title{Stability of Constrained Optimization Models for
Structured Signal Recovery}

\author{
	Yijun Zhong,
	~Yi Shen
\thanks{This work is supported by
the NSFC under  grant No. 12371101 and 11901529. (\textit{Corresponding author: Yi Shen})}
\thanks{Yijun Zhong is with the Department of Mathematics, Zhejiang Sci-Tech University, Hangzhou,  China, 310018. (E-mail: zhongyijun@zstu.edu.cn).}
\thanks{Yi Shen is with the Department of Mathematics, Zhejiang Sci-Tech University, Hangzhou,  China, 310018. (E-mail: yshen@zstu.edu.cn).
}
}

%
\markboth{Journal of \LaTeX\ Class Files}%
{Stability of Constraint Optimization Models for Structured Signal Recovery}
\maketitle

\begin{abstract}
Recovering an unknown but structured signal from its measurements is a challenging problem with significant applications in fields such as imaging restoration, wireless communications, and signal processing. 
	In this paper, we consider 
	the inherent problem stems from the prior knowledge about the signal’s structure—such as sparsity which is critical for signal recovery models.
We investigate three constrained optimization models that effectively address this challenge, each leveraging distinct forms of structural priors to regularize the solution space. Our theoretical analysis demonstrates that these models exhibit robustness to noise while maintaining stability with respect to tuning parameters that is a crucial property for practical applications, when the parameter selection is often nontrivial. 
By providing  theoretical foundations, our work supports their practical use in scenarios where measurement imperfections and model uncertainties are unavoidable.
Furthermore, under mild conditions, we establish trade-off
between the sample complexity and the mismatch error.

\end{abstract}

\begin{IEEEkeywords}
Least Squares, 
General Lasso, 
Sparse Phase Retrieval,
Nonlinear Approximation, 
Gaussian Width.
\end{IEEEkeywords}

\section{Introduction}
\IEEEPARstart{M}{odern} information processing and machine learning continually confront the challenge of effectively handling structured signals, whether they are data, images, or sound.
In this paper, we primarily focus on the recovery of  vectors $\x^*\in \mathbb{R}^n$ from 
a relatively small number of
noisy measurements $\y \in \mathbb{R}^m$ and the measurement matrix $A\in \mathbb{R}^{m\times n}$ with $m<n$. Since this problem is generally ill-posed, one can recover $\x^*$ when it possesses some type of structure. We assume that prior information about the bounded structure-inducing function $f$ at $\x^*$ is available, then one can define a feasible set
\begin{displaymath}
	\K = \{\x\in\mathbb{R}^n:f(\x)\leq \eta \},
\end{displaymath}
where $\eta =f(\x^*)$ is called the \textit{optimal tuning parameter}. 
The feasible set  $\mathcal{K}$  
which captures the structure of $\x^*$
encompasses many different types, such as the set
of all $s$-sparse signals, 
an appropriately scaled $l_1$ ball
and a star-shaped set \cite{Plan2017}. 
We consider  
three different optimization models for structure signal recovery with the feasible set $\mathcal{K}$ as follows:
\begin{itemize}
	\item Constrained least squares
	\begin{equation}\label{LS}
		\min_{\x\in \mathcal{K}}\frac{1}{2}\|\y-   A\x\|_2^2.
	\end{equation}
	\item Constrained least absolute deviation 
	\begin{equation}\label{LAD}
		\min_{\x\in \mathcal{K}}  \left\|\y-A\x\right\|_1.
	\end{equation}
	\item Constrained nonlinear least squares
	\begin{equation}\label{non-LS}
		\min_{\x\in \mathcal{K}} \frac{1}{2}
		\left\|\y-|A\x|\right\|_2^2.
	\end{equation}
\end{itemize}

\subsection{Motivation} 
The feasible set $\mathcal{K}$ offers a defined range of possible solutions, allowing for a more focused and efficient approach to problem-solving in various applications. 
However, determining the optimal choice of the tuning parameter 
$\eta$
is not always straightforward. A widely accepted method for approximating a suitable value is through cross-validation. It is natural to question whether the performance of constrained optimization models is influenced by the specific selection of the tuning parameter. Research on the sensitivity of sparse signal recovery models to parameter selection can be found in
\cite{Chatterjee2014,ver2016,Berk2021,Berk2022,Berk2023,karma2019} and references therein.
In particular, the studies in \cite{Berk2021,Berk2022}  evaluated the minimax order-optimal recovery results of three LASSO variants concerning their governing parameters. 
In \cite{Berk2021}, the authors investigated the proximal denoising problem, characterizing the asymptotic singularity of the risk as the noise scale tends to zero. 
Specifically  \cite[Theorem 2.1]{Berk2021} considered
three distinct regimes:  $\|\x^*\|_1<\eta$,  $ \|\x^*\|_1=\eta$, and  $ \|\x^*\|_1>\eta$. Then \cite[Theorem V.1 (Asymptotic Singularity)]{Berk2022} extended 
this analysis to
the constrained Lasso 
 with $\|\x^*\|_1 \neq \eta$ .
Motivated by these asymptotic results on the tuning parameter $\eta$, we   develop  non-asymptotic error bounds for structure signal recovery  problem when $ f(\x^*) \neq \eta$.	

\subsection{Contributions}
We establish the stability  of the \textcolor{blue}{constrained} optimization model \textcolor{blue}{\eqref{LS},  \eqref{LAD}
and \eqref{non-LS}
}  in the following two cases
\begin{equation*}
 f(\x^*) < \eta 
\quad 
\text{and}
\quad 
 f(\x^*) \ge \eta.
\end{equation*}
The condition $f(\x^*)\neq \eta$ can be understood from the intuition: the approximation error is controlled by the effective dimension of the constraint set \cite{Berk2021}.
One key finding is the stability of both linear and non-linear estimators against tuning parameter inaccuracies, demonstrating their robustness even when optimal hyperparameters are not precisely known. 
This stability is particularly crucial in practical applications where exact parameter tuning may be challenging due to computational constraints or limited prior knowledge.
On the other hand, our main results quantify the trade-off between sample complexity (number of measurements) and the mismatch error (distance between the  $f(\x^*)$ and the tuning parameter $\eta$ of the feasible set $\K$). 
This refined trade-off provides a  theoretical understanding of the interplay between measurement constraints and tuning parameter, offering practical guidelines for algorithm design in compressed sensing, high-dimensional statistics, and inverse problems.

\subsection{Road Map}

The outline of this paper is structured as follows. In the remainder of this section,we describe three typical models concerning linear and nonlinear measurements.
Section \ref{related} reviews the relevant work on stability analysis for the \eqref{LS}, \eqref{LAD} and
\eqref{non-LS}, along with algorithms for solving these problems.
Section \ref{sec:gaussian}  briefly recalls the Gaussian width and its applications to feasible sets. Section \ref{sec:main} establishes the stability of the three models discussed in this paper. Some proofs of the Lemmas and Theorems are provided in the supplemental material.

\subsection{Notation}
For any given vector 
$\x= (x_1,x_2,\ldots,x_n)^T$,
the  ``$\ell_0$--norm'' of $\x$  denoted by $\|\x\|_0$,
is the count of its nonzero entries.
A vector is said to be $s$-sparse if $\|\x\|_0\le s$. 
For any $p>0$, we define
$\|\x\|_p:=
\left(\sum_{i=1}^{n}|x_i|^p\right)^{1/p}$.
For any given positive integer $m$, we denote 
$[m]=\{1,\ldots,m\}$.
Given an index set $\Omega \subset [m]$ and a  vector $\x$, let
$\x_{\Omega}$ denote the vector whose
$i$-th entry is equal to $i$-th entry of $\x$  for $i$  in $\Omega$ and  equal to zero otherwise.
Similarly,
let $A_{\Omega}$ denote the matrix  whose $i$-th row is equal to $i$-th row of $A$ for $i$ in $\Omega$ and equal to zero vector otherwise.
The letters $C$, $C_1$, and $C_2$ are usually treated as  constants, but their value vary across different parts of the paper. Instead of explicitly writing $a \leq C   b$, we write $a \lesssim b$, and instead of $a \geq C   b$, we write $a \gtrsim b$, with $C > 1$, respectively. We use the notation $a \approx b$ to indicate that there exist constants $C_1 > 0$ and $C_2 > 0$ such that $C_1 b \leq a \leq C_2 b.$

Let $\mathcal{N}(0,I_n)$ stand for the multivariate normal distribution in $\mathbb{R}^n$ with
zero mean and   covariance matrix  identity $I_n$. 
We assume that the measurement matrix $A\in \mathbb{R}^{m\times n}$ is a Gaussian random matrix  whose rows $\bma_i\sim \mathcal{N}(0,I_n)$ are independent.
The unit Euclidean
sphere is denoted by
\begin{equation*}
\mathcal{S}^{n-1}=\{\x\in\mathbb{R}^n:\ \|\x\|_2=1\},
\end{equation*}
and the unit Euclidean ball in $\mathbb{R}^n$ is denoted by 
\begin{equation*}
\mathcal{B}_2^n=\{\x\in\mathbb{R}^n:\ \|\x\|_2\leq 1\}.
\end{equation*}
The Euclidean projection of $\x^*$ onto the set $\mathcal{K}$ is denoted by 
\begin{equation*}
\bm{P}_{\mathcal{K}}(\x^*)
\in \arg\min_{\x\in \mathcal{K}}
\| \x^* - \x  \|_2^2.
\end{equation*}
Notice that we do not require $\mathcal{K}$ be convex, thus the projection operator $\bm{P}_{\mathcal{K}}$ may not be unique.
Denote 
$d(\x,\y)=(\mathbb{E}(X_{\x}-X_{\y})^2)^{\frac{1}{2}}$ as the canonical metric on the index set $\T$ for a given random process $(X_{\x})_{\x\in\T}$.
For any given $\T$,
we denote the diameter of set $\T$ by
\begin{equation*}
\diam(\T)=\sup\{\|\x-\y\|_2:\x,\y\in\T\}
\end{equation*}
and
 the radius of $\T$ by ${ \rad}(\T)=\sup\limits_{\x\in \T}\|\x\|_2$.
We also use the notion of Minkowski functional of $\K$ which is defined as
\[
\|\x\|_{\K}=\inf\{\lambda>0:
\ \lambda^{-1}\x\in\K\}
\]
for the given vector $\x\in \mathbb{R}^n$.

\section{Related Works}\label{related}

Within the past few years, there have been numerous studies in the literature on the stability of signal recovery models with structural properties, including the convergence analysis of algorithms for solving such models.
These works draw significant inspiration from the relatively recent field of compressed sensing \cite{foucart2013invitation}, its theoretical foundations lie in classical results from geometric functional analysis \cite{Gordon1988} and convex integral geometry \cite{schneider2008}. 
The related works on sparse recovery problems can be broadly classified into two categories:
the linear and non-linear approaches. For conciseness, we focus on the real-valued scenario.

\subsection{Linear Estimation}
We begin with the linear inverse problems:
\begin{equation*}
	\y= A\x^*+\e
\end{equation*}
where $\bm{e}$ denotes noise vector that is independent from the measurement matrix.
The first  recovery model is the standard least square method subject to a structural \textcolor{blue}{constrained} \eqref{LS}.
Different structural information of the signal $\x$
represented by 
$\mathcal{K}$ gives rise to numerous extensively studied problems.
For instance, if $f$ is the sparsity-induced 
``$l_0$ norm'', then model \eqref{LS} is the Subset Selection. 
If $f$ is the 
$l_1$ norm, then model \eqref{LS} is called the 
\textit{Basis Pursuit} (BP)
in compressed sensing literature \cite{chen1998} or the
\textit{Least Absolute Shrinkage and Selection Operator }
(Lasso) in 
in the context of statistical regression \cite{Tibshirani1996}.
In \cite{Chatterjee2014}, Chatterjee considered the least squares under a convex constraint 
which is
a special case of \eqref{LS}
and illustrated that for nonsingular design matrices, the prediction error is vastly small when $f(\x^*)=\eta$. 
A vast body of work has studied algorithms for solving sparsity-constrained optimization \eqref{LS}, 
see e.g.,
\cite{Bruckstein2009,Bahmani2013,yuan2014,oymak2018,Mahdi2017,BLUMENSATH2009265,Bahmani2013A} and  many references therein.  
If $f$ is the non-convex sparsity-induced $l_p$ norm with $0< p\le 1$, then
two kinds of iterative algorithms have been used to approximate $\x^*$ by solving the $l_p$-constrained least square:
\begin{equation}\label{lp}
	\min\limits_{\|\x\|_p\leq  \eta } \frac{1}{2}\|\y-A\x\|^2_2.
\end{equation}
One is the project gradient descent algorithm
\cite{Bahmani2013A,oymak2018}. 
In the case that $ \|\x^*\|_p \ge \eta  $, the project gradient descent algorithm is proved to be stable
to  tuning parameters $\eta$ 
under the \textit{Restricted Isomerty Property} (RIP) conditions \cite{Bahmani2013A}. 
Then
the project gradient descent algorithm with a fixed learning rate is proved in \cite{oymak2018} to be stable to  tuning parameters in both case 
$  \|\x^*\|_p \le  \eta  $ 
and the case 
$\|\x^*\|_p \ge \eta  $.
The stability analysis on the iterative algorithms in \cite{Bahmani2013A,oymak2018,zhong2024} are consistent with Theorem \ref{th1}.

For linear models with heavy-tailed noise or outliers, least squares \eqref{LS} often perform poorly for signal recovery, necessitating the use of the  least absolute deviation (LAD) model \eqref{LAD}. 
If $f$ is chosen to the $l_1$ norm and the tuning
parameter 
satisfies 
$ \|\x^*\|_1 \le  \eta  $,
then 
the model \eqref{LAD} 
\begin{equation}\label{LAD2}
	\min_{\|\x\|_1 \le \eta} 
	\left\|\y-A\x\right\|_1
\end{equation}
is showed 
to be highly robust for both
dense noise and adversarial sparse noise 
\cite{karma2019}.
However,  the case of \textcolor{blue}{$\|\x^*\|_1  > \eta$}  is not addressed in
\cite{karma2019}. 
\textcolor{blue}
{We shall supplement   \cite[Theorem 1.1]{karma2019} in the discussion after Corollary \ref{l1-lad}.}
Reducing the
adversarial sparse noise 
by solving  
the \textcolor{blue}{constrained} LAD
was further extended to  low rank matrix recovery problems \cite{Xu2022} and phase 
retrieval  problems \cite{huang2023}. 
If $f$ is chosen to the ``$l_0$--norm'' in the model \eqref{LAD}, then
projected subgradient descent methods were proposed in \cite{li2023,liu2019} to solve \eqref{LAD} directly.
A  well-known model in statistics is the $l_1$ penalized LAD. The asymptotic
properties of variable selection consistency were discussed \cite{gao2010}. 
The 
consistency of the $l_1$ penalized LAD
estimator was discussed in \cite{Sophie2011, wang2007}.
The near oracle performance of the $l_1$ penalized LAD is obtained in \cite{WANG2013135}.
\subsection{Nonlinear Estimation--Phase Retrieval}
Our analysis method for linear models can be extended to other related recovery problems, such as phase retrieval, low-rank matrix recovery, and blind deconvolution. We illustrated this by confirming the stability of a well-known phase retrieval model.
Mathematically, measurements formulated by the phase retrieval model are 
\begin{equation*}
	\y=|A\x^*|+\bm{e},
\end{equation*}
where $\bm{e}$ denotes the noise  that is independent from the measurement matrix.
One approach to 
recovery $\x^*$ is solving
the model \eqref{non-LS}. This quadratic measurement scheme, despite its nonlinear nature, has proven effective in capturing intensity-only observations while maintaining mathematical tractability.
The model \eqref{non-LS}
works rather well in some
practical domains such as ptychography for chip imaging \cite{Mahdi}. 
\textcolor{blue}{Compared to the intensity based model, algorithms minimizing the amplitude based model \eqref{non-LS} are usually more efficient in computation \cite{wang2017,zhang2017,Mahdi}.}
The  model \eqref{non-LS} was proved to be robust to the noise 
by Huang and Xu 
in \cite{xu2020} 
\textcolor{blue}{in the case where the tuning parameter is optimal, i.e.  
$\|\x^*\|_1  = \eta$ .
Moreover, the result in \cite[Theorem 1.2]{xia2024-1} extends the error bound in \cite[Theorem 1.5]{xu2020} for the real signals to complex
ones by employing different tools.
 Two algorithms for solving the phase retrieval problem with generative priors are proposed in \cite{hyder2019}.
}
The
\textit{Projected Wirtinger Flow} (PWF) is proposed to
solve the model \eqref{non-LS}
in  \cite{Mahdi}.  
The convergence guarantees are provided as long as
PWF is initialized  in a proper neighborhood of the unknown
signal $\x^*$.
\textcolor{blue}{In order to solve the model \eqref{non-LS} with the optimal condition
$\|\x\|_0 \le \eta$ with
$\|\x^*\|_0 = \eta  $,
many algorithms have been designed and achieved good results on the problem of sparse phase retrieval, 
such as Iterative Hard Thresholding \cite{cai2022}, Truncated Amplitude Flow \cite{wang2017}, CoPRAM \cite{Jagatap2019}.
The existing works [30-34] focus on  stability analysis of models or  convergence analysis of algorithms, all of which assume that 
$\eta$
is the optimal parameter. 
This oversight means that the potential impact of 
$\eta$ deviation from optimality on model stability and algorithm convergence remains unexplored, leaving a gap in our understanding of these analytical frameworks.
}

\section{Preliminaries}\label{sec:gaussian}
\textcolor{blue}{
In this section, we recall some properties of the Gaussian width, which is extensively used in the context of convex recovery of structured signals from independent random linear measurements \cite{Tropp2015,liu2020sparsity}, and
 in statistics and signal processing to quantify the variability or dispersion of a set of data points \cite{oymak2018,Vershynin2020,Ver2015}.}
\begin{definition}[Gaussian width]
The Gaussian width of a set 
$\K\subset \mathbb{R}^n$
is defined as:
\[
\omega(\K)=\mathbb{E}_{\g}
\left[
\sup\limits_{\z\in\K}\langle \g,\z\rangle
\right],
\]
where the expectation is taken over  standard Gaussian random vectors $\g\sim \mathcal{N}(0,I_n)$.
\end{definition}
The Gaussian width is similar to the mean width which originates from geometric functional analysis and convex geometry. 
For a given set $\K$, 
the Gaussian width $\omega(\K)$ can be bounded by the covering number of $\K$ \cite[Theorem 3.11]{Ver2015}.
\begin{definition}[Descent set and Cone] \cite[Definition 1]{oymak2018}
The set of descent of the function $f$ at a point $\x$ is defined as
\[
\mathcal{D}_f(\x)=\{\h:~f(\x+\h)\leq f(\x)\}.
\]
The cone of descent is defined as a \textcolor{blue}{closed} cone $\mathcal{C}_f(\x)$ that contains the descent set, i.e., $\mathcal{D}_f(\x)\subset \mathcal{C}_f(\x)$. The tangent cone is the conic hull of the descent set.
\end{definition}
The alternative definition of $\mathcal{C}_f(\x)$ can be found in \cite{ver2016}, and as shown in \cite{Ver2015},
 the Gaussian mean widths of subsets of the unit sphere of
the form $\T= \mathcal{C}_f(\x)  \cap \mathcal{S}^{n-1}$
plays a significant role in structured recovery, and
$\omega^2 ( \mathcal{C}_f(\x)  \cap \mathcal{S}^{n-1})$ is an approximation for the dimension of $\mathcal{K}$ \cite{Vershynin2020}.

\begin{definition}[Phase Transition Function]\label{m0-1}\cite[Definition 4]{oymak2018}
Let $\mathcal{C}_f(\x) $ be a cone of descent of $f$ at $\x$. Set $\omega=\omega ( \mathcal{C}_f(\x)  \cap \mathcal{B}^n)$, let 
 $\phi(t)=\sqrt{2}\frac{\Gamma(\frac{t+1}{2})}{\Gamma(\frac{t}{2})}\approx \sqrt{t}$, then the phase transition function is defined as 
\begin{equation*}
 \M(f,\x,u)=\phi^{-1}(\omega+u)\approx (\omega+u)^2  
\end{equation*}
where \textcolor{blue}{ $u>0$ is a parameter controlling
	the probability of success.}
\end{definition}
We use the short hand $m_0=\M(f,\x,u)$ with the dependence on $f$, $\x$, $u$ implied. 
It was shown in \cite{oymak2018} that for convex $f$, $m_0$ is exactly the minimum number
of measurements required for the program \eqref{LS} to succeed
in recovering the $\x^*$ with high probability. 
Besides, we define $\M(f,\x)$,
which approximately characterizes the
minimum number of samples required.
\begin{definition}\label{m0-2}
Let $\mathcal{C}_f(\x) $ be a cone of descent of $f$ at $\x$. Set $\omega=\omega ( \mathcal{C}_f(\x)  \cap \mathcal{S}^
{n-1})$, then the approximate minimum sample function is defined as 
$
\M(f,\x)=\omega^2.
$
\end{definition}
We denote $m_1=\M(f,\x)$ for convenience.
Other similar settings of the minimal number of samples. 
can be found in \cite[Definition 3]{Mahdi}.
The estimations of  \eqref{LS}, \eqref{LAD} and \eqref{non-LS} rely on the size of the feasible set $\mathcal{K}$, therefore we provide upper
bounds based on the Gaussian width to quantify the complexity of the general subset $\K\subset \mathbb{R}^n$.
The smaller this cone is the more suited the
function $f$ is at capturing the properties of $\x^*$. 
The Gaussian width serves as a valuable tool to assess the size of the feasible set $\mathcal{K}$,
offering the advantage of being robust to perturbations: if $\mathcal{K}$ experiences a slight increase, then 
the Gaussian width will only undergo a marginal change \cite{Plan2017}.
\textcolor{blue}{\begin{definition}
		A function $f:\mathbb{V}\rightarrow \mathbb{R}$ (where $\mathbb{V}$ is a vector space) is called absolutely homogeneous if,  it satisfies
		\[
		f(\alpha\x)=|\alpha| f(\x).
		\]
		for all scalars $\alpha\in \mathbb{R}$ and  vectors $\x\in \mathbb{V}$.
	\end{definition}
		For example, a norm is absolutely homogeneous.  The absolute homogeneity assumption of $f$ in the main theoretical results presented in Section \ref{sec:main}
		is necessary  in the case  where $0<f(\x^*)<\eta$ due to its role in enabling the use of the intermediate variable $\x^{\eta}$.
		Future work could investigate whether an analogous version of \eqref{re2} (in Theorem \ref{th1} ) can be obtained without imposing further assumptions.
}

\textcolor{blue}{
In rest of this section, we first present several lemmas relevant to the random process, followed by a characterization of the approximate isotropy of Gaussian random matrices with Gaussian width. Then we provide lemmas regarding the deviation of random matrices on sets, which establish two-sided bounds on Gaussian random matrices.}

\begin{lemma}\label{lem1}\cite[Theorem 27]{Mahdi}
    Suppose that $A\in \mathbb{R}^{m\times n}$ is a Gaussian random matrix with independent $\bma_i\sim \mathcal{N}(0,I_n)$ rows, then for any subset $\T\subset \mathbb{R}^n$ and any $u\geq 0$, $\delta\in (0,1)$, the event
    \[
    \left |\frac{\|A\x\|_2}{\phi(m)}-\|\x\|_2
    \right |\leq \delta \|\x\|_2
    \]
    holds for all $\x\in \T$ with probability at least $1-2\exp(-\frac{u^2}{2})$ as long as
    $
    m\geq \frac{(\omega(\T)+u)^2}{\delta^2}.
    $
\end{lemma}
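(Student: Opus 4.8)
\emph{Proof plan.} The plan is to reduce the claim to a two-sided bound on the extreme values of $\|A\x\|_2$ over $\T$, establish it with Gordon's Gaussian comparison inequality, and sharpen it to a high-probability statement via Gaussian concentration of measure. First I would reduce to the case $\T\subseteq\mathcal{S}^{n-1}$, which is the one used in the sequel (cf.\ the sets $\mathcal{C}_f(\x)\cap\mathcal{S}^{n-1}$ in Definition \ref{m0-2}); for a general $\T$ the multiplicative form then follows by positive homogeneity, as both sides of the asserted inequality scale linearly in $\x$. For a fixed unit vector $\x$, each entry $\langle\bma_i,\x\rangle$ of $A\x$ is $\mathcal{N}(0,1)$, so $A\x\sim\mathcal{N}(0,I_m)$ and hence $\mathbb{E}\|A\x\|_2=\mathbb{E}\|\g\|_2=\phi(m)$ with $\g\sim\mathcal{N}(0,I_m)$; this identifies the normalizing constant $\phi(m)$ as the expected Gaussian norm. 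Thus it suffices to show that, with probability at least $1-2\exp(-u^2/2)$,
\[
\sup_{\x\in\T}\bigl|\,\|A\x\|_2-\phi(m)\,\bigr|\le\omega(\T)+u,
\]
and then divide through by $\phi(m)$.

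The key step is to control the means of the smallest and largest values $\lambda_{\min}=\inf_{\x\in\T}\|A\x\|_2$ and $\lambda_{\max}=\sup_{\x\in\T}\|A\x\|_2$. Writing $\|A\x\|_2=\sup_{\y\in\mathcal{S}^{m-1}}\langle A\x,\y\rangle$, I would compare the Gaussian process $X_{\x,\y}=\langle A\x,\y\rangle$ with the decoupled process $Y_{\x,\y}=\langle\g,\y\rangle+\langle\h,\x\rangle$, where $\g\sim\mathcal{N}(0,I_m)$ and $\h\sim\mathcal{N}(0,I_n)$ are independent. A direct variance computation gives $\mathbb{E}(X_{\x,\y}-X_{\x',\y'})^2\le\mathbb{E}(Y_{\x,\y}-Y_{\x',\y'})^2$, which reduces to $(1-\langle\x,\x'\rangle)(1-\langle\y,\y'\rangle)\ge0$, and equality when $\x=\x'$; these are precisely the hypotheses of Gordon's min--max inequality and of Sudakov--Fernique. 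Gordon's comparison then yields $\mathbb{E}\lambda_{\min}\ge\mathbb{E}\|\g\|_2-\omega(\T)=\phi(m)-\omega(\T)$, while Sudakov--Fernique applied to the double supremum gives $\mathbb{E}\lambda_{\max}\le\phi(m)+\omega(\T)$. This is where $\omega(\T)$ enters and where uniformity over all of $\T$ is purchased.

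Next I would upgrade these expectation bounds to tail bounds. Regarded as functions of the Gaussian vector $A\in\mathbb{R}^{m\times n}$, both $A\mapsto\lambda_{\min}$ and $A\mapsto\lambda_{\max}$ are $1$-Lipschitz in the Frobenius norm, since for $\x\in\mathcal{S}^{n-1}$ one has $|\,\|A\x\|_2-\|B\x\|_2\,|\le\|A-B\|_{\mathrm{op}}\le\|A-B\|_F$ and the infimum or supremum of $1$-Lipschitz functions is again $1$-Lipschitz. Gaussian concentration of Lipschitz functions then gives $\Prob(\lambda_{\min}\le\mathbb{E}\lambda_{\min}-u)\le\exp(-u^2/2)$ and $\Prob(\lambda_{\max}\ge\mathbb{E}\lambda_{\max}+u)\le\exp(-u^2/2)$. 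A union bound combined with the previous step shows that, with probability at least $1-2\exp(-u^2/2)$,
\[
\phi(m)-\omega(\T)-u\le\|A\x\|_2\le\phi(m)+\omega(\T)+u\qquad\text{for all }\x\in\T,
\]
which is exactly the centered bound displayed above.

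Finally, dividing by $\phi(m)$ turns this into $\bigl|\,\|A\x\|_2/\phi(m)-1\,\bigr|\le(\omega(\T)+u)/\phi(m)$, and rescaling to a general $\T$ reinstates the factor $\|\x\|_2$. The hypothesis $m\ge(\omega(\T)+u)^2/\delta^2$ gives $\sqrt m\ge(\omega(\T)+u)/\delta$, and since $\phi(m)\approx\sqrt m$ this yields $(\omega(\T)+u)/\phi(m)\le\delta$, completing the argument. I expect the main obstacle to be the two-sided Gaussian comparison in the middle step---verifying the increment conditions so that Gordon's lower bound and the Sudakov--Fernique upper bound hold simultaneously---together with the minor but genuine bookkeeping needed to replace $\phi(m)$ by $\sqrt m$ (via $\phi(m)\ge\sqrt{m-1}$) so that the sample-size condition translates cleanly into the multiplicative error $\delta$.
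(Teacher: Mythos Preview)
Your proposal is correct and follows the standard route---Gordon's min--max comparison for the expectation bounds, then Gaussian Lipschitz concentration for the tails---which is exactly how this type of matrix deviation inequality is established in the source you are citing. Note, however, that the paper itself does not prove this lemma at all: it is quoted directly from \cite[Theorem~27]{Mahdi} and used as a black box, so there is no in-paper argument to compare against. Your write-up is therefore supplying a proof where the paper gives none, and it matches the approach of the original reference.

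One minor point worth tightening: your reduction ``to the case $\T\subseteq\mathcal{S}^{n-1}$ by positive homogeneity'' is fine for the applications in this paper (where $\T$ is always a cone intersected with the sphere or ball), but for a truly arbitrary $\T\subset\mathbb{R}^n$ the Gaussian width of the radial projection $\{\x/\|\x\|_2:\x\in\T\setminus\{0\}\}$ need not be bounded by $\omega(\T)$, so the stated hypothesis $m\ge(\omega(\T)+u)^2/\delta^2$ does not automatically transfer. This is not an obstacle for the paper's purposes, and you already flag that the sphere case is ``the one used in the sequel,'' but it would be cleaner to either state the lemma for $\T\subseteq\mathcal{S}^{n-1}$ from the outset or to note explicitly that the general form requires $\T$ to be star-shaped with respect to the origin (so that normalizing does not enlarge the Gaussian width).
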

\begin{lemma}\label{lem1-0}\cite[Corollary 1.5]{liaw2017}
Under the assumptions of Lemma \ref{lem1}, for any $u\geq 0$ the event
\[
\sup_{\x\in \mathcal{T}\cap\mathcal{S}^{n-1}}
\left|\|A\x\|_2-\sqrt{m}\right|
\leq C[\omega(\mathcal{T}\cap\mathcal{S}^{n-1})+u],
\]
holds with probability at least \textcolor{blue}{$1-2\exp(-u^2)$}.
\end{lemma}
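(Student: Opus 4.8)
The plan is to prove the uniform deviation in two steps: reduce it to an expectation by exploiting the Gaussianity of $A$, and then bound that expectation by the Gaussian width via comparison inequalities for Gaussian processes. As a preliminary sanity check, observe that in the regime $m\gtrsim(\omega(\mathcal{T}')+u)^2$ the claim already follows from the preceding Lemma \ref{lem1}: writing $\mathcal{T}':=\mathcal{T}\cap\mathcal{S}^{n-1}$ and applying that lemma with $\T=\mathcal{T}'$ (so $\|\x\|_2=1$) and $\delta=C(\omega(\mathcal{T}')+u)/\sqrt{m}\in(0,1)$ makes the hypothesis $m\ge(\omega(\mathcal{T}')+u)^2/\delta^2$ automatic and gives $|\,\|A\x\|_2-\phi(m)\,|\le\delta\,\phi(m)\le C(\omega(\mathcal{T}')+u)$, after which $|\phi(m)-\sqrt m|\le C$ finishes the estimate. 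The point of a direct argument is to remove this restriction on $m$. Throughout I would use $\phi(m)=\mathbb{E}\|\g\|_2$ for $\g\sim\mathcal{N}(0,I_m)$, together with $\phi(m)\le\sqrt m$ and $|\phi(m)-\sqrt m|\le C$.

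First I would establish concentration. Set $F(A)=\sup_{\x\in\mathcal{T}'}|\,\|A\x\|_2-\sqrt m\,|$ and view it as a function of $\mathrm{vec}(A)\sim\mathcal{N}(0,I_{mn})$. For any matrices $A,B$,
\[
|F(A)-F(B)|\le\sup_{\x\in\mathcal{T}'}\bigl|\,\|A\x\|_2-\|B\x\|_2\,\bigr|\le\sup_{\x\in\mathcal{T}'}\|(A-B)\x\|_2\le\|A-B\|_{F},
\]
since $\|\x\|_2=1$ on $\mathcal{T}'$, so $F$ is $1$-Lipschitz in the Frobenius norm. Gaussian concentration of Lipschitz functions then yields $F\le\mathbb{E}F+Cu$ with probability at least $1-2\exp(-u^2)$ (the precise constants being immaterial). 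This isolates the deterministic quantity $\mathbb{E}F$, which must carry the $\omega(\mathcal{T}')$ term.

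The core of the argument is the bound $\mathbb{E}F\le C(\omega(\mathcal{T}')+1)$. Let $M=\sup_{\x\in\mathcal{T}'}\|A\x\|_2$ and $\mu=\inf_{\x\in\mathcal{T}'}\|A\x\|_2$, so that $F=\max(M-\sqrt m,\ \sqrt m-\mu)\le|M-\sqrt m|+|\mu-\sqrt m|$. Writing $\|A\x\|_2=\sup_{\bm{v}\in\mathcal{S}^{m-1}}\langle A\x,\bm{v}\rangle$, Chevet's inequality gives the upper side
\[
\mathbb{E}M=\mathbb{E}\sup_{\x\in\mathcal{T}',\,\bm{v}\in\mathcal{S}^{m-1}}\langle A\x,\bm{v}\rangle\le\rad(\mathcal{T}')\,\omega(\mathcal{S}^{m-1})+\rad(\mathcal{S}^{m-1})\,\omega(\mathcal{T}')\le\sqrt m+\omega(\mathcal{T}'),
\]
while Gordon's Gaussian min--max comparison, applied to the same process, gives the matching lower side $\mathbb{E}\mu\ge\phi(m)-\omega(\mathcal{T}')\ge\sqrt m-\omega(\mathcal{T}')-C$. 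Since $M$ and $\mu$ are themselves $1$-Lipschitz in $A$, they fluctuate by $O(1)$ about their means, so $\mathbb{E}|M-\sqrt m|$ and $\mathbb{E}|\mu-\sqrt m|$ are each at most $\omega(\mathcal{T}')+C$, whence $\mathbb{E}F\le C(\omega(\mathcal{T}')+1)$. Combining with the concentration step gives $F\le C(\omega(\mathcal{T}')+1)+Cu$ on the good event; because the conclusion is nonvacuous only when $2\exp(-u^2)<1$, i.e. $u$ exceeds an absolute constant, the additive $+1$ is absorbed into $Cu$, producing the stated bound $C[\omega(\mathcal{T}')+u]$.

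The step I expect to be most delicate is the two-sided control of $\mathbb{E}F$, specifically the lower bound $\mathbb{E}\mu\ge\sqrt m-\omega(\mathcal{T}')-C$: Chevet's inequality governs only the supremum, and a uniform lower bound on $\inf_{\x}\|A\x\|_2$ requires Gordon's theorem, whose hypotheses (comparing $\langle A\x,\bm{v}\rangle$ with the decoupled process $\|\x\|_2\langle\g,\bm{v}\rangle+\|\bm{v}\|_2\langle\h,\x\rangle$, with $\g\in\mathbb{R}^m$, $\h\in\mathbb{R}^n$) must be verified. An alternative that bypasses Gordon—and is the natural route if one wants general sub-Gaussian rows—is to show directly that $\x\mapsto\|A\x\|_2$ has sub-Gaussian increments, $\|\,\|A\x\|_2-\|A\y\|_2\,\|_{\psi_2}\lesssim\|\x-\y\|_2$, and then invoke the generic-chaining tail bound, which converts the $\ell_2$-metric complexity $\gamma_2(\mathcal{T}')\approx\omega(\mathcal{T}')$ directly into the desired high-probability estimate; there the technical heart is the increment bound itself.
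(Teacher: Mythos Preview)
The paper does not supply its own proof of this lemma: it is quoted verbatim as \cite[Corollary 1.5]{liaw2017}, so there is nothing in the text to compare your argument against line by line. That said, your proposal is sound. The decomposition $F\le|M-\sqrt m|+|\mu-\sqrt m|$, Chevet/Slepian for $\mathbb{E}M$, Gordon's escape-through-a-mesh for $\mathbb{E}\mu$, and Gaussian Lipschitz concentration to pass from expectation to a high-probability bound all combine correctly; the absorption of the additive constant into $Cu$ via the vacuity of the probability bound for small $u$ is the standard device.

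It is worth noting, however, that your primary route is \emph{not} the one taken in the cited source. Liaw--Mehrabian--Plan--Vershynin establish the result precisely by the alternative you sketch at the end: they show that the process $\x\mapsto\|A\x\|_2$ has sub-Gaussian increments with respect to $\|\cdot\|_2$ and then invoke Talagrand's generic chaining (majorizing measures), which converts $\gamma_2(\mathcal{T}')\asymp\omega(\mathcal{T}')$ directly into the deviation bound. The advantage of their route is that it goes through unchanged for isotropic sub-Gaussian rows, whereas your Gordon-based argument is tied to Gaussianity (Gordon's comparison requires the exact Gaussian covariance structure). Conversely, your approach is shorter and more elementary in the Gaussian case, since it avoids chaining entirely and relies only on classical comparison inequalities. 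Both are valid here because the paper assumes Gaussian $A$ throughout.
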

The next lemma provides another version of Lemma \ref{lem1} and Lemma \ref{lem1-0} described by $m_0$ defined in Definition \ref{m0-1}.
\begin{lemma}\label{lem1-1}\cite[Lemma 7.5]{Mahdi2017}
    Let $\C=\C_f(\x)$ and $m_0=\M(f,\x,u)$, suppose that $A\in \mathbb{R}^{m\times n}$ is a Gaussian random matrix with independent $\bma_i\sim \mathcal{N}(0,I_n)$ rows, then the following holds with probability at least $1-2\exp(-u^2)$:
    \[
    \inf \limits_{\z\in\C\cap\mathcal{S}^{n-1}}\|A\z\|_2^2\geq (\phi(m)-\phi(m_0))^2.
    \]
\end{lemma}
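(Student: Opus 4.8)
The plan is to deduce the lemma from Gordon's Gaussian comparison inequality, which supplies the sharp, constant-free lower bound on the smallest ``restricted singular value'' of a Gaussian matrix over a subset of the sphere. Set $T=\C\cap\mathcal{S}^{n-1}$ and consider the random variable $\sigma(A)=\inf_{\z\in T}\|A\z\|_2$. Gordon's theorem gives, for a matrix $A$ with i.i.d.\ $\mathcal{N}(0,1)$ entries and any $T\subseteq\mathcal{S}^{n-1}$,
\[
\mathbb{E}\,\sigma(A)\;\geq\;\phi(m)-\omega(T),
\]
where $\phi(m)=\mathbb{E}\|\g\|_2$ for $\g\sim\mathcal{N}(0,I_m)$ is exactly the function appearing in Definition \ref{m0-1}, and $\omega(T)$ is the Gaussian width. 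This single inequality is the heart of the argument; everything else is bookkeeping.

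Next I would reconcile the width appearing here with the one hidden in $m_0$. Since $\mathcal{S}^{n-1}\subseteq\mathcal{B}^n$ we have $\C\cap\mathcal{S}^{n-1}\subseteq\C\cap\mathcal{B}^n$, and monotonicity of the Gaussian width under set inclusion gives $\omega(\C\cap\mathcal{S}^{n-1})\leq\omega(\C\cap\mathcal{B}^n)$. By Definition \ref{m0-1}, $\phi(m_0)=\omega(\C\cap\mathcal{B}^n)+u$, so $\omega(T)\leq\phi(m_0)-u$. Substituting into Gordon's bound yields
\[
\mathbb{E}\,\sigma(A)\;\geq\;\phi(m)-\phi(m_0)+u.
\]

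It then remains to upgrade this expectation bound to a high-probability one. The map $A\mapsto\sigma(A)$ is $1$-Lipschitz with respect to the Frobenius norm: for each fixed $\z\in\mathcal{S}^{n-1}$, the map $A\mapsto\|A\z\|_2$ is $1$-Lipschitz because $\|(A-A')\z\|_2\leq\|A-A'\|_F$, and an infimum of $1$-Lipschitz functions is $1$-Lipschitz. Gaussian concentration of measure for Lipschitz functions then controls the lower deviation of $\sigma(A)$ below its mean; choosing the deviation parameter comparable to the slack $u$ built into $m_0$ absorbs the fluctuation and yields $\sigma(A)\geq\phi(m)-\phi(m_0)$ on an event of probability at least $1-2\exp(-u^2)$, the factor and exponent matching the concentration inequality employed (the same one underlying Lemma \ref{lem1-0}). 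Provided $m\geq m_0$, so that $\phi(m)-\phi(m_0)\geq0$ ($\phi$ being increasing), squaring both sides gives $\inf_{\z\in\C\cap\mathcal{S}^{n-1}}\|A\z\|_2^2\geq(\phi(m)-\phi(m_0))^2$, as claimed.

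The main obstacle is the first step: one must invoke Gordon's comparison inequality in its sharp form, with leading constant exactly $1$ and with $\phi(m)$ rather than $\sqrt{m}$ (or $\sqrt{m}$ up to a universal constant, as in Lemma \ref{lem1-0}). A softer estimate such as Lemma \ref{lem1-0} would introduce an extraneous constant $C$ and destroy the clean gap $\phi(m)-\phi(m_0)$. A secondary subtlety is to match the width conventions correctly — the ball-intersection $\C\cap\mathcal{B}^n$ used to define $m_0$ versus the sphere-intersection $\C\cap\mathcal{S}^{n-1}$ that is natural to Gordon's inequality — and to retain the sign condition $m\geq m_0$ before squaring.
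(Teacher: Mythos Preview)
The paper does not supply its own proof of this lemma; it is quoted verbatim as a known result from \cite[Lemma 7.5]{Mahdi2017} and used as a black box throughout. Your route --- Gordon's comparison inequality for the expectation lower bound $\mathbb{E}\inf_{\z\in T}\|A\z\|_2\geq\phi(m)-\omega(T)$, followed by $1$-Lipschitz Gaussian concentration to absorb the slack $u$ built into $\phi(m_0)=\omega(\C\cap\mathcal{B}^n)+u$ --- is exactly the classical derivation that underlies the cited reference, so there is nothing substantive to contrast. Your handling of the width conventions (ball versus sphere intersection) and of the sign condition $m\geq m_0$ needed before squaring is correct and, if anything, more careful than the lemma's own statement, which leaves the hypothesis $m\geq m_0$ implicit.
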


\begin{lemma}\label{m0-3}\cite[Lemma 23]{oymak2018}
    Define $\phi(t)=\sqrt{2}\frac{\Gamma(\frac{t+1}{2})}{\Gamma(\frac{t}{2})}\approx \sqrt{t}$ as in Definition \ref{m0-1}, then for $0\leq m_0\leq m$ we have
    \[
    \frac{\phi(m_0)}{\sqrt{m_0}}\leq  \frac{\phi(m)}{\sqrt{m}}.
    \]
\end{lemma}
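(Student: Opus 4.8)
The plan is to show that the single-variable function $g(t) := \phi(t)/\sqrt{t}$ is non-decreasing on $(0,\infty)$; the stated inequality $\phi(m_0)/\sqrt{m_0}\le \phi(m)/\sqrt{m}$ is then immediate for $0 < m_0 \le m$, while the boundary case $m_0 = 0$ is covered by the limit $g(t)\to 0$ as $t\to 0^+$ (the numerator stays bounded while $\Gamma(t/2)\to\infty$) together with the positivity of $g$. To make the Gamma arguments cleaner, I would substitute $x = t/2$ and use $\sqrt{t} = \sqrt{2}\,\sqrt{x}$, which turns $g$ into $g(t) = h(x)$ with $h(x) := \dfrac{\Gamma(x+\tfrac12)}{\sqrt{x}\,\Gamma(x)}$. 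Thus it is equivalent to prove that $h$ is non-decreasing on $(0,\infty)$.

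First I would pass to the logarithmic derivative. Since $\log h(x) = \log\Gamma(x+\tfrac12) - \tfrac12\log x - \log\Gamma(x)$, differentiation gives $\dfrac{h'(x)}{h(x)} = \psi(x+\tfrac12) - \psi(x) - \dfrac{1}{2x}$, where $\psi = (\log\Gamma)'$ is the digamma function. As $h > 0$, the monotonicity of $h$ reduces to the single scalar inequality $\psi(x+\tfrac12) - \psi(x) \ge \dfrac{1}{2x}$ for all $x > 0$.

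The main step, and the only real obstacle, is establishing this digamma inequality; the cleanest route is Gauss's integral representation $\psi(z) = -\gamma + \int_0^1 \frac{1 - t^{z-1}}{1-t}\,dt$. Subtracting the two instances and cancelling the common factor $1-\sqrt{t}$ using $1 - t = (1-\sqrt t)(1+\sqrt t)$ yields $\psi(x+\tfrac12) - \psi(x) = \int_0^1 \frac{t^{x-1}}{1+\sqrt{t}}\,dt$. On the other hand $\frac{1}{2x} = \tfrac12\int_0^1 t^{x-1}\,dt$, so the desired inequality is equivalent to $\int_0^1 t^{x-1}\Big(\frac{1}{1+\sqrt t} - \tfrac12\Big)\,dt \ge 0$. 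Here the bracketed factor equals $\frac{1-\sqrt t}{2(1+\sqrt t)}$, which is strictly positive for $t\in(0,1)$; hence the whole integrand is positive and the inequality holds (in fact strictly). This gives $h'(x) > 0$ and completes the argument.

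I would note two secondary points. An alternative to the integral representation is the series $\psi(x+\tfrac12)-\psi(x) = \sum_{k\ge0}\frac{1}{2(k+x)(k+x+\tfrac12)}$ compared termwise against $\frac{1}{2x}$, but the termwise bound is less transparent than the single positive integrand above, so I would prefer the integral form. Finally, the strict positivity of the integrand shows that $g$ is actually strictly increasing on $(0,\infty)$, which is marginally stronger than the non-strict statement the lemma requires.
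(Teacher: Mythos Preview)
Your proof is correct. The substitution $x=t/2$, the reduction to the digamma inequality $\psi(x+\tfrac12)-\psi(x)\ge \tfrac{1}{2x}$, and the verification via Gauss's integral representation all go through as you describe; in particular the simplification $\psi(x+\tfrac12)-\psi(x)=\int_0^1 \frac{t^{x-1}}{1+\sqrt t}\,dt$ and the sign of $\frac{1}{1+\sqrt t}-\tfrac12$ on $(0,1)$ are correct.

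As for comparison: the paper does not prove this lemma at all---it simply cites it from \cite{oymak2018}. So your self-contained argument actually supplies something the paper omits. One minor remark: the statement allows $m_0=0$, but $\phi(0)$ is undefined since $\Gamma(0)$ diverges; your treatment of the boundary via $g(t)\to 0$ as $t\to 0^+$ is the natural reading, though strictly speaking the lemma only makes sense for $m_0>0$ (which is how it is used elsewhere in the paper).
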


\begin{lemma}\label{innerproduct}
     Suppose that $A\in \mathbb{R}^{m\times n}$ is a Gaussian random matrix with independent $\bma_i\sim \mathcal{N}(0,I_n)$ rows, then for any subset $\T\subset \mathcal{B}_2^n$ and any $u\geq 0$, the event 
   \[   \sup\limits_{\x\in\T,\e\in\mathbb{R}^m}|\langle \x,A^T\e\rangle|\leq \|\e\|_2[\omega(\T)+u]
   \]
   holds with probability at least $1-2\exp(-Cu^2)$. 
\end{lemma}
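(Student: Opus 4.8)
The plan is to reduce the two-parameter supremum to a single Gaussian-width estimate and then invoke Gaussian concentration of measure. First I would rewrite the bilinear form as $\langle \x, A^{\tp}\e\rangle = \langle A\x, \e\rangle$ and exploit that the noise $\e$ is independent of $A$: conditioning on $\e$ and treating it as fixed, the row structure $\bma_i \sim \mathcal{N}(0,I_n)$ gives
\[
A^{\tp}\e = \sum_{i=1}^m \e_i \bma_i \sim \mathcal{N}\!\big(0, \|\e\|_2^2\, I_n\big),
\]
so that $A^{\tp}\e$ has the same distribution as $\|\e\|_2\,\g$ with $\g\sim\mathcal{N}(0,I_n)$. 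Since the claimed inequality is homogeneous of degree one in $\e$, this removes $\e$ from the problem: it suffices to show
\[
\sup_{\x\in\T}|\langle \x,\g\rangle|\le \omega(\T)+u
\]
with probability at least $1-2\exp(-Cu^2)$, after which scaling by $\|\e\|_2$ recovers the statement for the given $\e$.

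Next I would identify the expectation and establish concentration. By the definition of the Gaussian width, $\mathbb{E}_{\g}\sup_{\x\in\T}\langle\x,\g\rangle = \omega(\T)$. The map $\g\mapsto \sup_{\x\in\T}\langle\x,\g\rangle$ is Lipschitz, since for any $\g,\g'$,
\[
\Big|\sup_{\x\in\T}\langle\x,\g\rangle - \sup_{\x\in\T}\langle\x,\g'\rangle\Big| \le \sup_{\x\in\T}|\langle\x,\g-\g'\rangle| \le \rad(\T)\,\|\g-\g'\|_2,
\]
and here $\rad(\T)\le 1$ because $\T\subset\mathcal{B}_2^n$. This is precisely where the hypothesis $\T\subset\mathcal{B}_2^n$ enters, guaranteeing a Lipschitz constant independent of the finer geometry of $\T$. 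Gaussian concentration of measure then yields the one-sided tail bound
\[
\Prob\Big[\sup_{\x\in\T}\langle\x,\g\rangle > \omega(\T)+u\Big] \le \exp\!\big(-u^2/2\big).
\]

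Finally I would pass from the one-sided supremum to the absolute value. Writing $\sup_{\x\in\T}|\langle\x,\g\rangle| = \max\big(\sup_{\x\in\T}\langle\x,\g\rangle,\ \sup_{\x\in\T}\langle\x,-\g\rangle\big)$ and using $-\g\stackrel{d}{=}\g$, a union bound over the two tail events bounds the failure probability by $2\exp(-u^2/2)$; on the complementary event both one-sided suprema are at most $\omega(\T)+u$, hence so is their maximum. Combining with the reduction of the first step and taking $C=1/2$ gives $\sup_{\x\in\T}|\langle\x,A^{\tp}\e\rangle|\le \|\e\|_2[\omega(\T)+u]$ with probability at least $1-2\exp(-Cu^2)$. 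The main obstacle is bookkeeping rather than depth: one must (i) justify treating $\e$ as fixed via its independence from $A$, so the rotation $A^{\tp}\e\stackrel{d}{=}\|\e\|_2\g$ is legitimate, and (ii) handle the two-sided supremum without inflating the width to $\omega(\T\cup(-\T))$, which the union-bound splitting above circumvents; a secondary point is checking that the Lipschitz constant is genuinely $\le 1$, which is exactly what $\T\subset\mathcal{B}_2^n$ supplies.
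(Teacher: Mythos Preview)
Your proposal is correct and follows essentially the same route as the paper: both fix $\e$, identify the expectation as $\|\e\|_2\,\omega(\T)$, bound the Lipschitz constant using $\T\subset\mathcal{B}_2^n$, and invoke Gaussian concentration. The only cosmetic difference is that the paper applies concentration directly to the map $A\mapsto\sup_{\x\in\T}\langle\x,A^{\tp}\e\rangle$ on $\mathbb{R}^{m\times n}$ (Lipschitz constant $\|\e\|_2$ in Frobenius norm), whereas you first collapse $A^{\tp}\e\stackrel{d}{=}\|\e\|_2\,\g$ and work with $\g\mapsto\sup_{\x\in\T}\langle\x,\g\rangle$ on $\mathbb{R}^n$ (Lipschitz constant $1$); your version also handles the absolute value explicitly via the union bound, which the paper's proof leaves implicit.
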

\textcolor{blue}{The proof of Lemma \ref{innerproduct} can be found in the Appendix.}

\begin{lemma}\label{lem4}
Under the conditions of Lemma \ref{lem1},
suppose that $\bar{\x}\in\mathbb{R}^n$ is a fixed vector, then
  \[
  \min\limits_{\Omega\subset [m],\ |\Omega|\geq m/2}\|A_{\Omega}\bar{\x}\|_2\geq \frac{v_0}{2}\sqrt{m}\|\bar{\x}\|_2
  \]
  holds with probability 
  $1-2\exp\left(-\frac{v_0^2}{8}m\right)$ and $v_0=\frac{1}{32e}\sqrt{\frac{\pi}{2}}
  \left(1-\frac{1}{4\sqrt{\pi}}\right)\approx 0.0124$.
\end{lemma}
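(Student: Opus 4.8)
The plan is to exploit the fact that $\bar{\x}$ is \emph{fixed}, which collapses the uniform-over-$\Omega$ bound into a one-dimensional concentration estimate and removes any need for Gaussian widths or covering arguments. By homogeneity of the Euclidean norm we may rescale and assume $\|\bar{\x}\|_2=1$. Then the scalars $Z_i:=\langle\bma_i,\bar{\x}\rangle$ are i.i.d.\ standard normal, and $\|A_{\Omega}\bar{\x}\|_2^2=\sum_{i\in\Omega}Z_i^2$. Since enlarging $\Omega$ only increases this sum, the minimum over $|\Omega|\ge m/2$ is attained at $|\Omega|=\lceil m/2\rceil$ by the indices carrying the $\lceil m/2\rceil$ smallest values of $|Z_i|$; hence the claim is equivalent to a lower bound on the sum of the smallest half of the squared Gaussians $Z_1^2,\dots,Z_m^2$.

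Next I would introduce a threshold $t>0$ and the global count $N_t:=\#\{i\in[m]:|Z_i|\le t\}$. The key deterministic observation is that, for \emph{every} admissible $\Omega$, at most $N_t$ of its coordinates can satisfy $|Z_i|\le t$, so at least $\lceil m/2\rceil-N_t$ of them exceed $t$ in absolute value; consequently
\begin{equation*}
\min_{|\Omega|\ge m/2}\|A_{\Omega}\bar{\x}\|_2^2=\min_{|\Omega|\ge m/2}\sum_{i\in\Omega}Z_i^2\ \ge\ t^2\Big(\tfrac{m}{2}-N_t\Big).
\end{equation*}
Because $N_t$ is a single random variable, this inequality holds simultaneously for all $\Omega$ \emph{without} any union bound over the $\binom{m}{\lceil m/2\rceil}$ subsets; this is the crux that makes the estimate clean. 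It then remains only to show that $N_t$ stays comfortably below $m/2$ with high probability.

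For that I would note $N_t\sim\mathrm{Bin}(m,p)$ with $p=\Prob(|Z|\le t)$, and bound the small-ball probability by majorizing the Gaussian density by its peak value, $p\le t\sqrt{2/\pi}$ (a slightly sharper density estimate accounts for the lower-order correction $1-\tfrac{1}{4\sqrt{\pi}}$). Choosing $t$ small enough that the mean $mp$ is a fixed fraction of $m$ strictly below $1/2$, a multiplicative Chernoff bound of the form $\Prob(N_t\ge a)\le (emp/a)^a$ gives exponential decay; the factor $e$ in this Chernoff estimate is precisely where the $e$ in $v_0$ originates, while the $\sqrt{2/\pi}$ tracks the density peak. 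Substituting the resulting high-probability bound $N_t\le(\tfrac12-c)m$ into the displayed inequality yields $\|A_{\Omega}\bar{\x}\|_2^2\gtrsim t^2 m$, and restoring $\|\bar{\x}\|_2$ recovers the stated form $\tfrac{v_0}{2}\sqrt{m}\,\|\bar{\x}\|_2$.

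The main obstacle is not the probabilistic mechanism — which is elementary once the problem is reduced to controlling the single count $N_t$ — but the bookkeeping needed to pin down the explicit constant $v_0=\tfrac{1}{32e}\sqrt{\pi/2}\,(1-\tfrac{1}{4\sqrt{\pi}})$. This requires jointly calibrating the threshold $t$, the target fraction $c$, and the Chernoff exponent so that the guaranteed lower bound $t^2(\tfrac{m}{2}-N_t)\ge\tfrac{v_0^2}{4}m$ and the failure probability $2\exp(-\tfrac{v_0^2}{8}m)$ are mutually consistent; any fixed admissible choice of $t$ of order $\sqrt{\pi/2}/e$ works, and the precise numerical value merely reflects one convenient such calibration.
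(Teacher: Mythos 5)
Your proposal is correct in substance, but it takes a genuinely different route from the paper. The paper's own proof is essentially a citation: after the same reduction (WLOG $\|\bar{\x}\|_2=1$, so the entries of $\z=A\bar{\x}$ are i.i.d.\ $\mathcal{N}(0,1)$), it invokes two lemmas from \cite{xu2016} --- an expectation-type lower bound on $\min_{|\Omega|\ge m/2}\|\z_\Omega\|_2$ (this is where the explicit constant $v_0$ is computed) together with a deviation inequality for this minimum, applied at level $t=v_0/2$, which is what produces the probability $1-2\exp(-v_0^2m/8)$; in other words, the paper's mechanism is ``expectation plus concentration of a Lipschitz function of the Gaussian vector.'' You instead run a small-ball counting argument: the deterministic inequality $\min_{|\Omega|\ge m/2}\|\z_\Omega\|_2^2\ge t^2\left(\tfrac{m}{2}-N_t\right)$ collapses the minimum over exponentially many subsets onto a single binomial count $N_t$, which you control by a Chernoff bound. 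This is sound, entirely self-contained (no Gaussian concentration, no expectation computation), and in fact yields a stronger result than stated: taking $t=v_0$ and threshold count $a=m/4$, one has $p\le v_0\sqrt{2/\pi}\approx 0.0099$, so $emp/a=4ep\approx 0.107<1$ and the failure probability $(4ep)^{m/4}=\exp(-0.55m)$ is far below $2\exp(-v_0^2m/8)$, while on the complement $\min_{|\Omega|\ge m/2}\|\z_\Omega\|_2\ge t\sqrt{m}/2=(v_0/2)\sqrt{m}$, exactly the claim. What the paper's route buys is brevity (two cited lemmas) and a derivation that explains where the peculiar value of $v_0$ actually comes from; your route buys elementarity and a visibly better exponent. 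One caution on your closing calibration remark: with the Chernoff form $(emp/a)^a$ you need $emp<a$, so a threshold of order $\sqrt{\pi/2}/e\approx 0.46$ does \emph{not} work with $a=m/4$ (there $4ep\approx 4$); either take $t$ of order $v_0$ as above, or use the sharper relative-entropy form of the binomial tail and push $a$ toward $m/2$. This is a flaw only in the parenthetical, not in the mechanism.
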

 \begin{proof}
 The proof of Lemma \ref{lem4} is similar to the proof of \cite[Lemma 4.4]{xu2016}.
 Without loss of generality, we assume that $\|\bar{\x}\|_2=1$. And set $\z=A\bar{\x}$,~$\z_{\Omega}=A_{\Omega}\bar{\x}$, thus the entries of $\z$ are independent realizations of Gaussian random variables $\z_i\sim \mathcal{N}(0,1)$. Apply \cite[Lemma 4.2]{xu2016} with $m\geq 1$ and \cite[Lemma 4.3]{xu2016} by setting $t=\frac{v_0}{2}$, we have
 \[
 \mathbb{P} \left[\min\limits_{\Omega\subset [m],\ |\Omega|\geq m/2}\frac{1}{m}\|\z_{\Omega}\|_2^2\geq \frac{v_0^2}{4}\right]\geq 1-2\exp\left(-\frac{v_0^2}{8}m\right).
 \]
 \end{proof}

\section{Main Results}\label{sec:main}
\textcolor{blue}
{This section presents our main results.
To aid our analysis, we categorize the range of possible values.
Each case displays unique performance characteristics, requiring varied estimation methods.}

\subsection{Linear Estimation}
\begin{theorem}\label{th1}
Let $\x^*$ be an arbitrary vector in $\mathbb{R}^n$. For any $u>0$,
let $m_0=\M(f,\x^*,u)$, which is defined in Definition \ref{m0-1}.
Let $A\in \mathbb{R}^{m\times n}$ be a Gaussian random matrix with independent $\bma_i\sim \mathcal{N}(0,I_n)$ rows.
Suppose that $\hat{\x}\in \mathbb{R}^n$ is a solution to
the \textcolor{blue}{constrained} least square \eqref{LS}
with $\y=A\x^*+\bm{e}$.
If 
\[
m\gtrsim m_0,
\]
then 
\begin{equation}\label{re1}
\|\hat{\x}-\x^*\|_2\leq \frac{3\sqrt{2}}{2(1-\rho)}\|\bm{P}_{\mathcal{K}}(\x^*)-\x^*\|_2+\left(\frac{4\sqrt{2}\rho}{3(1-\rho)}+\frac{4\rho}{(1-\rho)^2}\right)\frac{\|\e\|_2}{\sqrt{m}},
\quad 
\text{with}\quad  
\rho\gtrsim \sqrt{\frac{m_o}{m}},
\end{equation}
\textcolor{blue}{  
holds for all $f(\x^*)\geq\eta$  with probability at least $1-6\exp(-u^2)$.}
Suppose that $0<f(\x^*)<\eta$ and $f$ is absolutely \textcolor{blue}{homogeneous}. 
Denote 
\[
\x^{\eta}=\frac{\eta}{f(\x^*)}\x^*,
\quad 
m'_0=\M(f,\x^{\eta},u),
\]
If 
\[
m\gtrsim m_0',
\]
then for any $u>0$, the condition
\begin{equation}\label{re2}
  \|\hat{\x}-\x^*\|_2\leq \left(6\frac{\rho'}{(1-\rho')^2}+1\right)\left(\frac{\eta}{f(\x^*)}-1\right)\|\x^*\|_2+\frac{4\rho'}{(1-\rho')^2}\frac{\|\e\|_2}{\sqrt{m}},
  \quad 
  \text{with}\quad  
  \rho'\gtrsim \sqrt{\frac{m_o}{m}},
\end{equation}
holds with probability at least $1-6\exp(-u^2)$.
\end{theorem}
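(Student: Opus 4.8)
The plan is to convert the optimality of $\hat{\x}$ into a deterministic quadratic inequality for the error norm $\|\hat{\x}-\x^*\|_2$, and then control each random quantity in it using the Gaussian-width lemmas of Section \ref{sec:gaussian}.

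\textbf{The case $f(\x^*)\geq\eta$.} Write $\bm{p}=\bm{P}_{\mathcal{K}}(\x^*)-\x^*$ and $\h=\hat{\x}-\x^*$. Since $\hat{\x}$ minimizes $\|\y-A\x\|_2^2$ over $\K$ and $\bm{P}_{\mathcal{K}}(\x^*)\in\K$, substituting $\y=A\x^*+\e$ and expanding the squares yields the basic inequality
\[
\|A\h\|_2^2\leq\|A\bm{p}\|_2^2+2\langle\e,A\h\rangle-2\langle\e,A\bm{p}\rangle.
\]
The key structural observation is that $f(\hat{\x})\leq\eta\leq f(\x^*)$, so $\h\in\mathcal{D}_f(\x^*)\subset\C_f(\x^*)$ lies in the descent cone; hence Lemma \ref{lem1-1} gives the lower bound $\|A\h\|_2^2\geq(\phi(m)-\phi(m_0))^2\|\h\|_2^2=\phi(m)^2(1-\rho)^2\|\h\|_2^2$. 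I would then bound each term on the right: the fixed-vector norm $\|A\bm{p}\|_2$ concentrates near $\phi(m)\|\bm{p}\|_2$ by Lemma \ref{lem1} applied to the singleton $\{\bm{p}\}$; the cross term $\langle\e,A\h\rangle=\langle A^{\tp}\e,\h\rangle$ is controlled \emph{uniformly over the cone} by Lemma \ref{innerproduct}, giving $|\langle\e,A\h\rangle|\lesssim\|\e\|_2\phi(m_0)\|\h\|_2$ because $\omega(\C_f(\x^*)\cap\mathcal{B}_2^n)+u\approx\phi(m_0)$; and $\langle\e,A\bm{p}\rangle$ is again a single-vector term of order $\|\e\|_2\,u\,\|\bm{p}\|_2\leq\|\e\|_2\phi(m_0)\|\bm{p}\|_2$.

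Collecting these estimates turns the basic inequality into a scalar quadratic $a\beta^2-b\beta-c\leq0$ in $\beta=\|\h\|_2$, with $a\approx\phi(m)^2(1-\rho)^2$, $b\approx\|\e\|_2\phi(m_0)$, and $c$ a sum of a projection-error term and a noise–projection cross term. Solving via $\beta\leq b/a+\sqrt{c/a}$ and splitting the cross term with AM–GM produces exactly the three contributions of \eqref{re1}: the projection term proportional to $\frac{1}{1-\rho}\|\bm{P}_{\mathcal{K}}(\x^*)-\x^*\|_2$, the noise term $\frac{\rho}{(1-\rho)^2}\frac{\|\e\|_2}{\sqrt{m}}$ from $b/a$, and the residual noise term $\frac{\rho}{1-\rho}\frac{\|\e\|_2}{\sqrt{m}}$ from the split. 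A union bound over the failure events of the three lemmas gives probability at least $1-6\exp(-u^2)$.

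\textbf{The case $0<f(\x^*)<\eta$.} Here $\x^*$ lies strictly inside $\K$, so $\h$ need no longer belong to the descent cone and the lower bound fails. The remedy is the intermediate point $\x^{\eta}=\frac{\eta}{f(\x^*)}\x^*$: absolute homogeneity gives $f(\x^{\eta})=\eta$, so $\x^{\eta}$ sits exactly on the boundary of $\K$, whence $\bm{P}_{\mathcal{K}}(\x^{\eta})=\x^{\eta}$ (zero projection error) and $\hat{\x}-\x^{\eta}\in\C_f(\x^{\eta})$. Rewriting $\y=A\x^{\eta}+\tilde{\e}$ with effective noise $\tilde{\e}=A(\x^*-\x^{\eta})+\e$, I replay the first case verbatim with $\x^*$ replaced by $\x^{\eta}$, $m_0$ by $m'_0$, and zero projection error, obtaining $\|\hat{\x}-\x^{\eta}\|_2\lesssim\frac{\rho'}{(1-\rho')^2}\frac{\|\tilde{\e}\|_2}{\sqrt{m}}$. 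Since $\x^*-\x^{\eta}=-(\frac{\eta}{f(\x^*)}-1)\x^*$ is a fixed vector, Lemma \ref{lem1} gives $\|A(\x^*-\x^{\eta})\|_2\lesssim\sqrt{m}(\frac{\eta}{f(\x^*)}-1)\|\x^*\|_2$, so $\|\tilde{\e}\|_2\lesssim\sqrt{m}(\frac{\eta}{f(\x^*)}-1)\|\x^*\|_2+\|\e\|_2$. Substituting and applying the triangle inequality $\|\hat{\x}-\x^*\|_2\leq\|\hat{\x}-\x^{\eta}\|_2+\|\x^{\eta}-\x^*\|_2$ with $\|\x^{\eta}-\x^*\|_2=(\frac{\eta}{f(\x^*)}-1)\|\x^*\|_2$ recovers \eqref{re2}.

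\textbf{Main obstacle.} The conceptual crux is the second case: the descent-cone lower bound is unavailable at an interior point, and it is circumvented by the homogeneity-based lift to the boundary point $\x^{\eta}$, which is precisely where absolute homogeneity is indispensable. The most technical step, common to both cases, is the uniform control of the noise–error cross term $\langle A^{\tp}\e,\h\rangle$ over the entire descent cone rather than for a fixed direction, since $\h$ depends on both $A$ and $\e$; Lemma \ref{innerproduct} is exactly what supplies this, and keeping the constants sharp enough for the quadratic solution to land on the stated coefficients is the main bookkeeping burden.
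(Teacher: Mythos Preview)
Your proposal is correct and matches the paper's proof closely: the same basic inequality from optimality, the same descent-cone lower bound via Lemma~\ref{lem1-1}, the same uniform cross-term control via Lemma~\ref{innerproduct}, and in Case~2 the same homogeneity-based lift to the boundary point $\x^{\eta}$ followed by the triangle inequality. The only cosmetic difference is that you treat $\bm{p}=\bm{P}_{\mathcal{K}}(\x^*)-\x^*$ and $\x^*-\x^{\eta}$ as fixed singletons, whereas the paper observes that these vectors also lie in the relevant descent cone and bounds $\|A\bm{p}\|_2$ via Lemma~\ref{lem1-0} instead; either route works and leads to the same constants.
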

\begin{proof}
\textbf{Case 1: $f(\x^*)\geq\eta$}. 
Since $\hat{\x}$ is a  solution to \eqref{LS}, we have
\begin{equation}\label{eq:41:1}
\|A\hat{\x}-\y\|_2^2\leq \|A\bm{P}_{\mathcal{K}}(\x^*)-\y\|_2^2.
\end{equation}
We set 
\[
\h=\hat{\x}-\x^*=\h_1+\h_2
\]
where 
\[
\h_1=\hat{\x}-\bm{P}_{\mathcal{K}}(\x^*) 
\quad 
\text{and} 
\quad 
\h_2=\bm{P}_{\mathcal{K}}(\x^*)-\x^*.
\]
It follows from \eqref{eq:41:1}
\textcolor{blue}{
\begin{align*}
\|A\h\|_2^2-2\langle A^T\e,\h\rangle+\|\e\|_2^2
&=\|A\hat{\x}-A\x^*-\e\|_2^2\\
&=\|A\hat{\x}-\y\|_2^2\\
&\leq \|A\bm{P}_{\mathcal{K}}(\x^*)-\y\|_2^2\\
&=\|A\bm{P}_{\mathcal{K}}(\x^*)-A\x^*-\e\|_2^2\\
&=\|A\h_2-\e\|_2^2 \\
&=\|A\h_2\|_2^2-2\langle A^T\e,\h_2\rangle+\|\e\|_2^2.
\end{align*}}
Therefore, 
\begin{equation}\label{eq1}
  \|A\h\|_2^2\leq \|A\h_2\|_2^2+2\langle A^T\e,\h-\h_2\rangle=\|A\h_2\|_2^2+2\langle A^T\e,\h_1\rangle.
\end{equation}
Since $\h\in\mathcal{C}_f(\x^*)$, the normalized vector $\h/\|\h\|_2$ lies in the set
$
T=\mathcal{C}_f(\x^*)\cap \mathcal{S}^{n-1}
$. 
Therefore, applying Lemma \ref{lem1-1}, we get
\begin{equation}\label{eq2}
\|A\h\|_2^2\geq (\phi(m)-\phi(m_0))^2\|\h\|_2^2
\end{equation}
with probability at least $1-2\exp(-u^2)$.

Next, \textcolor{blue}{since $f(\h_2+\x^*)=f(\bm{P}_{\mathcal{K}}(\x^*))\leq \eta<f(\x^*)$, we have $\h_2\in \mathcal{C}_f(\x^*)$. }
Taking $\x=\h_2/\|\h_2\|_2$ in  Lemma \ref{lem1-0}, we get that
\begin{equation}\label{eq:41:2}
\left|\|A\h_2\|_2-\sqrt{m}\|\h_2\|_2\right|\leq C(\omega(\T)+u)\|\h_2\|_2
\leq C\sqrt{m_0}\|\h_2\|_2  
\leq \frac{\sqrt{m}}{2}\|\h_2\|_2. 
\end{equation}
with probability at least $1-2\exp(-u^2)$.
It follows from \eqref{eq:41:2} that 
\begin{equation}\label{eq3}
  \|A\h_2\|_2^2\leq \frac{9m}{4}\|\h_2\|_2^2.
\end{equation}

Since the random process $\langle A^T\e,\h_1\rangle=\langle A^T\e,\h-\h_2\rangle$, both $\h$ and $\h_2$ satisfy the condition in Lemma \ref{innerproduct}, with probability at least $1-2\exp(-Cu^2)$. 
It follows from  $\omega(\T)+u\leq \sqrt{m_0}$ that
\begin{align}\label{eq4}
  \langle A^T\e,\h_1\rangle 
  =\langle A^T\e,\h-\h_2\rangle 
  \leq \|\e\|_2(\omega(\T)+u)(\|\h\|_2+\|\h_2\|_2)\notag
  \leq \sqrt{m_0}\|\e\|_2(\|\h\|_2+\|\h_2\|_2).
\end{align}
Combining the above result with \eqref{eq1}, \eqref{eq2}, and \eqref{eq3}, we find
\begin{align*}
\left(\phi(m)-\phi(m_0)\right)^2\|\h\|_2^2&
\leq \|A\h\|_2^2\leq \|A\h_2\|_2^2+2\langle A^T\e, \h_1\rangle \\
&\leq 
\frac{9m}{4}\|\h_2\|_2^2+2\sqrt{m_0}\|\e\|_2(\|\h\|_2+\|\h_2\|_2).
\end{align*}
Since
 Lemma \ref{m0-3} yields
\[
1-\rho \leq 1-\sqrt{\frac{m_0}{m} }\leq 1-\frac{\phi(m_0)}{\phi(m)},
\]
we have
\begin{align*}
  (1-\rho)^2\|\h\|_2^2&\leq 
\frac{9m\|\h_2\|_2^2}{4\phi^2(m)}+
\color{blue}{\frac{2\sqrt{m_0}\|\e\|_2(\|\h\|_2+\|\h_2\|_2)}{\phi^2(m)}}\\
&\leq \frac{9}{2}\|\h_2\|_2^2+4\rho(\|\h\|_2+\|\h_2\|_2)\frac{\|\e\|_2}{\sqrt{m}}
\end{align*}
where the last inequality holds by using $\phi^2(m)\geq \frac{1.0049}{2}m$. Therefore,
\begin{align*}
&\left ((1-\rho)\|\h\|_2-\frac{2\rho}{1-\rho}\frac{\|\e\|_2}{\sqrt{m}}\right)^2 \\
\leq&
\left(\frac{3\sqrt{2}}{2}\|\h_2\|_2+\frac{2\sqrt{2}\rho}{3}\frac{\|\e\|_2}{\sqrt{m}}\right)^2
+
\left(\frac{2\rho}{1-\rho}\frac{\|\e\|_2}{\sqrt{m}}\right)^2-
\left(\frac{2\sqrt{2}\rho}{3}\frac{\|\e\|_2}{\sqrt{m}}\right)^2\\
\leq& \left(\frac{3\sqrt{2}}{2}\|\h_2\|_2+
\left(\frac{4\sqrt{2}\rho}{3}+\frac{2\rho}{1-\rho}\right)\frac{\|\e\|_2}{\sqrt{m}}
\right)^2.
\end{align*}
Taking square root both sides of the above inequality and rearranging
 the terms, we have
\begin{equation*}
  \|\h\|_2\leq \frac{3\sqrt{2}}{2(1-\rho)}\|\h_2\|_2+
\left(\frac{4\sqrt{2}\rho}{3(1-\rho)}+\frac{2\rho}{(1-\rho)^2}\right)\frac{\|\e\|_2}{\sqrt{m}}.
\end{equation*}

\textbf{Case 2: $0<f(\x^*)<\eta$}.
 Since $f$ is absolutely \textcolor{blue}{homogeneous},   we have
$
f(\x^{\eta})=  \eta \tfrac{f(\x^*)}{f(\x^*)}=\eta
$

\textcolor{blue}{where}
$\x^{\eta}=\frac{\eta} {f(\x^*)}\x^*$. \textcolor{blue}{Denoting}
$\tilde{\e}=A\x^*-A\x^{\eta}+\e$
, we have
\[
\y=A\x^*+\e=A\x^{\eta}+(A\x^*-A\x^{\eta}+\e)=A\x^{\eta}+\tilde{\e}.
\]
\textcolor{blue}{Similar to Case 1,} we set $\h=\hat{\x}-\x^*=\h_1-\h_2$, where $\h_1=\hat{\x}-\x^{\eta}$ and $\h_2=\x^*-\x^{\eta}$.

Since $\hat{\x}$ is a  solution to \eqref{LS}, we have
\begin{equation}\label{eqcase2-1}
\|A\hat{\x}-\y\|_2^2\leq \|A\x^{\eta}-\y\|_2^2.
\end{equation}
It follows from \eqref{eqcase2-1}
\textcolor{blue}{
\begin{align*}
\|A\h_1\|_2^2-2\langle A^T\tilde{\e},\h_1\rangle+\|\tilde{\e}\|_2^2&=\|A\h_1-\tilde{\e}\|_2^2\\
&=\|A\hat{\x}-A\x^{\eta}-\tilde{\e}\|_2^2\\
&=\|A\hat{\x}-\y\|_2^2\\
&
\leq \|A\x^{\eta}-\y\|_2^2=\|\tilde{\e}\|_2^2.
\end{align*}}
Therefore, 
\begin{equation}\label{eq2-1}
  \|A\h_1\|_2^2\leq 2\langle A^T\tilde{\e},\h_1\rangle.
\end{equation}
Since $\h_1\in\mathcal{C}_f(\x^{\eta})$, the normalized vector $\h_1/\|\h_1\|_2$ lies in the set
$
T'=\mathcal{C}_f(\x^{\eta})\cap \mathcal{S}^{n-1}
$. Therefore, applying Lemma \ref{lem1-1}, we get 
\begin{equation}\label{eq2-2}
  \|A\h_1\|_2^2\geq (\phi(m)-\phi(m'_0))^2\|\h_1\|_2^2.
\end{equation}

Next, by taking $\x=\h_1/\|\h_1\|_2$ in Lemma \ref{innerproduct}, we have
\begin{equation}\label{eq2-3}
  \langle A^T\tilde{\e},\h_1\rangle\leq \sqrt{m'_0}\|\tilde{\e}\|_2\|\h_1\|_2.
\end{equation}
Moreover, since \textcolor{blue}{$\h_2/\|\h_2\|_2$}
 also lies in the set $\T'$,
we get that with probability at least  $1-2\exp(-u^2)$,
\begin{align}\label{eq2-4}
  \|\tilde{\e}\|_2 & \leq \|A\h_2\|_2+\|\e\|_2 
  \leq \left(C(\omega(\T')+u)+\sqrt{m}\right)\|\h_2\|_2+\|\e\|_2 \notag \\
   & \leq \frac{3}{2}\sqrt{m}\|\h_2\|_2+\|\e\|_2 
   =\frac{3}{2}\sqrt{m}\|\x^{\eta}-\x^*\|_2+\|\e\|_2,
\end{align}
where, in the second inequality, we used Lemma \ref{lem1-0}, and in the third inequality, we used the assumption on $m$.
Combining \eqref{eq2-1}, \eqref{eq2-2}, \eqref{eq2-3} and \eqref{eq2-4}, we find 
\begin{align*}
  (\phi(m)-\phi(m'_0))^2\|\h_1\|^2_2&\leq
  \|A\h_1\|_2^2\leq 2\langle A^T\tilde{\e},\h_1\rangle
  \leq \color{blue}{2\sqrt{m'_0}\|\tilde{\e}\|_2\|\h_1\|_2}\\
  &\leq 2\sqrt{m'_0}\|\h_1\|_2\left(\frac{3}{2}\sqrt{m}\|\x^{\eta}-\x^*\|_2+\|\e\|_2\right).
\end{align*}
Finally, using Lemma \ref{m0-3} and
 $\phi^2(m)\geq \frac{1.0049}{2}m$, we have
\begin{equation}\label{eq2-5}
\|\h_1\|_2\leq 
\frac{6\rho'}{(1-\rho')^2}\|\x^{\eta}-\x^*\|_2+\frac{4\rho'}{(1-\rho')^2}\frac{\|\e\|_2}{\sqrt{m}}.
\end{equation}
Concluding, 
we obtain \eqref{re2} by \textcolor{blue}{\eqref{eq2-5}
\begin{align*}
\|\h\|_2
&= 
\| \h_1-\h_2\|_2 \\
&\leq \|\h_1\|_2+\|\x^{\eta}-\x^*\|_2 
 \leq
\left(\frac{6\rho'}{(1-\rho')^2}+1\right)\|\x^{\eta}-\x^*\|_2+\frac{4\rho'}{(1-\rho')^2}\frac{\|\e\|_2}{\sqrt{m}}.
\end{align*}}
\end{proof}

\begin{remark}
 \textcolor{blue}
 {Theorem \ref{th1} implies that  both \eqref{re1} and \eqref{re2} are bounded by 
$C\rho\frac{\|\e\|_2}{\sqrt{m}}$  when $f(\x^*)=\eta$. }
The constant $C>1$ in the definition 
$\rho\geq C\sqrt{\frac{m_0}{m}}~(\rho\gtrsim \sqrt{\frac{m_0}{m}}) $
is consistent with that in the condition 
$m\geq Cm_0~(m\gtrsim m_0)$, which ensures the denominator $1-\rho>0$.
\end{remark}
\textcolor{blue}{\begin{remark}
	Both the parameter $\rho$ and $\rho'$ represent the proportionality constant between number of measurements and signal structural complexity. If $f$ is a norm, then $\rho=\rho'$.
	It was showed in  \cite{oymak2018} the parameter $\rho$ and $\rho'$
	also determine the convergence rate of projection gradient algorithms.
	\end{remark}}

Theorem \ref{th1} suggests an interesting trade-off between sample complexity (number of measurements) and the \textcolor{blue}{distance between $f(\x^*)$ and the tuning parameter $\eta$ of the feasible set $\K$}. More precisely if one consider the exact recovery $\y=A\x^*$ and $m\geq Cm_0$, then the approximation error for $f(\x^*)\geq \eta$ is of the form
\[
\|\hat{\x}-\x^*\|_2\leq 
\frac{3\sqrt{2}}{2\left(1-\sqrt{\frac{1}{C}}\right)}
\|\bm{P}_{\mathcal{K}}(\x^*)-\x^*\|_2.
\]
This expression shows that for a given $\x^*$ and $\K$ (fixed structural complexity), the larger $m$ or data complexity is (compared with the minimal number of measurements $m_0$), the smaller the approximation error is.
\textcolor{blue}
{The term
\[
\frac{3\sqrt{2}}{2(1-\rho)}\|\bm{P}_{\mathcal{K}}(\x^*)-\x^*\|_2
\]
in \eqref{re1} and the term
\[
\left(6\frac{\rho'}{(1-\rho')^2}+1\right)\left(\frac{\eta}{f(\x^*)}-1\right)\|\x^*\|_2
\]
in \eqref{re2} both originate from the  the mismatch between the tuning parameter $\eta$ and $f(\x^*)$. }
This ``mismatch error'' goes to zero as $\eta\rightarrow f(\x^*)$ and a larger number of measurements lead to a smaller mismatch error \cite{oymak2018}.
Moreover, for fixed $m$, the closer $f(\x^*)$ to the feasible set $\K$, the smaller the approximation error is. \textcolor{blue}{Note that the ``mismatch'' discussed in this paper differs from the ``mismatch covariance'' in \cite{genzel2018mismatch}, which is defined as the covariance between $\x$ and the residual $\y-A\x$. Here we specifically quantify the discrepancy between the parameter $\eta$ and the true signal structure function $f$.}

\textcolor{blue}{In \cite[Theorem V.1 (Asymptotic Singularity)]{Berk2022}, the authors investigated the asymptotic singularity of the constrained LASSO problem
	\begin{equation}\label{eta-lasso}
		\min\limits_{\|\x\|_1 \leq \eta} \|\y - A\x\|_2^2.
	\end{equation}
	They assume that $\x^*$  is $s$-sparse, and that $\y=A\x^*+\sigma\e$,  where $\e\sim \mathcal{N}(0,I_m)$ is noise and $\sigma>0$. Given $\eta>0$, let $\hat{\x}$ denote the solution of \eqref{eta-lasso}, 
	If
	 $\|\x^*\|_1\neq\eta$ and $m\ge C s\log (n/s)$, then 
	\begin{equation}\label{eq:low}
		\frac{\| \hat{\x} - \x^* \|_2^2}{\sigma^{2}} = \infty, 
	\end{equation}
	holds almost surely.
	We note that  the   asymptotic  analysis \eqref{eq:low}  depends on the lower bound of $\| \hat{\x} - \x^* \|_2^2$, whereas we establish non-asymptotic upper error bounds of $\| \hat{\x} - \x^* \|_2^2$ in Theorem \ref{th1}.}

\textcolor{blue}
{We derive several corollaries from Theorem \ref{th1}, which characterize the results when $f$ is specified as a given function.
The  first result is on the  standard Lasso.}
\begin{corollary}\label{co1}
Let $\x^*$ be
an  $s$-sparse vector in $\mathbb{R}^n$.
Suppose that $A\in \mathbb{R}^{m\times n}$ is a Gaussian random matrix with independent $\bma_i\sim \mathcal{N}(0,I_n)$ rows.
Let $\hat{\x} \in\mathbb{R}^n$ be a solution to 
\begin{equation}\label{L1-LS}
    \min_{\|\x\|_1\leq \|\x^*\|_1} \frac{1}{2}\|\y-A\x\|_2^2
\end{equation}
where 
$\y=A\x^*+\bm{e}$.
If
$
 m\gtrsim s\log n,$
then 
\begin{equation}\label{eq:lasso}
\|\hat{\x}-\x^*\|_2\lesssim \sqrt{\frac{s\log n}{m}}\frac{\|\e\|_2}{\sqrt{m}}
\end{equation}
holds with probability at least $1-2\exp(-s\log n)$.
\end{corollary}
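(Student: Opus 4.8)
The plan is to read this corollary as the boundary instance $f(\x^*)=\eta$ of Theorem~\ref{th1}, specialized to $f=\|\cdot\|_1$ with $\eta=\|\x^*\|_1$, so that no new machinery is needed. Since $\x^*$ is feasible for \eqref{L1-LS}, it lies in $\K=\{\x:\|\x\|_1\le\|\x^*\|_1\}$, whence its Euclidean projection satisfies $\bm{P}_{\mathcal{K}}(\x^*)=\x^*$ and the mismatch distance $\|\bm{P}_{\mathcal{K}}(\x^*)-\x^*\|_2=0$. As $f(\x^*)=\eta$ falls under Case~1 ($f(\x^*)\ge\eta$), I would invoke the bound \eqref{re1}; its first (mismatch) term then drops out, leaving
\[
\|\hat{\x}-\x^*\|_2\le\left(\frac{4\sqrt{2}\rho}{3(1-\rho)}+\frac{4\rho}{(1-\rho)^2}\right)\frac{\|\e\|_2}{\sqrt{m}}\lesssim\rho\,\frac{\|\e\|_2}{\sqrt{m}},
\]
where the last step uses that $1-\rho$ is bounded away from zero, which is guaranteed by the hypothesis $m\gtrsim m_0$ together with the constant $C>1$ noted in the first Remark.

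The second step is to identify $m_0=\M(\|\cdot\|_1,\x^*,u)$ precisely enough that $\rho\gtrsim\sqrt{m_0/m}$ produces the claimed rate. By Definition~\ref{m0-1}, $m_0\approx(\omega+u)^2$ with $\omega=\omega(\mathcal{C}_{\|\cdot\|_1}(\x^*)\cap\mathcal{B}^n)$. The crucial ingredient I would bring in is the classical compressed-sensing estimate of the Gaussian width of the $\ell_1$ descent cone at an $s$-sparse vector, $\omega^2(\mathcal{C}_{\|\cdot\|_1}(\x^*)\cap\mathcal{S}^{n-1})\lesssim s\log(n/s)\lesssim s\log n$. Choosing the confidence parameter as $u^2\approx s\log n$ then gives $m_0\lesssim s\log n$, so that under $m\gtrsim s\log n$ we obtain $\rho\approx\sqrt{(s\log n)/m}$, and substituting into the display above yields exactly \eqref{eq:lasso}. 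For the probability, I would take $u^2=s\log n+\log 3$ (still of order $s\log n$), which turns the $1-6\exp(-u^2)$ guarantee of Theorem~\ref{th1} into the stated $1-2\exp(-s\log n)$.

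The main obstacle is the Gaussian width estimate of the descent cone in the second step; everything else is bookkeeping on top of Theorem~\ref{th1}. Although $\omega^2\lesssim s\log(n/s)$ is standard, one must verify that the cone $\mathcal{C}_f(\x^*)$ entering $\M(f,\x^*,u)$ really is the descent cone of $\|\cdot\|_1$ at the specific $s$-sparse point $\x^*$, so that the normal-cone duality computation underlying the width bound applies, and that loosening $\log(n/s)$ to $\log n$ does not change the order. A minor secondary check is confirming that the ball-intersection width of Definition~\ref{m0-1} and the sphere-intersection width used in the cited estimate agree up to constants for a cone, which they do.
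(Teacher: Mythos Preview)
Your proposal is correct and follows essentially the same approach as the paper's proof: both recognize that $\eta=\|\x^*\|_1$ kills the mismatch term in \eqref{re1}, both invoke Theorem~\ref{th1} Case~1, and both finish by bounding $m_0$ via the Gaussian width of the $\ell_1$ descent cone at an $s$-sparse point. The only cosmetic differences are that the paper cites the specific containment $\mathcal{C}_f(\x^*)\cap\mathcal{S}^{n-1}\subset\T_1^s=\{\x:\|\x\|_1\le\sqrt{s},\ \|\x\|_2\le1\}$ from \cite[Lemma~10.5.3]{Vershynin2020} (rather than the normal-cone duality route you mention) and chooses $u=2\sqrt{s\log n}$ rather than your $u^2=s\log n+\log 3$; your choice is in fact slightly tidier for matching the stated probability.
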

\begin{proof}
	\textcolor{blue}{
Since $\eta=\|\x^*\|_1$,
 the mismatch term in \eqref{re1} vanishes.
The effective dimension $m_0$ is controlled by $s\log n$
according to the fact in \cite[Lemma 10.5.3]{Vershynin2020} that any vector in $\mathcal{C}_f(\x^*)\cap \mathcal{S}^{n-1}$ where $\mathcal{C}_f(\x^*)$ is the closed cone of $\mathcal{D}_f(\x^*)$ with
\[
\mathcal{D}_f(\x^*)=\{\h:~\|\x^*+\h\|_1\leq \|\x^*\|_1\}.
\]
lies in the set
\begin{equation}\label{l1set}
	\T_1^s=\{\x\in\mathbb{R}^n:\|\x\|_1\leq \sqrt{s},\|\x\|_2\leq 1\}.
\end{equation}
  Then
the estimation \eqref{eq:lasso} can be
obtained 
from Theorem \ref{th1}
by setting $u=2\sqrt{s\log n}$ the assumption that $m\gtrsim s\log n$ with $\rho$ is proportional to $\sqrt{\frac{m_0}{m}}=\sqrt{\frac{s\log n}{m}}$, and the denominator $(1-\rho)^2$ can be seen as some constant in $(0,1)$, thus we finish our proof. }
\end{proof}

 The second result is on the  least square with $l_2$ norm constraint.
\begin{corollary}\label{co3}
Let $\x^*$ be a vector in $\mathbb{R}^n$.
 Suppose that $A\in \mathbb{R}^{m\times n}$ is a Gaussian random matrix with independent $\bma_i\sim \mathcal{N}(0,I_n)$ rows.
Let $\hat{\x}$ be a solution to 
\begin{equation}\label{L2-LS}
    \min_{\|\x\|_2\leq \|\x^*\|_2} \frac{1}{2}\|\y-A\x\|_2^2,
\end{equation}
where $\y=A\x^*+\bm{e}$.
 If
 $ 
 m\gtrsim n,
$
then  
\begin{equation}
\|\hat{\x}-\x^*\|_2\lesssim \sqrt{\frac{n}{m}}\frac{\|\e\|_2}{\sqrt{m}}
\end{equation}
holds with probability at least $1-2\exp(-Cn)$.
\end{corollary}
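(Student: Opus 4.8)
The plan is to invoke Theorem~\ref{th1} with the structure-inducing function $f(\x)=\|\x\|_2$ and tuning parameter $\eta=\|\x^*\|_2$, in direct analogy with the proof of Corollary~\ref{co1}. With this choice $f(\x^*)=\|\x^*\|_2=\eta$, so $\x^*\in\K$ and we are in the boundary regime $f(\x^*)\ge\eta$ of Theorem~\ref{th1}. In particular $\bm{P}_{\mathcal{K}}(\x^*)=\x^*$, so the mismatch term $\|\bm{P}_{\mathcal{K}}(\x^*)-\x^*\|_2$ in \eqref{re1} vanishes and the bound collapses to
\[
\|\hat{\x}-\x^*\|_2\leq\left(\frac{4\sqrt{2}\rho}{3(1-\rho)}+\frac{4\rho}{(1-\rho)^2}\right)\frac{\|\e\|_2}{\sqrt{m}},\qquad \rho\gtrsim\sqrt{\tfrac{m_0}{m}}.
\]

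First I would compute the effective dimension $m_0=\M(f,\x^*,u)$ for the Euclidean norm. Since $\|\cdot\|_2$ is differentiable at the nonzero point $\x^*$, the descent set $\mathcal{D}_f(\x^*)=\{\h:\|\x^*+\h\|_2\le\|\x^*\|_2\}$ is a Euclidean ball and its closed conic hull is the half-space $\mathcal{C}_f(\x^*)=\{\h:\langle\h,\x^*\rangle\le0\}$. Thus $\mathcal{C}_f(\x^*)\cap\mathcal{S}^{n-1}$ is a hemisphere. A direct second-moment computation for the half-space gives the statistical dimension $\mathbb{E}\|\bm{P}_{\mathcal{C}_f(\x^*)}(\g)\|_2^2=n-\tfrac12$, whence $\omega^2(\mathcal{C}_f(\x^*)\cap\mathcal{S}^{n-1})\approx n$ and therefore $m_0\approx n$. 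Consequently the hypothesis $m\gtrsim m_0$ of Theorem~\ref{th1} becomes exactly $m\gtrsim n$, and $\rho$ is proportional to $\sqrt{m_0/m}=\sqrt{n/m}$.

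Under $m\gtrsim n$ I would treat $(1-\rho)^{-1}$ and $(1-\rho)^{-2}$ as constants (since $\rho$ is bounded away from $1$ by the same margin that underlies $m\gtrsim m_0$) and collect the two terms in the displayed bound, which yields $\|\hat{\x}-\x^*\|_2\lesssim\rho\,\tfrac{\|\e\|_2}{\sqrt{m}}\approx\sqrt{\tfrac{n}{m}}\,\tfrac{\|\e\|_2}{\sqrt{m}}$, exactly the claimed estimate. To obtain the stated probability I would set $u$ proportional to $\sqrt{n}$ (consistent with $m_0\approx(\omega+u)^2\approx n$), so that the failure probability $6\exp(-u^2)$ from Theorem~\ref{th1} is absorbed into $2\exp(-Cn)$ for a suitable $C>0$.

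The main obstacle is the Gaussian-width estimate for the $\ell_2$ descent cone. In the sparse setting of Corollary~\ref{co1} the width is cut down to $\sqrt{s\log n}\ll\sqrt{n}$ by the sparsity structure, but here the descent cone of $\|\cdot\|_2$ is an entire half-space and hence carries essentially the full ambient dimension. Recognizing $\mathcal{C}_f(\x^*)$ as a half-space and verifying $\omega^2\approx n$ via the projection identity above is the one substantive step; everything else is a routine specialization of Theorem~\ref{th1}.
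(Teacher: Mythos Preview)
Your proposal is correct and follows essentially the same route as the paper's own proof: invoke Theorem~\ref{th1} with $f=\|\cdot\|_2$ and $\eta=\|\x^*\|_2$ so the mismatch term vanishes, identify $m_0\approx(\omega(\T)+u)^2\approx n$ with $u\approx\sqrt{n}$, and absorb the $(1-\rho)$ factors as constants since $m\gtrsim n$. The paper's proof is actually terser than yours---it does not spell out the half-space/statistical-dimension computation for the descent cone---so your additional detail there is fine and fills in exactly the step the paper leaves implicit.
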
 
\begin{proof}
Similar to the proof of Corollary \ref{co1}, 
\textcolor{blue}{
	Since $\eta=\|\x^*\|_2$,
	the mismatch term in \eqref{re1} vanishes.
	In this case, $m_0\approx (\omega(\T)+u)^2\approx n$ with  $u=\sqrt{n}$.
Noticed that when $m\gtrsim n$, we have $\rho=C\sqrt{\frac{ n}{m}}$, and}
 the denominator $(1-\rho)^2$ can be seen as some constant in $(0,1)$. \textcolor{blue}{This completes the proof.}
\end{proof}

We 
\textcolor{blue}{observe
that 
Corollary \ref{co1} has been established in \cite{Oymak2013} and 
\cite[Theorem 10.6.1]{Vershynin2020}, while
the result in Corollary \ref{co3} 
presents
the same upper bound for linear estimation as  established in \cite{Plan2017}.}

It can be inferred from the Corollary \ref{co1} that the analog of the classical compressed sensing result is included in our result as the special case $f(\x^*)=\|\x^*\|_1=\eta$.
A consistent conclusion with Corollary \ref{co1} can be achieved
when it comes to non-convex constraints 
\[
\K=\{\x\in\mathbb{R}^n:\ \|\x\|_p\leq \|\x^*\|_p= \eta\},\quad 0\leq p<1
\]
based on the following property of the Gaussian width: $\omega(\T)=\omega(\conv(\T))$.
\textcolor{blue}{We assert that Theorem \ref{th1} admits multiple proof approaches. In particular, an alternative proof can be derived by leveraging Lemma \ref{lem1}, yielding a result characterized by a multiplicative factor expressed via $\delta$ as follows.
\begin{theorem}\label{th1-c}
	Let $\x^*$ be an arbitrary vector in $\mathbb{R}^n$. For any $u>0$,
	let $m_0=\M(f,\x^*,u)$, which is defined in Definition \ref{m0-1}.
	Let $A\in \mathbb{R}^{m\times n}$ be a Gaussian random matrix with independent $\bma_i\sim \mathcal{N}(0,I_n)$ rows.
	Suppose that $\hat{\x}\in \mathbb{R}^n$ is a solution to
	the constrained least square \eqref{LS}
	with $\y=A\x^*+\bm{e}$.
	If 
	\[
	m\geq \frac{1}{\delta^2} m_0,\quad \text{with} \quad \delta\in (0,1)
	\]
	then 
	\begin{equation*}
		\|\hat{\x}-\x^*\|_2\leq \frac{\sqrt{2}(1+\delta)}{(1-\rho)}\|\bm{P}_{\mathcal{K}}(\x^*)-\x^*\|_2+\left(\frac{4\sqrt{2}\rho}{3(1-\rho)}+\frac{4\rho}{(1-\rho)^2}\right)\frac{\|\e\|_2}{\sqrt{m}},
		\quad 
		\text{with}\quad  
		\rho=\frac{1}{\delta} \sqrt{\frac{m_o}{m}},
	\end{equation*}
		holds for all $f(\x^*)\geq\eta$  with probability at least $1-6\exp(-u^2)$.
	Suppose that $0<f(\x^*)<\eta$ and $f$ is absolutely homogeneous. 
	Denote 
	\[
	\x^{\eta}=\frac{\eta}{f(\x^*)}\x^*,
	\quad 
	m'_0=\M(f,\x^{\eta},u),
	\]
	If 
	\[
	m\geq \frac{1}{\delta^2} m_0',\quad \text{with}\quad \delta\in(0,1), 
	\]
	then for any $u>0$, the condition
	\begin{equation*}
		\|\hat{\x}-\x^*\|_2\leq \left(4\frac{\rho'(1+\delta)}{(1-\rho')^2}+1\right)\left(\frac{\eta}{f(\x^*)}-1\right)\|\x^*\|_2+\frac{4\rho'}{(1-\rho')^2}\frac{\|\e\|_2}{\sqrt{m}},
		\quad 
		\text{with}\quad  
		\rho=\frac{1}{\delta} \sqrt{\frac{m_o}{m}},
	\end{equation*}
	holds with probability at least $1-6\exp(-u^2)$.
\end{theorem}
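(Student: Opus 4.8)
The plan is to run the proof of Theorem~\ref{th1} essentially verbatim, changing only the concentration tool used to control $\|A\,\cdot\,\|_2$ on the descent cone. First I would reproduce the two-case setup exactly. In Case~1 ($f(\x^*)\ge\eta$) I set $\h=\hat{\x}-\x^*=\h_1+\h_2$ with $\h_1=\hat{\x}-\bm{P}_{\mathcal{K}}(\x^*)$ and $\h_2=\bm{P}_{\mathcal{K}}(\x^*)-\x^*$, and from optimality of $\hat{\x}$ derive \eqref{eq1}, namely $\|A\h\|_2^2\le\|A\h_2\|_2^2+2\langle A^T\e,\h_1\rangle$. In Case~2 ($0<f(\x^*)<\eta$) I introduce the same intermediate point $\x^{\eta}=\tfrac{\eta}{f(\x^*)}\x^*$ (legitimate because $f$ is absolutely homogeneous, so $f(\x^{\eta})=\eta$), write $\y=A\x^{\eta}+\tilde{\e}$ with $\tilde{\e}=A(\x^*-\x^{\eta})+\e$, and obtain \eqref{eq2-1}, $\|A\h_1\|_2^2\le 2\langle A^T\tilde{\e},\h_1\rangle$. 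Since $\h,\h_2\in\mathcal{C}_f(\x^*)$ (respectively $\h_1$ and $\h_2/\|\h_2\|_2$ lie in $\mathcal{C}_f(\x^{\eta})$), every vector entering these inequalities is, once normalized, on the relevant cone--sphere intersection, so Lemma~\ref{lem1} applies.

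The only substantive change is in how the quadratic forms are bounded. For the lower bounds I keep Lemma~\ref{lem1-1}, which gives $\|A\h\|_2^2\ge(\phi(m)-\phi(m_0))^2\|\h\|_2^2$; combining this with Lemma~\ref{m0-3} and the hypothesis $\delta\in(0,1)$ yields $\phi(m_0)/\phi(m)\le\sqrt{m_0/m}\le\tfrac1\delta\sqrt{m_0/m}=\rho$, hence $\|A\h\|_2^2\ge(1-\rho)^2\phi^2(m)\|\h\|_2^2$, which is precisely the $(1-\rho)$ denominator in the statement. For the upper bounds I replace Lemma~\ref{lem1-0} by Lemma~\ref{lem1}: under the sample condition $m\ge m_0/\delta^2$ it supplies the clean multiplicative estimate $\|A\h_2\|_2\le(1+\delta)\phi(m)\|\h_2\|_2\le(1+\delta)\sqrt{m}\|\h_2\|_2$ (using $\phi(m)\le\sqrt m$), and likewise $\|\tilde{\e}\|_2\le(1+\delta)\sqrt{m}\|\h_2\|_2+\|\e\|_2$ in Case~2. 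The cross term $\langle A^T\e,\h_1\rangle$ is handled by Lemma~\ref{innerproduct} exactly as in Theorem~\ref{th1}, contributing $\sqrt{m_0}\,\|\e\|_2(\|\h\|_2+\|\h_2\|_2)$.

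Next I would substitute these into \eqref{eq1} (respectively \eqref{eq2-1}), divide by $\phi^2(m)$, and use $\phi^2(m)\ge\tfrac{1.0049}{2}m$. This turns the upper term into $(1+\delta)^2 m/\phi^2(m)\le 2(1+\delta)^2$; completing the square and taking the square root, exactly the algebraic step of Theorem~\ref{th1}, then produces the coefficient $\sqrt{2}(1+\delta)$ on $\|\bm{P}_{\mathcal{K}}(\x^*)-\x^*\|_2$, while the error coefficient $\tfrac{4\sqrt{2}\rho}{3(1-\rho)}+\tfrac{4\rho}{(1-\rho)^2}$ is inherited unchanged because $\tfrac{2\sqrt{m_0}}{\phi^2(m)}\le\tfrac{4}{\sqrt m}\sqrt{m_0/m}\le\tfrac{4\rho}{\sqrt m}$. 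Case~2 is identical: the factor $2\sqrt{m_0'}(1+\delta)\sqrt{m}/\phi^2(m)\le 4(1+\delta)\sqrt{m_0'/m}\le 4(1+\delta)\rho'$ gives the coefficient $4\tfrac{\rho'(1+\delta)}{(1-\rho')^2}$. Finally a union bound over the three events (Lemma~\ref{lem1-1}, Lemma~\ref{lem1}, Lemma~\ref{innerproduct}) gives probability at least $1-6\exp(-u^2)$, after rescaling $u$ to absorb the $\exp(-u^2/2)$ of Lemma~\ref{lem1}.

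I expect the main obstacle to be the bookkeeping that lets a single parameter $\delta$ play three roles at once: fixing the sample threshold $m\ge m_0/\delta^2$, supplying the multiplicative upper factor $(1+\delta)$, and remaining compatible with the lower-bound denominator $(1-\rho)$ through the chain $\phi(m_0)/\phi(m)\le\sqrt{m_0/m}\le\tfrac1\delta\sqrt{m_0/m}$, where $\delta<1$ is essential. One must check that the completed-square inequality still closes with these generalized constants and that $\rho,\rho'<1$ (equivalently $m>m_0/\delta^2$, respectively $m>m_0'/\delta^2$, strictly) so that the denominators stay positive.
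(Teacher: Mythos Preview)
Your proposal is correct and follows essentially the same route as the paper's own proof: keep the lower bound via Lemma~\ref{lem1-1} and Lemma~\ref{m0-3}, swap Lemma~\ref{lem1-0} for Lemma~\ref{lem1} to get the clean multiplicative $(1+\delta)$ upper bound on $\|A\h_2\|_2$ (and hence on $\|\tilde{\e}\|_2$ in Case~2), and treat the cross term with Lemma~\ref{innerproduct} before completing the square exactly as in Theorem~\ref{th1}. The algebra you sketch, including $(1+\delta)^2 m/\phi^2(m)\le 2(1+\delta)^2$ and $2\sqrt{m_0}/\phi^2(m)\le 4\rho/\sqrt{m}$, matches the paper line for line.
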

}
The proof of Theorem \ref{th1-c} can be found in the Appendix.
In the following we provide the estimation of the solution \textcolor{blue}{to} the \textcolor{blue}{constrained} LAD  \eqref{LAD}.
\begin{theorem}\label{th1-2}
Let $\x^*$ be an arbitrary vector in $\mathbb{R}^n$.
\textcolor{blue}{
Let $m_1=\M(f,\x^*)$, which is defined in Definition \ref{m0-2}}.
Let $\gamma\geq4$, $\beta>(\frac{5}{4}\gamma)^2$. Suppose that $A\in \mathbb{R}^{m\times n}$ is a Gaussian random matrix with independent $\bma_i\sim \mathcal{N}(0,I_n)$ rows.
  Let $\hat{\x}$ be a solution to the 
  \textcolor{blue}{constrained} LAD
  \eqref{LAD} with $\y=A\x^*+\bm{e}$. 
\textcolor{blue}{If
\begin{equation}\label{lad-con1}
	m>\beta m_1,
\end{equation} 
then for $0<u<\sqrt{\frac{2}{\pi}}-\rho$,
\begin{equation}\label{lad-re1}
	\|\hat{\x}-\x^*\|_2\leq 
	\frac{
		\sqrt{\frac{2}{\pi}}+\rho+u}{\sqrt{\frac{2}{\pi}}-\rho-u}
	\|
	\bm{P}_{\mathcal{K}}(\x^*)
	-
	\x^*
	\|_2
	+
	\frac{2}{\sqrt{\frac{2}{\pi}}-\rho-u}\frac{\|\e\|_1}{m},
    \quad \text{with}\quad \rho=\gamma\sqrt{\frac{m_1}{m}},
\end{equation}
holds for all $f(\x^*)\geq\eta$
with probability at least $1-6\exp(-\frac{mu^2}{2})$.
}  
Suppose that $0<f(\x^*)<\eta$ and $f$ is absolutely \textcolor{blue}{homogeneous}. 
Denote 
\begin{equation*}
\x^{\eta}=\frac{\eta}{f(\x^*)}\x^*,
\quad 
m'_1=\M(f,\x^{\eta}).
\end{equation*}
If 
\begin{equation}\label{lad-con2}
 m>\beta m'_1,
\end{equation}
then for any $0<u<\sqrt{\frac{2}{\pi}}-\rho'$,
\begin{equation}\label{lad-re2}
\|\hat{\x}-\x^*\|_2\leq 
\frac{3\sqrt{\frac{2}{\pi}}+\rho'+u}{\sqrt{\frac{2}{\pi}}-\rho'-u}
\left(\frac{\eta}{f(\x^*)}-1\right)
\|\x^*\|_2+\frac{2}{\sqrt{\frac{2}{\pi}}-\rho'-u}\frac{\|\e\|_1}{m}, \quad \text{with}\quad 
\rho'=\gamma\sqrt{\frac{m'_1}{m}},
\end{equation}
holds with probability at least $1-6\exp(-\frac{mu^2}{2})$.
\end{theorem}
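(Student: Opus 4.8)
The plan is to follow the two-case architecture of the proof of Theorem~\ref{th1}, replacing the $\ell_2$ isometry estimates (Lemmas~\ref{lem1-0} and~\ref{lem1-1}) by a two-sided concentration bound for the $\ell_1$ norm. The analytic engine is the identity $\mathbb{E}\,|\langle\bma_i,\z\rangle|=\sqrt{2/\pi}$ for a unit vector $\z$ and $\bma_i\sim\mathcal{N}(0,I_n)$, which gives $\mathbb{E}\|A\z\|_1=\sqrt{2/\pi}\,m\|\z\|_2$. I would first establish the uniform estimate
\[
\left|\frac{1}{m}\|A\z\|_1-\sqrt{2/\pi}\right|\le \rho+u,\qquad \z\in\C\cap\mathcal{S}^{n-1},
\]
valid with probability at least $1-2\exp(-mu^2/2)$, where $\rho=\gamma\sqrt{m_1/m}$ and $\C$ is the relevant descent cone (read with $m_1'$, $\rho'$ in Case~2). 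The hypotheses $\gamma\ge 4$, $\beta>(5\gamma/4)^2$ and $m>\beta m_1$ force $\rho<\sqrt{2/\pi}$, so the admissible range $0<u<\sqrt{2/\pi}-\rho$ is nonempty and every denominator below stays positive.

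For Case~1 ($f(\x^*)\ge\eta$) I would set $\h=\hat{\x}-\x^*=\h_1+\h_2$ with $\h_1=\hat{\x}-\bm{P}_{\K}(\x^*)$ and $\h_2=\bm{P}_{\K}(\x^*)-\x^*$, as in Theorem~\ref{th1}. Optimality of $\hat{\x}$ gives $\|A\h-\e\|_1\le\|A\h_2-\e\|_1$, and two triangle inequalities collapse this to $\|A\h\|_1\le\|A\h_2\|_1+2\|\e\|_1$. Since $\h\in\C_f(\x^*)$ and, because $f(\bm{P}_{\K}(\x^*))\le\eta\le f(\x^*)$, also $\h_2\in\C_f(\x^*)$, the concentration bound supplies the lower estimate $\|A\h\|_1\ge(\sqrt{2/\pi}-\rho-u)m\|\h\|_2$ and the upper estimate $\|A\h_2\|_1\le(\sqrt{2/\pi}+\rho+u)m\|\h_2\|_2$. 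Dividing by $(\sqrt{2/\pi}-\rho-u)m$ yields \eqref{lad-re1} with $\|\h_2\|_2=\|\bm{P}_{\K}(\x^*)-\x^*\|_2$.

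For Case~2 ($0<f(\x^*)<\eta$) I would introduce the rescaled anchor $\x^{\eta}=\frac{\eta}{f(\x^*)}\x^*$, which satisfies $f(\x^{\eta})=\eta$ by absolute homogeneity, together with the surrogate noise $\tilde{\e}=A\x^*-A\x^{\eta}+\e$, so $\y=A\x^{\eta}+\tilde{\e}$. Writing $\h=\hat{\x}-\x^*=\h_1-\h_2$ with $\h_1=\hat{\x}-\x^{\eta}$ and $\h_2=\x^*-\x^{\eta}$, optimality now reads $\|A\h_1-\tilde{\e}\|_1\le\|\tilde{\e}\|_1$, hence $\|A\h_1\|_1\le 2\|\tilde{\e}\|_1$. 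Because $f(\x^*)<\eta=f(\x^{\eta})$, both $\h_1$ and $\h_2$ lie in $\C_f(\x^{\eta})$, so the concentration bound on $\C_f(\x^{\eta})\cap\mathcal{S}^{n-1}$ gives the lower bound on $\|A\h_1\|_1$ and the estimate $\|\tilde{\e}\|_1\le\|A\h_2\|_1+\|\e\|_1\le(\sqrt{2/\pi}+\rho'+u)m\|\h_2\|_2+\|\e\|_1$. Combining these and then using $\|\h\|_2\le\|\h_1\|_2+\|\h_2\|_2$ produces the coefficient $\frac{3\sqrt{2/\pi}+\rho'+u}{\sqrt{2/\pi}-\rho'-u}$ via the simplification $\frac{2(\sqrt{2/\pi}+\rho'+u)}{\sqrt{2/\pi}-\rho'-u}+1$, while $\|\h_2\|_2=(\frac{\eta}{f(\x^*)}-1)\|\x^*\|_2$; this is exactly \eqref{lad-re2}.

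The main obstacle is the $\ell_1$ concentration estimate, which genuinely replaces the Gaussian-width isometry lemmas of the $\ell_2$ theorem. Because $\z\mapsto\|A\z\|_1$ is not rotation invariant, its uniform control over the cone requires a Gaussian-process argument: the functional $A\mapsto\|A\z\|_1$ is $\sqrt{m}$-Lipschitz in the Frobenius metric, so Gaussian concentration at scale $mu$ yields the $\exp(-mu^2/2)$ rate for fixed $\z$, while a contraction step (the absolute value being $1$-Lipschitz) bounds the expected supremum of the centered process by the Gaussian width $\omega(\C\cap\mathcal{S}^{n-1})=\sqrt{m_1}$. The constants $\gamma\ge 4$ and $\beta>(5\gamma/4)^2$ calibrate the width term $\rho=\gamma\sqrt{m_1/m}$ so that it stays strictly below $\sqrt{2/\pi}$, and the final probability $1-6\exp(-mu^2/2)$ follows from a union bound over the constant number of concentration events used across the two cases.
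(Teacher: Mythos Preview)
Your proposal is correct and follows essentially the same route as the paper's proof: the identical $\h_1,\h_2$ decompositions in each case, the same triangle-inequality reductions $\|A\h\|_1\le\|A\h_2\|_1+2\|\e\|_1$ and $\|A\h_1\|_1\le2\|\tilde{\e}\|_1$, and the same pairing of lower and upper $\ell_1$-concentration bounds on the descent cone. The only difference is cosmetic: the paper invokes the ready-made deviation inequality of Lemma~\ref{L1-lemma1} (Plan--Vershynin) for the step $\bigl|\tfrac{1}{m}\|A\z\|_1-\sqrt{2/\pi}\bigr|\le\rho+u$, whereas you sketch its proof via Gaussian Lipschitz concentration plus contraction, which is exactly how that lemma is established.
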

The proof of Theorem \ref{th1-2} can be found in Appendix.
Theorem \ref{th1-2} also suggests 
tradeoffs
between the ``mismatch error'' and the data complexity. 
We focus on the case \textcolor{blue}{when} $f(\x^*)<\eta$. Let $\epsilon$ be the desired relative accuracy of the optimal solution, then we have
\[
\frac{\|\hat{\x}-\x^*\|_2}{\|\x^*\|_2}\leq \epsilon=
\frac{3\sqrt{\frac{2}{\pi}}+\rho'+u}{\sqrt{\frac{2}{\pi}}-\rho'-u}
\left(\frac{\eta}{f(\x^*)}-1\right)
+\frac{2}{\sqrt{\frac{2}{\pi}}-\rho'-u}\frac{\|\e\|_1}{m\|\x^*\|_2}.
\]
One can observe that a larger number of measurements leads to smaller values of the rate $\rho'$, 
which in turn leads to a smaller mismatch error 
\[
\frac{3\sqrt{\frac{2}{\pi}}+\rho'+u}{\sqrt{\frac{2}{\pi}}-\rho'-u}
\left(\frac{\eta}{f(\x^*)}-1\right).
\]
On the other hand, 
when the noise level $\|\e\|_1$ is smaller enough compared with $\|\x^*\|_2$, the  inequality 
\[
\frac{3\sqrt{\frac{2}{\pi}}+\rho'+u}{\sqrt{\frac{2}{\pi}}-\rho'-u}
\left(\frac{\eta}{f(\x^*)}-1\right)\leq \epsilon
\]
holds. Simple manipulations imply that
\begin{equation}\label{mis}
m\geq m_0\frac{(\mismatch+\epsilon)^2}{(C_2\epsilon-C_1\cdot \mismatch)^2},
\end{equation}
where $C_1=3\sqrt{\frac{2}{\pi}}+u$, $C_2=\sqrt{\frac{2}{\pi}}-u$ and we use ``mismatch'' to represent the mismatch term $\frac{\eta}{f(\x^*)}-1$. The expression \eqref{mis} shows that less data is needed (smaller $m$ is required) if the term $\frac{\eta}{f(\x^*)}-1$ goes to zero.
Next, we consider the sparse case that $f(\x)=\|\x\|_1$.
\begin{corollary}\label{l1-lad}
Let  $\x^*$ be 
an arbitrary 
$s$-sparse vector in $\mathbb{R}^n$.
Let $\gamma\geq4$, $\beta>(\frac{5}{4}\gamma)^2$.
Suppose that $A\in \mathbb{R}^{m\times n}$ is a Gaussian random matrix with independent $\bma_i\sim \mathcal{N}(0,I_n)$ rows.
 Let $\hat{\x}$ be a solution to the constrained LAD:
 \begin{equation}\label{eq-l1-lad}
     \min\limits_{\|\x\|_1\leq \|\x^*\|_1}\|\y-A\x\|_1
 \end{equation}
 where $\y=A\x^*+\bm{e}$. 
If 
$$
m>\beta s\log n,
$$
then 
for $0<u<\sqrt{\frac{2}{\pi}}-\gamma\sqrt{\frac{s\log n}{m}}$,
\begin{equation}\label{eq:lad1}
\|\hat{\x}-\x^*\|_2\leq
\frac{2}{\sqrt{\frac{2}{\pi}}-u-\gamma\sqrt{\frac{s\log n}{m}}}\frac{\|\e\|_1}{m}
\end{equation}
holds
with probability at least $1-2\exp(-\frac{mu^2}{2})$.    
\end{corollary}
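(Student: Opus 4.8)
The plan is to read off Corollary~\ref{l1-lad} from Theorem~\ref{th1-2} by specializing to $f=\|\cdot\|_1$ with the optimal tuning parameter $\eta=\|\x^*\|_1$, so that we sit exactly on the boundary $f(\x^*)=\eta$ of the regime $f(\x^*)\ge\eta$ and the mismatch term drops out. First I would observe that $\x^*$ trivially satisfies $\|\x^*\|_1\le\|\x^*\|_1$, hence $\x^*\in\mathcal{K}$ and $\bm{P}_{\mathcal{K}}(\x^*)=\x^*$, so that $\|\bm{P}_{\mathcal{K}}(\x^*)-\x^*\|_2=0$. Feeding this into the bound \eqref{lad-re1} annihilates its first summand, leaving
\[
\|\hat{\x}-\x^*\|_2\le\frac{2}{\sqrt{\tfrac{2}{\pi}}-\rho-u}\frac{\|\e\|_1}{m},\qquad \rho=\gamma\sqrt{\frac{m_1}{m}},
\]
with $m_1=\M(f,\x^*)=\omega^2(\mathcal{C}_f(\x^*)\cap\mathcal{S}^{n-1})$.

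The second step is to control the structural complexity $m_1$. Exactly as in the proof of Corollary~\ref{co1}, the $s$-sparsity of $\x^*$ together with \cite[Lemma 10.5.3]{Vershynin2020} places $\mathcal{C}_f(\x^*)\cap\mathcal{S}^{n-1}$ inside the convex set $\T_1^s$ of \eqref{l1set}, whose Gaussian width obeys $\omega(\T_1^s)\lesssim\sqrt{s\log n}$; therefore $m_1\lesssim s\log n$. Absorbing the resulting width constant into $\beta$ and $\gamma$, the hypothesis $m>\beta s\log n$ forces the sample condition \eqref{lad-con1}, i.e.\ $m>\beta m_1$, so Theorem~\ref{th1-2} applies, and simultaneously $\rho=\gamma\sqrt{m_1/m}\le\gamma\sqrt{s\log n/m}$. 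This last bound keeps the denominator positive over the stated range $0<u<\sqrt{2/\pi}-\gamma\sqrt{s\log n/m}$ and, upon substitution into the displayed inequality (which only enlarges the fraction), yields
\[
\|\hat{\x}-\x^*\|_2\le\frac{2}{\sqrt{\tfrac{2}{\pi}}-u-\gamma\sqrt{\tfrac{s\log n}{m}}}\frac{\|\e\|_1}{m},
\]
the asserted estimate.

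The content is essentially bookkeeping rather than new analysis, so the care lies in two places. The first is checking that the admissible constants $\gamma\ge4$ and $\beta>(5\gamma/4)^2$ of Theorem~\ref{th1-2} survive the replacement $m_1\mapsto s\log n$, i.e.\ that the implicit constant in $\omega(\T_1^s)\lesssim\sqrt{s\log n}$ does not silently violate $m>\beta m_1$; this is the point I would track most carefully. The second, and the only genuinely non-automatic step, is the probability statement: Theorem~\ref{th1-2} delivers $1-6\exp(-mu^2/2)$, whereas the corollary asserts the sharper $1-2\exp(-mu^2/2)$, which cannot be obtained by mere substitution. I would recover the factor $2$ by re-examining the union bound in the equality regime, where the mismatch term vanishes and the auxiliary concentration events associated with $\bm{P}_{\mathcal{K}}(\x^*)-\x^*$ are no longer invoked, so that only the single two-sided control of the $\ell_1$ objective on the descent cone survives.
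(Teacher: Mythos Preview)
Your proposal is correct and follows essentially the same route as the paper: specialize Theorem~\ref{th1-2} to $f=\|\cdot\|_1$ with $\eta=\|\x^*\|_1$ so the mismatch term vanishes, and control $m_1$ via the descent-cone inclusion from Corollary~\ref{co1}. Your additional observation about sharpening the probability from $1-6\exp(-mu^2/2)$ to $1-2\exp(-mu^2/2)$---by noting that $\h_2=0$ makes the second application of Lemma~\ref{L1-lemma1} superfluous---is a detail the paper's terse proof does not spell out, and your reasoning there is sound.
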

\begin{proof}
\textcolor{blue}
{Since $\eta=\|\x^*\|_1$, the ``mismatch error'' term vanishes. Since the constraint set is the same as in Corollary \ref{co1}, the effective dimension $m_1$ is also controlled by $s\log n$ and $\rho=\gamma \sqrt{\frac{m_1}{m}}\approx \sqrt{\frac{s\log n}{m}}$.}
Then, we obtain the conclusion \eqref{eq:lad1}.
\end{proof}

If the observations \textcolor{blue}{are}  corrupted  with adversarial corruption, i.e.,
\[
\y=A\x^*+\e_1+\e_2
\]
where $\|\x^*\|_0\leq s$, and $\|\e_1\|_0\leq \beta m$, \textcolor{blue}{then}  
it was shown in \cite{karma2019} that the constrained 
LAD \eqref{eq-l1-lad} is robust to any fraction of corruptions $\beta$ less than $\beta_0\approx 0.239$.
\textcolor{blue}{We note that only the case $\|\x^*\|_1 \le  \eta$ is studied in \cite{karma2019} and
the case $\|\x^*\|_1> \eta$ is not addressed.}
\textcolor{blue}{
Let $\hat{\x}$ be a solution to the following constrained LAD 
 \begin{equation*}
	\min\limits_{\|\x\|_1\leq \eta}\|\y-A\x\|_1.
\end{equation*}
Combing the methodology in the proof of \cite[Theorem 1.1]{karma2019}  with the technique in the proof of Theorem \ref{th1-2},
, we 
obtain that for $\|\x^*\|_1> \eta$, the following
\begin{equation}\label{eq:239lambda}
\|\hat{\x}-\x^*\|_2\lesssim \frac{1}{\epsilon-\frac{1}{\alpha}}\left(\frac{1}{m}\|\e_2\|_1+
\left(\sqrt{\frac{1}{2\pi}}+\frac{\epsilon}{2}\right)\|\mathcal{P}_{\K}(\x^*)-\x^*\|_2\right)+\frac{\|\x^*\|_1-\eta}{\alpha\sqrt{s}},
\end{equation}
holds  with a high probability as long as
$$
m
\gtrsim 
\frac{\alpha^2}{\epsilon^2}s\log
\left(\frac{en}{\alpha^2\epsilon s}\right),
$$
where 
$\epsilon>0$, $\beta<\beta_0-\epsilon$ with $\beta_0\approx 0.239$ and $\alpha\geq \frac{2}{\epsilon}$.
This estimation \eqref{eq:239lambda} can serve as a supplementary addition to  \cite[Theorem 1.1]{karma2019}.
}

\subsection{Phase Retrieval}
\begin{theorem}\label{th2}
Let $\x^*$ be an arbitrary vector in $\mathbb{R}^n$.
\textcolor{blue}{For any $u>0$,
let $m_0=\M(f,\x^*,u)$, which is defined in Definition \ref{m0-1}.
Suppose that $A\in \mathbb{R}^{m\times n}$ is a Gaussian random matrix with independent $\bma_i\sim \mathcal{N}(0,I_n)$ rows. 
Let $\hat{\x}$ be a solution to \eqref{non-LS} with $\y=|A\x^*|+\bm{e}$. 
If 
 \[
 m\gtrsim m_0,
 \]
then 
  \begin{equation}\label{th2-eq1}
  \min\{\|\hat{\x}-\x^*\|_2,\|\hat{\x}+\x^*\|_2\}\leq \left(\frac{4}{v_0}\rho+\frac{4}{v_0}+1\right)\|\bm{P}_{\mathcal{K}}(\x^*)
  - \x^*
  \|_2+\frac{4}{v_0}\frac{\|\e\|_2}{\sqrt{m}}, 
  \quad 
  \text{with}
  \quad 
  \rho\gtrsim\sqrt{\frac{m_0}{m}},
\end{equation}
holds   with probability at least 
$1-2\exp(-u^2)-2\exp
\left(\frac{v_0^2}{8}m\right)$ for all $f(\x^*)\geq\eta$.}
Suppose that
$f$ is absolutely \textcolor{blue}{homogeneous} and satisfies
$0<f(\x^*)<\eta$. 
\textcolor{blue}{Denote}
\[
m_0'=\M(f,\x^{\eta},u),
\quad 
v_0=\frac{1}{32e}\sqrt{\frac{\pi}{2}}
\left(1-\frac{1}{4\sqrt{\pi}}\right)\approx 0.0124.
\]
If
\[
m\gtrsim m'_0, 
\]
then for any $u>0$, 
\begin{align}\label{th2-eq2}
	\min\{\|\hat{\x}-\x^*\|_2,\|\hat{\x}+\x^*\|_2\} 
	\leq
	\left(\frac{24}{v_0^2}\rho'+\frac{3}{v_0}+1\right)\left(\frac{\eta}{f(\x^*)}-1\right)\|\x^*\|_2
	+
	\left(\frac{16}{v_0^2}\rho'+\frac{2}{v_0}\right)\frac{\|\e\|_2}{\sqrt{m}},   \quad \text{with}\quad \rho'\gtrsim\sqrt{\frac{m'_0}{m}},
\end{align}
 holds with  probability  at least 
$
1-2\exp(-u^2)-2\exp
\left(\frac{v_0^2}{8}m\right).
$
\end{theorem}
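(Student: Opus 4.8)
The plan is to mirror the two-case structure of Theorem \ref{th1}, replacing the linear isometry estimates by the amplitude lower bound furnished by Lemma \ref{lem4}, and to absorb the unavoidable global sign ambiguity $|A\x|=|A(-\x)|$ into the quantity $D:=\min\{\|\hat{\x}-\x^*\|_2,\|\hat{\x}+\x^*\|_2\}$. \textbf{Case 1 ($f(\x^*)\ge\eta$).} First I would use optimality of $\hat{\x}$ for \eqref{non-LS} against the competitor $\bm{P}_{\mathcal{K}}(\x^*)$, expanding $\|\y-|A\hat{\x}|\|_2^2\le\|\y-|A\bm{P}_{\mathcal{K}}(\x^*)|\|_2^2$ with $\y=|A\x^*|+\e$, to obtain
\[
\||A\x^*|-|A\hat{\x}|\|_2^2 \le \||A\x^*|-|A\bm{P}_{\mathcal{K}}(\x^*)|\|_2^2 + 2\langle\e,\,|A\hat{\x}|-|A\bm{P}_{\mathcal{K}}(\x^*)|\rangle.
\]
The crux is a lower bound for the left-hand side. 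Using the scalar identity $(|a|-|b|)^2=\min\{(a-b)^2,(a+b)^2\}$ rowwise, I would split $[m]$ into the indices where $(\bma_i^{\tp}(\hat{\x}-\x^*))^2$ is the smaller term and those where $(\bma_i^{\tp}(\hat{\x}+\x^*))^2$ is; the larger block $\Omega$ satisfies $|\Omega|\ge m/2$, so with $\bm{w}$ the corresponding difference or sum vector, Lemma \ref{lem4} gives $\||A\x^*|-|A\hat{\x}|\|_2^2\ge\min_{|\Omega|\ge m/2}\|A_\Omega\bm{w}\|_2^2\ge\tfrac{v_0^2}{4}m\,D^2$. Since $\hat{\x}-\x^*\in\mathcal{C}_f(\x^*)$ (because $f(\hat{\x})\le\eta\le f(\x^*)$), upgrading this from a fixed to the random direction is done by covering $\mathcal{C}_f(\x^*)\cap\mathcal{S}^{n-1}$, which is exactly what $m\gtrsim m_0$ pays for.

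\textbf{Bounding the right-hand side and solving.} For the two terms on the right I would write $|A\hat{\x}|-|A\bm{P}_{\mathcal{K}}(\x^*)|=(|A\hat{\x}|-|A\x^*|)+(|A\x^*|-|A\bm{P}_{\mathcal{K}}(\x^*)|)$. Setting $S:=\||A\x^*|-|A\hat{\x}|\|_2^2$, the first inner product is controlled \emph{without} a spurious $\|\x^*\|_2$ factor by Cauchy--Schwarz and the identity above, namely $\langle\e,|A\hat{\x}|-|A\x^*|\rangle\le\|\e\|_2\,\||A\hat{\x}|-|A\x^*|\|_2=\|\e\|_2\sqrt{S}$. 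The reverse triangle inequality $\big||A\x^*|-|A\bm{P}_{\mathcal{K}}(\x^*)|\big|\le|A(\x^*-\bm{P}_{\mathcal{K}}(\x^*))|$ together with Lemma \ref{lem1-0} on the cone (giving $\|A\h_2\|_2\le\tfrac32\sqrt{m}\|\h_2\|_2$ with $\h_2=\bm{P}_{\mathcal{K}}(\x^*)-\x^*$) bounds the projection term and the second inner product. Collecting everything yields a quadratic inequality in $\sqrt{S}$ of the form $S\le\tfrac94 m\|\h_2\|_2^2+2\|\e\|_2\sqrt{S}+3\sqrt{m}\|\e\|_2\|\h_2\|_2$; solving it and combining with $\sqrt{S}\ge\tfrac{v_0}{2}\sqrt{m}\,D$ gives \eqref{th2-eq1} after tracking the constants through $\rho\gtrsim\sqrt{m_0/m}$.

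\textbf{Case 2 ($0<f(\x^*)<\eta$).} Here I would exploit absolute homogeneity exactly as in Theorem \ref{th1}: set $\x^{\eta}=\frac{\eta}{f(\x^*)}\x^*$, so $f(\x^{\eta})=\eta$ and $\x^{\eta}\in\mathcal{K}$. Writing $c=f(\x^*)/\eta\in(0,1)$ gives $|A\x^*|=c|A\x^{\eta}|$, hence $\y=|A\x^{\eta}|+\tilde{\e}$ with $\tilde{\e}=(c-1)|A\x^{\eta}|+\e$. Now $\x^{\eta}$ plays the role of the true signal, its projection term vanishes ($\bm{P}_{\mathcal{K}}(\x^{\eta})=\x^{\eta}$), and the Case 1 argument with $m_0'=\M(f,\x^{\eta},u)$ produces $\min\{\|\hat{\x}-\x^{\eta}\|_2,\|\hat{\x}+\x^{\eta}\|_2\}\lesssim\frac{\|\tilde{\e}\|_2}{v_0\sqrt{m}}$. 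I would then estimate $\|\tilde{\e}\|_2\le(1-c)\|A\x^{\eta}\|_2+\|\e\|_2\le\tfrac32(1-c)\sqrt{m}\|\x^{\eta}\|_2+\|\e\|_2$ and use the identity $(1-c)\|\x^{\eta}\|_2=\big(\tfrac{\eta}{f(\x^*)}-1\big)\|\x^*\|_2=\|\x^{\eta}-\x^*\|_2$; a final triangle inequality $D\le\min\{\|\hat{\x}-\x^{\eta}\|_2,\|\hat{\x}+\x^{\eta}\|_2\}+\|\x^{\eta}-\x^*\|_2$ converts the $\x^{\eta}$-bound into \eqref{th2-eq2}.

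\textbf{Main obstacle.} The delicate point is the uniform amplitude lower bound. Lemma \ref{lem4} is stated for a \emph{fixed} vector, whereas $\hat{\x}$ depends on $A$, and the sign ambiguity forces the estimate to come out in terms of $D=\min\{\|\hat{\x}-\x^*\|_2,\|\hat{\x}+\x^*\|_2\}$ rather than either norm alone. Handling both issues — the covering of the descent cone that upgrades Lemma \ref{lem4} to all admissible directions (including the "$+$'' direction arising when row signs disagree), and keeping the probability of the resulting event near $2\exp(-v_0^2 m/8)$ — is where the argument is most technical, and it is precisely this step that fixes the sample requirement $m\gtrsim m_0$.
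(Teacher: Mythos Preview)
Your overall strategy is sound and reaches the right conclusions, but it differs from the paper's argument in two notable places, and your Case 1 is more laborious than necessary.

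\textbf{Case 1.} The paper does \emph{not} center the sign partition at $\x^*$ and does \emph{not} expand the square. Instead it partitions $[m]$ according to the signs of $\langle\bma_j,\hat\x\rangle$ versus $\langle\bma_j,\bm P_{\mathcal K}(\x^*)\rangle$, so that on the majority block $S_1$ one has the exact equality $\big\||A_{S_1}\hat\x|-|A_{S_1}\bm P_{\mathcal K}(\x^*)|\big\|_2=\|A_{S_1}\h_1\|_2$ with $\h_1=\hat\x-\bm P_{\mathcal K}(\x^*)$. Then, using the \emph{non-squared} optimality inequality $\||A\hat\x|-\y\|_2\le\||A\bm P_{\mathcal K}(\x^*)|-\y\|_2$ together with the triangle inequality, the paper gets directly
\[
\|A_{S_1}\h_1\|_2\le 2\|A\h_2\|_2+2\|\e\|_2,\qquad \h_2=\bm P_{\mathcal K}(\x^*)-\x^*,
\]
and combines this with Lemma \ref{lem4} and Lemma \ref{lem1-0}. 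No quadratic in $\sqrt S$ is solved, and no inner product term $\langle\e,\cdot\rangle$ needs to be estimated. Your squared-expansion route works, but it is exactly the machinery the paper reserves for Case 2; for Case 1 it is avoidable.

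\textbf{Case 2.} Your reduction ``apply Case 1 with true signal $\x^{\eta}$ and effective noise $\tilde\e$'' is correct and slick, but the paper instead works directly: it partitions by the signs of $\langle\bma_j,\hat\x\rangle$ relative to $\langle\bma_j,\x^{\eta}\rangle$ into $S_1,\dots,S_4$, uses the squared optimality inequality to obtain $\|A_{S_{12}}\h_1\|_2^2\le 2\langle\h_1,A_{S_1}^{\tp}\tilde\e_{S_1}-A_{S_2}^{\tp}\tilde\e_{S_2}\rangle+\|\tilde\e_{S_{12}^c}\|_2^2$, bounds the cross term via Lemma \ref{innerproduct} on the cone $\mathcal C_f(\x^{\eta})$, and completes the square. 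This is where the $\rho'$ appears with the $1/v_0^2$ prefactors in \eqref{th2-eq2}. Your route gives the same bound after the final triangle inequality $D\le\min\{\|\hat\x-\x^{\eta}\|_2,\|\hat\x+\x^{\eta}\|_2\}+\|\x^{\eta}-\x^*\|_2$.

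\textbf{On your ``main obstacle.''} You correctly flag that Lemma \ref{lem4} is stated for a fixed vector while you apply it to the data-dependent $\hat\x\pm\x^*$. The paper applies the same lemma to $\h_1=\hat\x-\bm P_{\mathcal K}(\x^*)$ (also random) without an explicit covering step, so on this point your proposal is no less rigorous than the paper. Your covering idea for the ``$-$'' direction is fine since $\hat\x-\x^*\in\mathcal C_f(\x^*)$; note however that for the ``$+$'' direction the vector $\hat\x+\x^*$ is \emph{not} in $\mathcal C_f(\x^*)$ in general (Case 1 does not assume absolute homogeneity), so the cone you would cover needs to be identified carefully---this is a genuine technical wrinkle your sketch leaves open. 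The paper's choice to partition against $\bm P_{\mathcal K}(\x^*)$ rather than $\x^*$ does not remove this difficulty, but it does make the subsequent algebra cleaner.
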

\begin{proof}
    \textbf{Case 1: $f(\x^*)\geq\eta$.}
Denote the index set
\[
  S_1  =\{j:\langle \bma_j,\hat{\x}\rangle\langle \bma_j,\bm{P}_{\mathcal{K}}(\x^*)\rangle >0\},\quad 
  S_2  =\{j:\langle \bma_j,\hat{\x}\rangle\langle \bma_j,\bm{P}_{\mathcal{K}}(\x^*)\rangle <0\}.
\]
Without loss of generality, we assume that $\sharp(S_1)=\beta m\geq m/2$~(otherwise, we can assume that
$\sharp(S_2)\geq m/2$).
\textcolor{blue}{
	We set $\h^-=\hat{\x}-\x^*$ and $\h^+=\hat{\x}+\x^*$.
Similar to the proof of Theorem \ref{th1}, denote }
\[
\h^-=\h_1+\h_2,
\]
where \[
\h_1=\hat{\x}-\bm{P}_{\mathcal{K}}(\x^*)
\quad
\textcolor{blue}
{\text{and}}
\quad
\h_2=\bm{P}_{\mathcal{K}}(\x^*)-\x^*.
\]
For $S_1$, we have
\begin{align}\label{non-eq1}
  \|A_{S_1}\h_1\|_2 & =\|A_{S_1}\hat{\x}- A_{S_1}\bm{P}_{\mathcal{K}}(\x^*)\|_2\notag\\
   &=\||A_{S_1}\hat{\x}|- |A_{S_1}\bm{P}_{\mathcal{K}}(\x^*)|\|_2 \notag \\
   &\leq \||A\hat{\x}|- |A\bm{P}_{\mathcal{K}}(\x^*)|\|_2\notag\\
   &\leq \||A\hat{\x}|-\y\|_2+\| |A\bm{P}_{\mathcal{K}}(\x^*)|-\y\|_2\notag\\
   &\leq 2\| |A\bm{P}_{\mathcal{K}}(\x^*)|-\y\|_2\notag\\
   &=2\| |A\bm{P}_{\mathcal{K}}(\x^*)|-|A\x^*|-\e\|_2\notag\\
   &\leq 2\| |A\bm{P}_{\mathcal{K}}(\x^*)|-|A\x^*|\|_2+2\|\e\|_2\notag\\
   &\leq 2\| A\h_2\|_2+2\|\e\|_2,
\end{align}
where the fifth inequality holds based on the fact that $\hat{\x}$ is the solution of \eqref{non-LS}.

Since \textcolor{blue}{$f(\h_2+\x^*)=f(\bm{P}_{\mathcal{K}}(\x^*))\leq \eta\leq f(\x^*)$,}
the normalized vector $\h_2/\|\h_2\|_2$ lies in the set
$T=\mathcal{C}_f(\x^*)\cap \mathcal{S}^{n-1}$, \textcolor{blue}{Then  Lemma \ref{lem1-0} implies that}
\begin{equation}\label{non-eq2}
  \|A\h_2\|_2\leq (C(\omega(\T)+u)+\sqrt{m})\|\h_2\|_2\leq C(\sqrt{m_0}+\sqrt{m})\|\h_2\|_2
\end{equation}
holds with probability at least $1-2\exp(-u^2)$.

In addition, Lemma \ref{lem4} implies that
\begin{equation}\label{non-eq3}
  \|A_{S_1}\h_1\|_2\geq \frac{v_0}{2}\sqrt{m}\|\h_1\|_2
\end{equation}
holds with probability at least $1-2\exp\left(-\frac{v_0^2}{8}m\right)$. 

Therefore, by \eqref{non-eq1}, \eqref{non-eq2} and \eqref{non-eq3}, we have
\begin{align*}
  \|\h^-\|_2 &\color{blue}{=\|\h_1+\h_2\|_2} \leq \|\h_1\|_2+\|\h_2\|_2 \\
  &\leq \frac{2}{v_0\sqrt{m}}\|A_{S_1}\h_1\|_2+\|\h_2\|_2 \\
   & \leq \frac{4}{v_0\sqrt{m}}(\|A\h_2\|_2+\|\e\|_2)+ \|\h_2\|_2\\
   & \leq \frac{4}{v_0\sqrt{m}}\left(C\sqrt{m_0}+\sqrt{m}\right) \|\h_2\|_2+\frac{4}{v_0\sqrt{m}}\|\e\|_2+\|\h_2\|_2 \\
   &\leq \left(\frac{4}{v_0}\rho+\frac{4}{v_0}+1\right)\|\h_2\|_2+\frac{4}{v_0}\frac{\|\e\|_2}{\sqrt{m}}
\end{align*}
holds with probability at least $1-2\exp(-u^2)-
2\exp
\left(\frac{v_0^2}{8}m\right)$.
For the case where $\sharp(S_2)\geq m/2$, we can conclude that
$\h^+$ shares
the same upper bound using a similar method.

\textbf{Case 2: $0<f(\x^*)<\eta$.}
We denote
$\x^{\eta}=\frac{\eta}{f(\x^*)}\x^*$ \textcolor{blue}{ and $
	\tilde{\e}=|A\x^*|-|A\x^{\eta}|+\e  
	$}. 
Since $f$ is absolutely \textcolor{blue}{homogeneous}, 
\textcolor{blue}{ we have}
\[
f(\x^{\eta})=\eta \frac{f(\x^*)}{f(\x^*)}=\eta.
\]
Furthermore,
\begin{equation*}
\y
=|A\x^*|+\e
=|A\x^{\eta}|+(|A\x^*|-|A\x^{\eta}|+\e)
=|A\x^{\eta}|+\tilde{\e},
\end{equation*}
where 
$\tilde{\e}$
can be viewed as the additive noise on the measurements $|A\x^{\eta}|$.

Since $\hat{\x}$ is the solution of \eqref{non-LS}, we have
\begin{equation}\label{phcase2-1}
\||A\hat{\x}|-\y\|_2^2\leq \||A\x^{\eta}|-\y\|_2^2.
\end{equation}
\textcolor{blue}{Different from Case 1, we }set
\begin{align*}
  S_1 & =\{j:{\rm sign}(\langle \bma_j,\hat{\x}\rangle)=1,{\rm sign}(\langle \bma_j,\x^{\eta}\rangle)=1\},\\
  S_2 & =\{j:{\rm sign}(\langle \bma_j,\hat{\x}\rangle)=-1,{\rm sign}(\langle \bma_j,\x^{\eta}\rangle)=-1\},\\
  S_3&=\{j:{\rm sign}(\langle \bma_j,\hat{\x}\rangle)=1,{\rm sign}(\langle \bma_j,\x^{\eta}\rangle)=-1\},\\
  S_4&=\{j:{\rm sign}(\langle \bma_j,\hat{\x}\rangle)=-1,{\rm sign}(\langle \bma_j,\x^{\eta}\rangle)=1\}.
\end{align*}
Without loss of generality, we assume that 
$|(S_1\cup S_2)|\geq m/2$. 
\textcolor{blue}{We also denote }
$$
\h^-=\hat{\x}-\x^*\quad \text{and} 
\quad  
\h^+=\hat{\x}+\x^*.
$$ 
\textcolor{blue}{In addition, we set
	\[
	\h^-=\h_1+\h_2,
	\]
	where
\[
\h_1=\hat{\x}-\x^{\eta}
\quad
\text{and}
\quad
\h_2=\x^{\eta}-\x^*.
\]
}
\textcolor{blue}{Then}
\[
\||A\hat{\x}|-\y\|_2^2
=\||A\hat{\x}|-|A\x^{\eta}|-\tilde{\e}\|_2^2
\geq \|A_{S_1}\h_1-\tilde{\e}_{S_1}\|_2^2+\|A_{S_2}\h_1+\tilde{\e}_{S_2}\|_2^2.
\]
This, together with \eqref{phcase2-1},  yields
\begin{equation}\label{non-eq41}
\|A_{S_1}\h_1-\tilde{\e}_{S_1}\|_2^2+\|A_{S_2}\h_1+\tilde{\e}_{S_2}\|_2^2\leq \|\tilde{\e}\|_2^2.
\end{equation}
\textcolor{blue}
{By expanding the squared expression in \eqref{non-eq41} and rearranging the terms, we derive}
\begin{align}\label{non-eq4}
\|A_{S_{12}}\h_1\|_2^2
\leq 2\langle \h_1,A_{S_1}^T\tilde{\e}_{S_1}-A_{S_2}^T\tilde{\e}_{S_2}\rangle+\|\tilde{\e}_{S_{12}^c}\|_2^2.
\end{align}
In addition, 
 Lemma \ref{lem4} implies that
 \begin{equation}\label{non-eq5}
   \|A_{S_{12}}\h_1\|_2^2\geq \frac{v_0^2}{4}m\|\h_1\|_2^2
 \end{equation}
holds  with probability at least 
$1-2\exp\left(-\frac{v_0^2}{8}m\right)$.

Next, we estimate \textcolor{blue}{the inner product term in \eqref{non-eq4}}
\[
\left\langle\h_1,A_{S_1}^T\tilde{\e}_{S_1}-A_{S_2}^T\tilde{\e}_{S_2}
\right\rangle.
\]
Since \textcolor{blue}{$f(\h_1+\x^{\eta})=f(\hat{\x})\leq \eta=f(\x^{\eta})$, then}
the normalized vector of $\h_1/\|\h_1\|_2$  lies in the set
 $\T'=\mathcal{C}_f(\x^{\eta})\cap \mathcal{S}^{n-1}$,
 by using Lemma \ref{innerproduct}, we have
\begin{equation}\label{non-eq6}
  \langle\h_1,A_{S_1}^T\tilde{\e}_{S_1}-A_{S_2}^T\tilde{\e}_{S_2}\rangle
  \leq 2(\omega(\T')+u)\|\tilde{\e}\|_2\|\h_1\|_2
  \leq 2\sqrt{m_0'}\|\tilde{\e}\|_2\|\h_1\|_2,
\end{equation}
 holds with probability at least $1-2\exp(-Cu^2)$. 
 
Substituting \eqref{non-eq5},
\eqref{non-eq6} into equation \eqref{non-eq4}, we have
\begin{align*}
\frac{v_0^2m}{4}\|\h_1\|_2^2
&\leq \|A_{S_{12}}\h_1\|_2^2\\
&\leq
2\langle \h_1,A_{S_1}^T\tilde{\e}_{S_1}-A_{S_2}^T\tilde{\e}_{S_2}\rangle+\|\tilde{\e}_{S_{12}^c}\|_2^2\\
&\leq 4\sqrt{m_0'}\|\tilde{\e}\|_2\|\h_1\|_2+\|\tilde{\e}_{S_{12}^c}\|_2^2,
\end{align*} 
\textcolor{blue}{By completing the square on both sides of the inequality, we obtain}
\begin{equation}\label{non-eq71}
\left(\frac{v_0}{2}\sqrt{m}\|\h_1\|_2-4\sqrt{\frac{m'_0}{m}}\frac{\|\tilde{\e}\|_2}{v_0}\right)^2\leq \|\tilde{\e}\|_2^2+\left(4\sqrt{\frac{m'_0}{m}}\frac{\|\tilde{\e}\|_2}{v_0}\right)^2
\leq \left(\left(v_0+4\sqrt{\frac{m'_0}{m}}\right)\frac{\|\tilde{\e}\|_2}{v_0}\right)^2,
\end{equation}
where the last inequality is implied by $a^2+b^2\leq (a+b)^2$.
 \textcolor{blue}{Taking square roots of both sides of \eqref{non-eq71} yields}
 \begin{equation}\label{non-eq7}
   \|\h_1\|_2\leq \frac{2}{v_0}\left(\frac{8}{v_0}\sqrt{\frac{m_0'}{m}}+1\right)\frac{\|\tilde{\e}\|_2}{\sqrt{m}}.
 \end{equation}
\textcolor{blue}{In the following, }we  estimate $\|\tilde{\e}\|_2$  by
\begin{align}\label{non-eq8-0}
   \|\tilde{\e}\|_2& =\||A\x^*|-|A\x^{\eta}|+\e\|_2 \notag\\
  & \leq \||A\x^*|-|A\x^{\eta}|\|_2+\|\e\|_2 \notag\\
   & \leq \|A(\x^*-\x^{\eta})\|_2+\|\e\|_2 
\end{align}
By the assumption on $m$ and \textcolor{blue}{taking 
$\x=\frac{\x^*-\x^{\eta}}{\|\x^*-\x^{\eta}\|_2}$ in
Lemma \ref{lem1-0}, we have
\begin{align}\label{non-eq8-1}
  \|A(\x^*-\x^{\eta})\|_2&\leq   \left(C(\omega(\T)+u)+\sqrt{m}\right)\|(\x^*-\x^{\eta})\|_2\notag\\
  &\leq \left(C\sqrt{m'_0}+\sqrt{m}\right)\|(\x^*-\x^{\eta})\|_2  
\leq \frac{3\sqrt{m}}{2}\|(\x^*-\x^{\eta})\|_2. 
\end{align}
Therefore, by \eqref{non-eq8-0} and \eqref{non-eq8-1}, we have
\begin{equation}\label{non-eq8}
    \|\tilde{\e}\|_2\leq \frac{3}{2}\sqrt{m}\|\x^*-\x^{\eta}\|_2+\|\e\|_2.
\end{equation}
}
Substituting \eqref{non-eq8} into \eqref{non-eq7}, we obtain
\begin{align}\label{non-eq8final}
\|\h_1\|_2&\leq
\left(\frac{24}{v_0^2}\sqrt{\frac{m_0'}{m}}+\frac{3}{v_0}\right)\|\x^*-\x^{\eta}\|_2+
\left(\frac{16}{v_0^2}\sqrt{\frac{m_0'}{m}}+\frac{2}{v_0}\right)\frac{\|\e\|_2}{\sqrt{m}}\notag\\
&=\left(\frac{24}{v_0^2}\sqrt{\frac{m_0'}{m}}+\frac{3}{v_0}\right)\left(\frac{\eta}{f(\x^*)}-1\right)\|\x^*\|_2+
\left(\frac{16}{v_0^2}\sqrt{\frac{m_0'}{m}}+\frac{2}{v_0}\right)\frac{\|\e\|_2}{\sqrt{m}}.
\end{align}
\textcolor{blue}{
The proof is completed by \eqref{non-eq8final} and the triangle inequality
\begin{align*}
\|\hat{\x}-\x^*\|_2&=\|\h^-\|_2\leq \|\h_1\|_2+\|\h_2\|_2\\
&=\|\hat{\x}-\x^{\eta}\|_2+\|\x^{\eta}-\x^*\|_2\\
&\leq \left(\frac{24}{v_0^2}\sqrt{\frac{m_0'}{m}}+\frac{3}{v_0}+1\right)\left(\frac{\eta}{f(\x^*)}-1\right)\|\x^*\|_2+
\left(\frac{16}{v_0^2}\sqrt{\frac{m_0'}{m}}+\frac{2}{v_0}\right)\frac{\|\e\|_2}{\sqrt{m}}.
\end{align*}}
\textcolor{blue}{For the case where $|(S_3\cup S_4)|\geq m/2$, we can obtain the same bound for $\|\h^+\|_2$ by a similar method to above.}
\end{proof}

\textcolor{blue}{
When the parameters are chosen optimally, we obtain the following two corollaries regarding the sparse phase retrieval problem.}

\begin{corollary}\label{co-pr}
Let $\x^*$ be an arbitrary $s$-sparse vector in  $\mathbb{R}^n$.
Suppose that $A\in \mathbb{R}^{m\times n}$ is a Gaussian random matrix with independent $\bma_i\sim \mathcal{N}(0,I_n)$ rows.
Let $\hat{\x}\in \mathbb{R}^n$ be a solution to 
 \begin{equation}\label{l1-phase}
  \min_{\|\x\|_1\leq \|\x^*\|_1} \frac{1}{2}
  \left\|\y-|A\x|\right\|_2^2
\end{equation}
where $\y=|A\x^*|+\bm{e}$. If
$$
m\gtrsim s\log n,
$$
then 
\[
\min\{\|\hat{\x}-\x^*\|_2,\|\hat{\x}+\x^*\|_2\}\lesssim
\frac{\|\e\|_2}{\sqrt{m}}
\]
holds with
probability at least $1-3\exp(-Cm)$.
\end{corollary}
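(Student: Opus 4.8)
The plan is to specialize Theorem~\ref{th2} to $f=\|\cdot\|_1$ with the optimal tuning parameter $\eta=\|\x^*\|_1$, exactly paralleling the passage from Theorem~\ref{th1} to Corollary~\ref{co1}. First I would note that here $f(\x^*)=\|\x^*\|_1=\eta$, so $\x^*\in\K$ and therefore $\bm{P}_{\K}(\x^*)=\x^*$; this places us in Case~1 of Theorem~\ref{th2} and, crucially, makes the ``mismatch'' factor $\|\bm{P}_{\K}(\x^*)-\x^*\|_2$ in the bound \eqref{th2-eq1} vanish. What survives of \eqref{th2-eq1} is then
\[
\min\{\|\hat{\x}-\x^*\|_2,\|\hat{\x}+\x^*\|_2\}\leq \frac{4}{v_0}\,\frac{\|\e\|_2}{\sqrt{m}},
\]
and since $v_0\approx 0.0124$ is an absolute constant, the prefactor $4/v_0$ is likewise constant, which is precisely the asserted $\lesssim\|\e\|_2/\sqrt{m}$ estimate. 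Because the mismatch term is zero, the parameter $\rho$ drops out of the final bound entirely, so I would not need to control its magnitude beyond checking that the hypotheses of Theorem~\ref{th2} hold.

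Next I would translate the sample-complexity hypothesis $m\gtrsim m_0$ into $m\gtrsim s\log n$, which is identical to the computation already carried out for Corollary~\ref{co1}. The descent cone $\C_f(\x^*)$ of the $\ell_1$ norm at the $s$-sparse vector $\x^*$, intersected with $\mathcal{S}^{n-1}$, is contained in $\T_1^s=\{\x:\|\x\|_1\leq\sqrt{s},\ \|\x\|_2\leq 1\}$ by \cite[Lemma~10.5.3]{Vershynin2020}, and $\omega(\T_1^s)^2\lesssim s\log n$. Hence $m_0=\M(f,\x^*,u)=(\omega+u)^2\lesssim s\log n$ for an appropriate $u$, so that $m\gtrsim s\log n$ indeed implies $m\gtrsim m_0$.

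The only genuinely delicate point, and the main obstacle, is the probability accounting, since the corollary advertises a failure rate of order $\exp(-Cm)$ rather than the $\exp(-s\log n)$ rate of Corollary~\ref{co1}. In Case~1 of Theorem~\ref{th2} two failure events appear: the application of Lemma~\ref{lem1-0} at \eqref{non-eq2} costs $2\exp(-u^2)$, and the application of Lemma~\ref{lem4} at \eqref{non-eq3} costs $2\exp(-\tfrac{v_0^2}{8}m)$. To push both exponents to order $m$, I would take $u=c\sqrt{m}$ for a sufficiently small constant $c$; the sphere-concentration term then becomes $2\exp(-c^2 m)$, while $m_0\approx(\sqrt{s\log n}+c\sqrt{m})^2$ stays below a constant fraction of $m$ (using $m\gtrsim s\log n$ with a large enough implied constant together with $c$ small), so the hypothesis $m\gtrsim m_0$ is not broken. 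Setting $C=\min\{c^2,v_0^2/8\}$ and absorbing the integer multiplicities into the leading constant yields the stated probability $1-3\exp(-Cm)$. Once $u\propto\sqrt{m}$ is fixed in this balanced way, all remaining steps are direct substitutions into \eqref{th2-eq1}.
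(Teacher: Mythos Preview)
Your proposal is correct and follows essentially the same approach as the paper: specialize Theorem~\ref{th2} with $f=\|\cdot\|_1$ and $\eta=\|\x^*\|_1$, observe that the mismatch term $\|\bm{P}_{\K}(\x^*)-\x^*\|_2$ vanishes, and control $m_0$ via $\omega(\T_1^s)^2\lesssim s\log n$. In fact, your treatment of the probability accounting---explicitly choosing $u=c\sqrt m$ to balance the two failure events and obtain the $\exp(-Cm)$ rate---is more careful than the paper's proof, which simply asserts the final probability without spelling out this choice.
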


\begin{proof}
Since $\eta=\|\x^*\|_1$, \textcolor{blue}{the ``mismatch error'' terms in \eqref{th2-eq1} and \eqref{th2-eq2} vanish. Similar to the proof in Corollary \ref{co1},
the effective dimension $m_0$} is controlled by $s\log n$. 
Notice that when $m\gtrsim s\log n$, the value of $\rho'$ lies in the interval $(0,1)$, thus both the terms
$\frac{4}{v_0}$ in \eqref{th2-eq1} and $\frac{16}{v^2_0}\rho'+\frac{2}{v_0}$ in \eqref{th2-eq2} can be regraded as some constant.
Then we obtain from \textcolor{blue}{Theorem \ref{th2} that
\[\min\{\|\hat{\x}-\x^*\|_2,\|\hat{\x}+\x^*\|_2\}
\leq \frac{C}{v_0}\frac{\|\e\|_2}{\sqrt{m}}.
\]}
 \end{proof}
\noindent Corollary \ref{co-pr} is consistent with
\cite[Theorem I.5]{xu2020}, and the bound $\frac{\|\e\|_2}{\sqrt{m}}$ is proved to be \textcolor{blue}{sharp in \cite[Remark I.4]{xu2020}}.
\textcolor{blue}{\begin{corollary}\label{coL0-pr}
		Let $\x^*$ be an arbitrary $s$-sparse vector in  $\mathbb{R}^n$.
		Suppose that $A\in \mathbb{R}^{m\times n}$ is a Gaussian random matrix with independent $\bma_i\sim \mathcal{N}(0,I_n)$ rows.
		Let $\hat{\x}\in \mathbb{R}^n$ be a solution to 
		\begin{equation}\label{l0-phase}
			\min_{\|\x\|_0\leq  s} \frac{1}{2}
			\left\|\y-|A\x|\right\|_2^2
		\end{equation}
		where $\y=|A\x^*|+\bm{e}$. If
		$$
		m\gtrsim s\log n,
		$$
		then 
		\[
		\min\{\|\hat{\x}-\x^*\|_2,\|\hat{\x}+\x^*\|_2\}\lesssim
		\frac{\|\e\|_2}{\sqrt{m}}
		\]
		holds with
		probability at least $1-3\exp(-Cm)$.
\end{corollary}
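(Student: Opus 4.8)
The plan is to mirror the argument of Theorem \ref{th2} (Case 1), exploiting the fact that the $\ell_0$ constraint makes the mismatch term vanish and confines the error to the set of $2s$-sparse vectors. First I would observe that since $\x^*$ is $s$-sparse it belongs to the feasible set $\K=\{\x:\|\x\|_0\le s\}$, so $\bm{P}_{\K}(\x^*)=\x^*$ and $\|\bm{P}_{\K}(\x^*)-\x^*\|_2=0$; consequently the ``mismatch error'' term in \eqref{th2-eq1} is automatically zero. Moreover, because both the minimizer $\hat{\x}$ and the ground truth $\x^*$ satisfy the $\ell_0$ constraint, the difference $\h^-=\hat{\x}-\x^*$ (and likewise $\h^+=\hat{\x}+\x^*$) is $2s$-sparse.

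Following the proof of Theorem \ref{th2}, I would split $[m]$ according to the signs of $\langle\bma_j,\hat{\x}\rangle$ and $\langle\bma_j,\x^*\rangle$, assume without loss of generality that the agreement set $S_1$ has cardinality at least $m/2$, and use the optimality of $\hat{\x}$ together with $\y=|A\x^*|+\e$ to obtain the upper bound $\|A_{S_1}\h^-\|_2\le 2\|\e\|_2$ (the $\|A\h_2\|_2$ contribution from \eqref{non-eq1} disappears since $\h_2=0$). It then remains to produce a matching lower bound of the form $\|A_{S_1}\h^-\|_2\ge c\sqrt{m}\,\|\h^-\|_2$.

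The main obstacle is exactly this lower bound: Lemma \ref{lem4} furnishes it only for a \emph{fixed} vector, whereas $\h^-$ is a random $2s$-sparse vector depending on $A$ through $\hat{\x}$. I would therefore establish a uniform version of Lemma \ref{lem4} over the set $\Sigma_{2s}=\{\z:\|\z\|_0\le 2s,\ \|\z\|_2=1\}$ by a covering argument. Build an $\epsilon$-net $\N$ of $\Sigma_{2s}$; a standard volumetric estimate gives $\log|\N|\lesssim s\log n$. Applying Lemma \ref{lem4} to each net point and taking a union bound---the minimum over subsets $\Omega$ is already built into Lemma \ref{lem4}, so requiring the bound for all $|\Omega|\ge m/2$ costs nothing extra---shows that $\|A_\Omega\bar{\z}\|_2\ge\frac{v_0}{2}\sqrt{m}$ holds simultaneously for every $\bar{\z}\in\N$ and every admissible $\Omega$ with probability at least $1-2\exp(-\frac{v_0^2}{8}m+\log|\N|)$; the hypothesis $m\gtrsim s\log n$ with a sufficiently large absolute constant absorbs $\log|\N|$ and leaves a bound of the form $1-2\exp(-Cm)$. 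To pass from $\N$ to all of $\Sigma_{2s}$ I would note that any difference $\z-\bar{\z}$ is $4s$-sparse and invoke Lemma \ref{lem1-0} on the $4s$-sparse sphere (again of Gaussian width $\lesssim\sqrt{s\log n}$) to get $\|A(\z-\bar{\z})\|_2\le 2\sqrt{m}\,\epsilon$; choosing $\epsilon$ a small multiple of $v_0$ then gives the uniform estimate $\|A_\Omega\z\|_2\ge\frac{v_0}{4}\sqrt{m}\,\|\z\|_2$.

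With the uniform lower bound in hand, combining it with $\|A_{S_1}\h^-\|_2\le 2\|\e\|_2$ yields $\|\h^-\|_2\le\frac{8}{v_0}\frac{\|\e\|_2}{\sqrt{m}}$ (and symmetrically for $\h^+$ in the complementary case where $S_2$ dominates), which is precisely the claimed bound up to the absolute constant. Collecting the failure probabilities of the net union bound, the $4s$-sparse upper bound from Lemma \ref{lem1-0}, and the deviation estimate for the chosen sign set gives the stated success probability $1-3\exp(-Cm)$. The only genuinely new ingredient relative to Corollary \ref{co-pr} is the treatment of the non-convex, non-absolutely-homogeneous $\ell_0$ constraint; its saving grace is that $\x^*$ lies exactly in $\K$, so Case 2 of Theorem \ref{th2}, which requires absolute homogeneity, is never invoked.
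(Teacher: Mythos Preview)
Your proposal is correct and follows the same essential path as the paper---specializing the Case~1 argument of Theorem~\ref{th2} to the situation $\x^*\in\K$, so that $\bm{P}_{\K}(\x^*)=\x^*$ and the mismatch term vanishes.  The paper's own proof is a one-line appeal to the Gaussian-width bound $\omega(\T_0^s)\lesssim\sqrt{s\log n}$ combined with the proof of Corollary~\ref{co-pr}; in other words it treats Theorem~\ref{th2} as a black box and simply observes that the effective dimension for the $\ell_0$ constraint obeys the same $s\log n$ bound as for $\ell_1$.  Your version is more explicit: you re-derive the inequality chain and, crucially, insert an $\epsilon$-net union bound so that Lemma~\ref{lem4} holds \emph{uniformly} over all $2s$-sparse directions rather than for a single fixed vector.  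That refinement is legitimate and buys you a fully self-contained argument that never touches Case~2 (and hence never needs absolute homogeneity, which $\|\cdot\|_0$ lacks); the paper's route is shorter but tacitly relies on applying Lemma~\ref{lem4} to the random vector $\h_1=\hat{\x}-\x^*$, a point you rightly flag and resolve.
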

\begin{proof}
	This proof can be directly established by combining the following fact
	that 
	$
	\omega(\T_1^s)\leq 2\omega(\T_0^s)\leq C\sqrt{s \log n}
	$
	where 
	$
	\T_0^s=\{\x:~\|\x\|_0\leq s,\|\x\|_2\leq 1\}$, and the proof of Corollary \ref{co-pr}.
	\end{proof}}
\textcolor{blue}{
\begin{remark}
	For sparse phase retrieval problem \eqref{l0-phase}, iterative algorithms such as Iterative Hard Thresholding \cite{cai2022}, Truncated Amplitude Flow \cite{wang2017}, CoPRAM \cite{Jagatap2019} typically consist of two stages: initialization and refinement. The initialization stage employs spectral initialization, requiring $\mathcal{O}(s^2\log n)$
	Gaussian samples to obtain a sufficiently accurate estimate. In the refinement stage, various algorithms further optimize the estimate, most achieving linear convergence with only $\mathcal{O}(s\log n)$
	samples. Thus, the total sample complexity is dominated by the initialization stage. 
	The gap between the model's  recovery guarantee with $\mathcal{O}(s\log n)$
measurements and the algorithm's actual sample complexity (inflated by initialization) constitutes a key challenge to be addressed in subsequent work.
\end{remark}
}

\textcolor{blue}
{Comparing Theorem \ref{th2} with  Corollary \ref{co-pr} we see that there are extra terms 
$$
\left(\frac{4}{v_0}\rho+\frac{4}{v_0}+1\right)\|\bm{P}_{\mathcal{K}}(\x^*)
- \x^*\|_2,\quad \text{for} \quad  f(\x^*)\geq \eta,
$$ 
and
$$
\left(\frac{24}{v_0^2}\rho'+\frac{3}{v_0}+1\right)\left(\frac{\eta}{f(\x^*)}-1\right)\|\x^*\|_2, 
\quad \text{for} \quad
0<f(\x^*)<\eta.
$$}
The extra terms are aligned with those in the linear case, resulting from the mismatch between the tuning parameter $\eta$ and $f(\x^*)$. 
Theorem \ref{th2} demonstrates that as we increase the data complexity $m$, the rate $\rho$ or $\rho'$
decreases, leading to a reduction in mismatch error.
Theorem \ref{th2} also illustrates the trade-off between the ``mismatch error'' and the complexity of the data.
To demonstrate this, let us consider the scenario where $f(\x^*)\geq\eta$, we use 
$\epsilon$ to denote the upper bound of the distance between $\hat{\x}$ and $\x^*$:
\[
\min\{\|\hat{\x}-\x^*\|_2,\|\hat{\x}+\x^*\|_2\}\leq \epsilon=\left(\frac{4}{v_0}\rho+\frac{4}{v_0}+1\right)
\|
\bm{P}_{\mathcal{K}}(\x^*)
- \x^*
\|_2+\frac{4}{v_0}\frac{\|\e\|_2}{\sqrt{m}}.
\]
One can observe that a larger number of measurements leads to smaller values of the rate $\rho$, 
which in turn leads to a smaller mismatch error 
\[
\left(\frac{4}{v_0}\rho+\frac{4}{v_0}+1\right)
\|
\bm{P}_{\mathcal{K}}(\x^*)-\x^*
\|_2.
\]
Meanwhile, since for given $\epsilon$, the following inequality holds:
\[
\left(\frac{4}{v_0}\rho+\frac{4}{v_0}+1\right)\|\bm{P}_{\mathcal{K}}(\x^*)
- \x^*\|_2\leq \epsilon
\]
by some simple computation, we have
\begin{equation}\label{dis1}
    m\geq m_0\frac{16(\mismatch)^2}{(\epsilon v_0-(4+v_0)\mismatch)^2}
\end{equation}
where we use ``mismatch'' to denote $\|\bm{P}_{\mathcal{K}}(\x^*)
- \x^*\|_2$. Therefore, we can deduce that
less $m$ is required if the 
mismatch term $\|\bm{P}_{\mathcal{K}}(\x^*)
- \x^*\|_2$ goes to zero by formula \eqref{dis1}, which is similar to the linear estimation case.
\begin{remark}
    Analogous to the context of linear estimation, by specifying a specific value for $C$ in the condition $m\geq Cm_0$, we can derive theorems that are comparable to those presented in Theorem \ref{th1-c}, pertaining specifically to  Theorem \ref{th2}. \textcolor{blue}{Parallel to the case in linear estimation, analogous parameter sensitivity analyses exist for adversarial phase retrieval via nonlinear least absolute deviation under heavy-tailed noise \cite{huang2023}, which is a natural generalization we omit here.}
\end{remark}

\section*{Acknowledgments}
The authors wish to express their thanks to Professor 
Zhiqiang Xu for helpful discussions regarding related theories. The authors also would like to thank the referees for valuable comments.

\bibliographystyle{IEEEtran}

\bibliography{references}

@article{zhang2017,
	author = {Zhang, H.  and Zhou,Y. and  Liang,Y. and Chi,Y.},
	title = {A nonconvex approach for phase retrieval: Reshaped
	Wirtinger flow and incremental algorithms},
	journal = {J. Mach. Learn. Res.,},
	volume = {18},
	year = {2017}
}

@Inproceedings{hyder2019,
	title={Alternating phase projected gradient descent with generative priors for solving compressive phase retrieval},
	author={Hyder, R. and Shah, V. and  Hegde, C. et al.},
	booktitle={ICASSP 2019-2019 IEEE International Conference on Acoustics, Speech and Signal Processing (ICASSP)},
	pages={7705-7709},
	year={2019},
	organization={IEEE}
}

@article{Jagatap2019,
author = {Jagatap, G. and Hegde, C.},
title = {Sample-Efficient Algorithms for Recovering
Structured Signals from Magnitude-Only
Measurements},
journal = {IEEE Transactions on Information Theory},
volume = {65},
number = {7},
pages = {4434-4456},
year = {2019}
}

@article{xia2024-1,
author = {Xia, Yu and Xu, Zhiqiang},
title = {The performance of the amplitude-based model for complex phase retrieval},
journal = {Information and Inference: A Journal of the IMA},
volume = {13},
number = {1},
pages = {iaad053},
year = {2024}
}

@article{genzel2018mismatch,
	title={The Mismatch Principle: The Generalized Lasso Under Large Model Uncertainties},
	author={Genzel, Martin and Kutyniok, Gitta},
	journal={arXiv preprint arXiv:1808.06329},
	year={2018},
	url={https://arxiv.org/abs/1808.06329}
}

@article{liu2020sparsity,
	title={Sparsity-Free Compressed Sensing With Applications to Generative Priors},
	author={Liu, Y. and Wang, J. and Sun, Y. and Eldar, Y. C.},
	journal={IEEE Transactions on Information Theory},
	volume={66},
	number={12},
	pages={7923--7943},
	year={2020},
	publisher={IEEE}
}

@article{huang2023,
	author = {Huang, Gao and Li, Song and Xu, Hang},
	title = {Adversarial Phase Retrieval via Nonlinear Least
	Absolute Deviation},
	journal={ArXiv},
	year={2023},
	url={
	https://doi.org/10.48550/arXiv.2312.06190}
}

@article{wang2017,
author = {Wang, Gang and Zhang, Liang and Giannakis, G.B. and Akcakaya, M. and Chen, J.},
title = {Sparse Phase Retrieval via Truncated Amplitude Flow},
journal = {IEEE Transactions on Signal Processing},
volume = {66},
number = {2},
pages = {479-491},
year = {2017}
}

@article{cai2022,
author = {Cai, Jianfeng and Li, Jingzhi and Lu, Xiliang and You, Juntao},
title = {Sparse Signal Recovery from Phaseless Measurements via Hard Thresholding Pursuit},
journal = {Applied and Computational Harmonic Analysis},
volume = {56},
pages = {367-390},
year = {2022}
}

@article{chen1998,
author = {Chen, Scott Shaobing and Donoho, David L. and Saunders, Michael A.},
title = {Atomic Decomposition by Basis Pursuit},
journal = {SIAM Journal on Scientific Computing},
volume = {20},
number = {1},
pages = {33-61},
year = {1998}
}

@article{Tibshirani1996,
  title={Regression shrinkage and selection via the lasso},
  author={ Tibshirani, Robert },
  journal={Journal of the Royal Statistical Society, Series B},
  volume={58},
  number={1},
  year={1996}
}

@Inproceedings{Oymak2013,
	title={The squared--error of generalized LASSO: A precise analysis},
	author={Oymak, Samet and Christos, Thrampoulidis and Babak, Hassibi},
	booktitle={2013 51st Annual Allerton Conference on Communication, Control and Computing},
	pages={1002-1009},
	year={2013},
	organization={IEEE}
}

@Inproceedings{Tropp2015,
	title={Convex recovery of a structured signal from independent random linear measurements},
	author={J.A.Tropp},
	booktitle={Sampling Theory, a Renaissance},
	pages={67-101},
	year={2015},
	organization={Springer}
}

@Article{ver2016,
  title={The generalized lasso with non-linear observations},
  author={Plan, Yaniv and Vershynin, Roman},
  journal={IEEE Transactions on information theory},
  volume={62},
  number={3},
  pages={1528-1537},
  year={2016}
}

@article{Berk2021,
	title={Sensitivity of $l^1$ minimization to parameter choice},
	author={Berk,A. and Plan, Y. and Yilmaz, $\ddot{O}$.},
	journal={Information and Inference: A Journal of the IMA },
    volume={10},
	number={2},
	pages={397-453},
	year={2021},
}

@article{liaw2017,
  title={A Simple Tool for Bounding the Deviation
of Random Matrices on Geometric Sets},
  author={Liaw, Christopher and Mehrabian, Abbas and Plan, Yaniv and Vershynin, Roman},
  journal={Geometric Aspects of Functional Analysis},
  volume={2169},
  pages={277-299},
  year={2017}
}

@article{plan2014,
  title={Dimension Reduction by Random Hyperplane Tessellations},
  author={Plan, Yaniv and Vershynin, Roman},
  journal={Discrete Computational Geometry},
  volume={51},
  pages={438-461},
  year={2014}
}

@ARTICLE{Mahdi,
  author={Soltanolkotabi, Mahdi},
  journal={IEEE Transactions on Information Theory}, 
  title={Structured Signal Recovery From Quadratic Measurements: Breaking Sample Complexity Barriers via Nonconvex Optimization}, 
  year={2019},
  volume={65},
  number={4},
  pages={2374-2400}}

@book{Vershynin2020,
	title={High-Dimensional Probability An Introduction with Applications in Data Science},
	author={Vershynin,Roman},
	year={2020},
    publisher={Cambridge university press}	
}

@article{oymak2018,
  title={Sharp Time-Data Tradeoffs for Linear Inverse Problems},
  author={Oymak, Samet and Recht, Benjamin and Soltanolkotabi, Mahdi},
  journal={IEEE Transactions on information theory},
  volume={64},
  number={6},
  pages={4129-4158},
  year={2018}
}

@article{Bruckstein2009,
  title={From sparse solutions of systems of equations to sparse modeling of signals and images},
  author={Bruckstein, Alfred M and Donoho, David L and Elad, Michael},
  journal={SIAM Review},
  volume={51},
  number={1},
  pages={34-81},
  year={2009}
}

@article{Bahmani2013A,
	title={A unifying analysis of projected gradient descent for $l_p$--constrained least squares},
	author={ Bahmani, S.  and  Raj, B. },
	journal={Applied and Computational Harmonic Analysis},
	volume={34},
	number={3},
	pages={366-378},
	year={2013},
}

@article{BLUMENSATH2009265,
	title = {Iterative hard thresholding for compressed sensing},
	journal = {Applied and Computational Harmonic Analysis},
	volume = {27},
	number = {3},
	pages = {265-274},
	year = {2009},
	issn = {1063-5203},
	author = {Thomas Blumensath and Mike E. Davies},
}

@article{Mahdi2017,
author = {Oymak, Samet and Soltanolkotabi, Mahdi},
title = {Fast and Reliable Parameter Estimation from Nonlinear Observations},
journal = {SIAM Journal on Optimization},
volume = {27},
number = {4},
pages = {2276-2300},
year = {2017},
}

@article{xu2016,
  title={A strong restricted isometry property, with an application to phaseless compressed sensing},
  author={Voroninski, V. and Xu, Z.},
  journal={Applied and Computational Harmonic Analysis},
  volume={40},
  number={2},
  pages={386-395},
  year={2016}
}

@article{xu2020,
  title={The Estimation Performance of Nonlinear Least Squares for Phase Retrieval},
  author={Huang, Meng and Xu, Zhiqiang},
  journal={IEEE Transactions on Information Theory},
  volume={66},
  number={12},
  pages={7967-7977},
  year={2020}
}

@article{Chatterjee2014,
  title={Generalized Dantzig Selector: Application to the k-support Norm}, 
  author={Chatterjee, S. and Chen, S. and Banerjee, A.}, 
  journal={Advances in Neural Information Processing Systems}, 
    year={2014},  
}

@article{Plan2017,
  title={High-demensional Estimation with Geometric Constraints}, 
  author={Plan,Y. and Vershynin,R. and Yudovina,E.}, 
  journal={Information and Inference: A Journal of the IMA}, 
  volume={6},
  number={1},
  pages={1-40},
  year={2017}
}

@inproceedings{Yuan2014,
	title={Gradient Hard Thresholding Pursuit for Sparsity-Constrained Optimization},
	author={
    Yuan, Xiao-Tong and Li, Ping and Zhang,  Tong    },
	booktitle={International Conference on Machine Learning (PMLR)},
	pages={127-135},
	year={2014},
}

@article{Bahmani2013,
  title={Greedy Sparsity-Constrained Optimization},
  author={Bahmani, Sohail  and  Raj, Bhiksha Raj  and  Boufounos, Petros T. },
  journal={The Journal of Machine Learning Research},
  volume={14},
  number={1},
  pages={807-841},
  year={2013},
}

@incollection{foucart2013invitation,
	title={An invitation to compressive sensing},
	author={ Simon  Foucart and Holger Rauhut },
	booktitle={A mathematical introduction to compressive sensing},
	pages={1-39},
	year={2013},
	publisher={Springer}
}

@ARTICLE{Berk2022,
  author={Berk, A. and Plan, Y. and Yılmaz, \"O.},
  journal={IEEE Transactions on Information Theory}, 
  title={On the Best Choice of Lasso Program Given Data Parameters}, 
  year={2022},
  volume={68},
  number={4},
  pages={2573-2603},
}

@article{Berk2023,
author = {Berk, A. and Brugiapaglia, S. and Hoheisel, T.},
title = {LASSO Reloaded: A Variational Analysis Perspective with Applications to Compressed Sensing},
journal = {SIAM Journal on Mathematics of Data Science},
volume = {5},
number = {4},
pages = {1102-1129},
year = {2023},
}

@article{li2023,
	title={Adaptive Iterative Hard Thresholding for Least Absolute Deviation Problems with Sparsity Constraints},
	author={ Song Li  and  Dekai Liu  and  Yi Shen},
	journal={Journal of Fourier Analysis and Applications},
	volume={29},
	number={5},
	year={2023}
}

@ARTICLE{liu2019,
  author={ Dekai  Liu and Song  Li and Yi Shen},
  journal={IEEE Transactions on Information Theory}, 
  title={One-Bit Compressive Sensing With Projected Subgradient Method Under Sparsity Constraints}, 
  year={2019},
  volume={65},
  number={10},
  pages={6650-6663}
}

@ARTICLE{Ver2015,
  author={Vershynin Roman},
  journal={Sampling Theory, a Renaissance:Compressive Sensing and Other Developments}, 
  title={Estimation in High Dimensions: A Geometric Perspective}, 
  year={2015},
  pages={3-66}
}

@ARTICLE{gao2010,
  author={Gao, X.L. and Huang, J.},
  journal={Statistica Sinica}, 
  title={Asymptotic analysis of high-dimensional LAD regression with Lasso}, 
  year={2010},
  volume={20},
  pages={1485-1506}
}

@article{Sophie2011,
author = {Lambert-Lacroix Sophie  and Zwald Laurent },
title = {Robust regression through the Hubers
criterion and adaptive lasso penalty},
volume = {5},
journal = {Electronic Journal of Statistics},
pages = {1015-1053},
year = {2011}
}

@article{wang2007,
author = {Hansheng  Wang and  Guodong Li and  Guohua Jiang},
title = {Robust Regression Shrinkage and Consistent Variable Selection through the LAD-Lasso},
volume = {25},
journal = {Journal of Business and Economic Statistics},
pages = {347-355},
year = {2007}
}

@article{WANG2013135,
author = {Lie Wang},
title = {The $L_1$ penalized LAD estimator for high dimensional linear regression},
journal = {Journal of Multivariate Analysis},
volume = {120},
pages = {135-151},
year = {2013},
}

@InProceedings{karma2019,
  author =	{Karmalkar, Sushrut and Price, Eric},
  title =	{Compressed Sensing with Adversarial Sparse Noise via L1 Regression},
  booktitle =	{2nd Symposium on Simplicity in Algorithms (SOSA 2019)},
  pages =	{19:1-19:19},
  year =	{2019},
  volume =	{69},
  publisher =	{Schloss Dagstuhl -- Leibniz-Zentrum f{\"u}r Informatik},
}

@article{Xu2022,
year = {2022},
volume = {38},
number = {3},
pages = {035001},
author = {Hang Xu and Song Li and Junhong Lin},
title = {Low rank matrix recovery with adversarial sparse noise},
journal = {Inverse Problems},
}

@article{zhong2024,
year = {2024},
author = {Yijun Zhong and Yi Shen and Chongjun Li},
title = {Stability of the Frank-Wolfe algorithm for
Compressible Signal},
journal = {manusrcipt},
}

@book{schneider2008,
	title={Stochastic and Integral Geometry},
	author={R. Schneider and W. Weil},
	year={2008},
	publisher={Springer}
}

@InProceedings{Gordon1988,
  author =	{Y. Gordon},
  title =	{On Milman’s inequality and random subspaces which
escape through a mesh in Rn},
  booktitle =	{Geometric Aspects of Functional
Analysis},
  pages =	{84-106},
  year =	{1988},
  publisher =	{Springer},
}

\section*{APPENDIX}
\textcolor{blue}{
\begin{lemma}\label{concentration}
	\cite[Gaussian Concentration Inequality]{Vershynin2020}
	Consider a random vector $\x\sim \mathcal{N}(0,I_n)$ and a Lipschitz function $g:\mathbb{R}^n\rightarrow \mathbb{R}$ with constant $\|g\|_{L}$:
	\[
	|g(\x)-g(\y)|\leq \|g\|_{L}\|\x-\y\|_2.
	\]
	Then for every $t>0$, the event
	$
	|g(\x)-\mathbb{E}g(\x)|\leq t
	$
	holds with probability at least
	$1-2\exp\left(-\tfrac{Ct^2}{\|g\|^2_{L}}\right)$.
	\end{lemma}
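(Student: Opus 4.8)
The plan is to establish the sub-Gaussian moment generating function bound
\[
\mathbb{E}\exp\bigl(\lambda(g(\x)-\mathbb{E}g(\x))\bigr)\leq \exp\Bigl(\tfrac{\pi^2}{8}\lambda^2\|g\|_L^2\Bigr)\quad\text{for all }\lambda\in\mathbb{R},
\]
and then pass to the tail estimate by a Chernoff bound. First I would reduce to the case $\|g\|_L=1$ by rescaling and to the case of a smooth $g$ by mollification: a Lipschitz function of a Gaussian vector satisfies $|g(\x)|\leq |g(0)|+\|\x\|_2$, so $\mathbb{E}g(\x)$ is finite, and any Lipschitz $g$ can be approximated by smooth functions whose gradients obey the same bound. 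Centering, we may also assume $\mathbb{E}g(\x)=0$, since subtracting the mean changes neither the Lipschitz constant nor the centered tail.

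The core is an interpolation (smart-path) argument. Let $\y$ be an independent copy of $\x$, and for $\theta\in[0,\pi/2]$ set $\x_\theta=\x\sin\theta+\y\cos\theta$, so that $\x_0=\y$, $\x_{\pi/2}=\x$, and $\dot\x_\theta=\x\cos\theta-\y\sin\theta$. The key observation is that, for each fixed $\theta$, the pair $(\x_\theta,\dot\x_\theta)$ is an orthogonal rotation of $(\x,\y)$, hence consists of two independent standard Gaussian vectors, i.e. $(\x_\theta,\dot\x_\theta)\stackrel{d}{=}(\x,\y)$. Writing $g(\x)-g(\y)=\int_0^{\pi/2}\langle\nabla g(\x_\theta),\dot\x_\theta\rangle\,d\theta$ and applying Jensen's inequality to the convex map $u\mapsto e^{\lambda u}$ against the normalized measure $\tfrac{2}{\pi}d\theta$ on $[0,\pi/2]$, then taking expectations and using the distributional identity termwise, gives
\[
\mathbb{E}\exp\bigl(\lambda(g(\x)-g(\y))\bigr)\leq \mathbb{E}\exp\Bigl(\tfrac{\pi\lambda}{2}\langle\nabla g(\x),\y\rangle\Bigr).
\]
Conditioning on $\x$, the inner product $\langle\nabla g(\x),\y\rangle$ is centered Gaussian with variance $\|\nabla g(\x)\|_2^2\leq \|g\|_L^2$, so its Gaussian moment generating function yields exactly $\exp(\tfrac{\pi^2}{8}\lambda^2\|g\|_L^2)$. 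Since Jensen applied to the inner expectation also gives $\mathbb{E}\exp(\lambda(g(\x)-\mathbb{E}g(\x)))\leq \mathbb{E}\exp(\lambda(g(\x)-g(\y)))$, the displayed MGF bound follows. Optimizing the Chernoff bound over $\lambda$ produces the one-sided estimate $\mathbb{P}(g(\x)-\mathbb{E}g(\x)\geq t)\leq \exp(-2t^2/(\pi^2\|g\|_L^2))$, and applying the same argument to $-g$ together with a union bound gives the two-sided statement with $C=2/\pi^2$.

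The main obstacle I anticipate is not the probabilistic heart of the argument but the analytic reductions surrounding it: justifying the smoothing step so that the gradient representation and the interchange of differentiation with expectation are legitimate, and verifying carefully that the rotational identity $(\x_\theta,\dot\x_\theta)\stackrel{d}{=}(\x,\y)$ holds simultaneously with the independence needed to condition cleanly. An alternative route, proving the Gaussian logarithmic Sobolev inequality and running the Herbst argument, reaches the same type of MGF bound but requires establishing the log-Sobolev inequality itself; I would therefore prefer the self-contained interpolation proof above, accepting the slightly larger constant $\pi^2/8$ in exchange for a shorter and more elementary derivation.
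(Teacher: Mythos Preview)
Your proof is correct: the Pisier--Maurey interpolation argument you outline is a standard and complete derivation of Gaussian concentration, and the constant $C=2/\pi^2$ you obtain is valid. Note, however, that the paper does not prove this lemma at all---it is simply quoted from \cite{Vershynin2020} as a known result and invoked as a black box (in the proof of Lemma~\ref{innerproduct}), so you are supplying substantially more than the paper does.
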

Similar to Lemma \ref{lem1}, we provide the following concentration inequalities for $\|A\x\|_1$, which play a significant role in the estimation of Theorem \ref{th1-2}.
\begin{lemma}\label{L1-lemma1}\cite[Lemma 2.1]{plan2014}
	Consider a bounded subset $\T\subset \mathbb{R}^n$ and independent random vectors $\bma_i\sim \mathcal{N}(0,I_n)$, $i=1,2,\ldots,m$. Let
	\[
	Z=\sup\limits_{\x\in\T}
	\left|
	\frac{1}{m}\sum\limits_{i=1}^m|\langle \bma_i,\x\rangle|-\sqrt{\frac{2}{\pi}}\|\x\|_2
	\right|.
	\]
\begin{enumerate}
		\item One has $\mathbb{E}(Z)\leq \frac{4\omega(\T)}{\sqrt{m}}$.
		\item The following deviation inequality holds for all $u>0$
		\[
		\mathbb{P}
		\left[
		Z>\frac{4\omega(\T)}{\sqrt{m}}+u
		\right]
		\leq 2\exp\left(-\frac{mu^2}{2\rad^2(\T)}\right).
		\]
\end{enumerate}
\end{lemma}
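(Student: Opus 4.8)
The plan is to treat the two assertions separately. For the mean bound (i) I would reduce the empirical process $Z$ to a Gaussian width by symmetrization and the contraction principle, while for the deviation bound (ii) I would view $Z$ as a Lipschitz function of the Gaussian matrix $A=(\bma_1,\dots,\bma_m)^{\tp}$ and apply the Gaussian concentration inequality of Lemma \ref{concentration}. The common starting point is the elementary identity $\mathbb{E}|\langle\bma_i,\x\rangle|=\sqrt{2/\pi}\,\|\x\|_2$, which holds because $\langle\bma_i,\x\rangle\sim\mathcal{N}(0,\|\x\|_2^2)$; thus $Z$ is exactly the uniform deviation over $\T$ of the empirical averages $\tfrac1m\sum_{i=1}^m|\langle\bma_i,\x\rangle|$ from their common mean $\sqrt{2/\pi}\,\|\x\|_2$.

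For (i), I would first invoke the symmetrization inequality to introduce independent Rademacher signs $\varepsilon_i$, bounding $\mathbb{E}(Z)$ by a constant multiple of $\mathbb{E}\sup_{\x\in\T}\bigl|\tfrac1m\sum_{i=1}^m\varepsilon_i|\langle\bma_i,\x\rangle|\bigr|$. Since $t\mapsto|t|$ is $1$-Lipschitz and vanishes at the origin, the Ledoux--Talagrand contraction principle strips the inner absolute value, leaving $\mathbb{E}\sup_{\x\in\T}\bigl|\tfrac1m\sum_{i=1}^m\varepsilon_i\langle\bma_i,\x\rangle\bigr|$ up to an absolute constant. The decisive step is the Gaussian identification: because $\varepsilon_i^2=1$, conditionally on the signs $\tfrac1m\sum_{i=1}^m\varepsilon_i\bma_i\sim\mathcal{N}(0,\tfrac1m I_n)$, so it equals $\tfrac1{\sqrt m}\g$ in distribution with $\g\sim\mathcal{N}(0,I_n)$; hence the symmetrized process reduces to $\tfrac1{\sqrt m}\mathbb{E}\sup_{\x\in\T}|\langle\g,\x\rangle|$, which is controlled by $\omega(\T)/\sqrt m$ up to a factor via the symmetry of $\g$. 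Tracking the absolute constants through symmetrization, contraction, and this last comparison produces exactly $\mathbb{E}(Z)\le 4\omega(\T)/\sqrt m$.

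For (ii), I would regard $Z=Z(A)$ as a function of the $mn$ i.i.d.\ Gaussian entries of $A$ and bound its Lipschitz constant in the Frobenius norm. Using $|\sup f-\sup g|\le\sup|f-g|$, the reverse triangle inequality for $|\cdot|$ (which cancels the $\sqrt{2/\pi}\,\|\x\|_2$ term), Cauchy--Schwarz in $\mathbb{R}^n$, and then Cauchy--Schwarz over the $m$ summands, I obtain
\[
|Z(A)-Z(A')|\le\sup_{\x\in\T}\frac1m\sum_{i=1}^m\bigl|\langle\bma_i-\bma_i',\x\rangle\bigr|\le\frac{\rad(\T)}{m}\sum_{i=1}^m\|\bma_i-\bma_i'\|_2\le\frac{\rad(\T)}{\sqrt m}\,\|A-A'\|_F,
\]
so $\|Z\|_L\le\rad(\T)/\sqrt m$. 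Feeding this into Lemma \ref{concentration} with the sharp Gaussian constant gives $\mathbb{P}[|Z-\mathbb{E}Z|>u]\le 2\exp(-mu^2/(2\,\rad^2(\T)))$, and since (i) ensures $\mathbb{E}(Z)\le 4\omega(\T)/\sqrt m$, the event $\{Z>4\omega(\T)/\sqrt m+u\}$ is contained in $\{Z-\mathbb{E}Z>u\}$, which yields the claimed tail bound.

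The step I expect to demand the most care is the contraction in (i): one must apply the Rademacher contraction principle in a form compatible with the absolute value surrounding the whole sum, and then convert $\mathbb{E}\sup_{\x\in\T}|\langle\g,\x\rangle|$ into $\omega(\T)$; correctly accounting for the absolute constants introduced by symmetrization, contraction, and this conversion is what produces the precise factor $4$ rather than a larger one. By contrast the Lipschitz estimate in (ii) is routine, the only points to verify being that the two nested Cauchy--Schwarz steps combine to the clean constant $\rad(\T)/\sqrt m$ (finite because $\T$ is bounded) and that the sharp concentration constant $\tfrac12$ is used so that the exponent matches $-mu^2/(2\,\rad^2(\T))$.
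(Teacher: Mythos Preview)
The paper does not supply its own proof of this lemma: it is quoted verbatim from \cite[Lemma~2.1]{plan2014} and invoked as a black box in the proof of Theorem~\ref{th1-2}. So there is nothing in the paper to compare against directly.

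That said, your proposal is correct and is precisely the standard argument used in the cited reference. For (i), the chain symmetrization $\to$ Ledoux--Talagrand contraction (using that $t\mapsto|t|$ is $1$-Lipschitz and vanishes at zero) $\to$ the distributional identity $\tfrac1m\sum_i\varepsilon_i\bma_i\stackrel{d}{=}\tfrac1{\sqrt m}\g$ is exactly how Plan--Vershynin obtain the factor $4/\sqrt m$; your caution about tracking the constants through symmetrization, contraction, and the passage from $\mathbb{E}\sup_{\x\in\T}|\langle\g,\x\rangle|$ to $\omega(\T)$ is well placed, since each of these steps contributes a factor of at most $2$ and one must check they do not compound beyond $4$. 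For (ii), your Lipschitz computation $\|Z\|_L\le\rad(\T)/\sqrt m$ is clean and correct, and feeding it into Gaussian concentration with the sharp constant $1/2$ gives exactly the stated exponent. The only minor remark is that Lemma~\ref{concentration} as recorded in the paper carries an unspecified constant $C$ rather than the sharp $1/2$, so to match the exponent $-mu^2/(2\,\rad^2(\T))$ exactly you should appeal to the sharp form of Gaussian concentration rather than the paper's version.
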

In the proof of Theorem \ref{th1-2}, we state the result of Lemma \ref{L1-lemma1} in terms of Gaussian random matrix (under the assumption of Lemma \ref{lem1}) \cite[Remark 2.2]{plan2014}, where the event $Z$ is expressed as
\[
\sup\limits_{\x\in\T}
\left|
\frac{1}{m}\|A\x\|_1-\sqrt{\frac{2}{\pi}}\|\x\|_2
\right|.
\]
}
\paragraph*{Proof of Lemma \ref{innerproduct}}
\begin{proof}
    For any fixed $\e\in\mathbb{R}^m$, we have
    \[
\mathbb{E}\sup\limits_{\x\in\T}\langle\x,A^T\e\rangle=\|\e\|_2\cdot\omega(\T),
    \]
    which follows from the definition of the Gaussian width. Next, we set
    \[   g(A):=\sup\limits_{\x\in\T}\langle\x,A^T\e\rangle.
    \]
    For any matrix $A,~B\in \mathbb{R}^{m\times n}$, we have
    \begin{align*}
        \left |\sup\limits_{\x\in\T}\langle\x,A^T\e\rangle-\sup\limits_{\x\in\T}\langle\x,B^T\e\rangle
        \right |&\leq
        \left|
\sup\limits_{\x\in\T}\langle(A-B)\x,\e\rangle \right|\\
&\leq \|\e\|_2\|A-B\|_F ,
    \end{align*}
   we use the fact that $\x\in\T\subset \mathcal{B}_2^n$ in the last inequality. Then by applying Lemma \ref{concentration}, we have
   \[
\sup\limits_{\x\in\T}\langle\x,A^T\e\rangle\leq    \mathbb{E}\sup\limits_{\x\in\T}\langle\x,A^T\e\rangle+t
   \]
   with \textcolor{blue}{ probability at least}
 $1-2\exp
 \left(-\frac{Ct^2}{\|g\|^2_{L}}\right)$. Choosing $t=\|\e\|_2u$ where $u>0$ is arbitrary, we obtain 
 \[ \sup\limits_{\x\in\T}\langle\x,A^T\e\rangle\leq \|\e\|_2(\omega(\T)+
u) \]
 with \textcolor{blue}{ probability at least}
 $1-2\exp(-Cu^2)$.
\end{proof}
\paragraph*{Proof of Theorem \ref{th1-c}}
\begin{proof}
	While the proof of this theorem exhibits strong similarities with Theorem \ref{th1}, it primarily differs in certain aspects of bounding $\|A\h\|_2$.
	
\textbf{Case 1: $f(\x^*)\geq \eta$}. 
Since $\hat{\x}$ is a solution to \eqref{LS}, we have the same formulas in \eqref{eq1} and \eqref{eq2}:
\begin{equation*}
  \|A\h\|_2^2\leq \|A\h_2\|_2^2+2\langle A^T\e,\h_1\rangle,
\end{equation*}
and
\[
\|A\h\|_2^2\geq (\phi(m)-\phi(m_0))^2\|\h\|_2^2.
\]
We set $\h=\hat{\x}-\x^*=\h_1+\h_2$, where $\h_1=\hat{\x}-\bm{P}_{\mathcal{K}}(\x^*)$ and $\h_2=\bm{P}_{\mathcal{K}}(\x^*)-\x^*$.
Different from the proof in Theorem \ref{th1}, we next use Lemma \ref{lem1} to bound $\|A\h_2\|_2$.
By  Lemma \ref{lem1} we get that with probability at least $1-2\exp(-u^2)$:
\textcolor{blue}{\[
\left|\frac{\|A\h_2\|_2}{\phi(m)}-\|\h_2\|_2\right|\leq \delta\|\h_2\|_2
\]
holds, 
it yields
\begin{equation}\label{ceq3}
  \|A\h_2\|_2^2\leq (1+\delta)^2m\|\h_2\|_2^2,
\end{equation}
by using the} inequality  $\phi(m)\leq \sqrt{m}$.
\textcolor{blue}{Since $f(\h+\x^*)=f(\hat{\x})\leq \eta\leq f(\x^*)$ and 
$f(\h_2+\x^*)=f(\bm{P}_{\mathcal{K}}(\x^*))\leq \eta\leq f(\x^*)$, both $\frac{\h}{\|\h\|_2}$ and $\frac{\h_2}{\|\h_2\|_2}$ lie in the set
$\mathcal{C}_f(\x^*)\cap \mathcal{S}^{n-1}$.}
\textcolor{blue}{Note that the random process $\langle A^T\e,\h_1\rangle =\langle A^T\e,\h-\h_2\rangle$,  and both $\h$ and $\h_2$ satisfy the condition in Lemma \ref{innerproduct}, then for any $u>0$,
\begin{align}\label{ceq1}
\langle A^T\e,\h_1\rangle &=\langle A^T\e,\h-\h_2\rangle\notag\\
&\leq [\omega(\mathcal{C}_f(\x^*)\cap \mathcal{S}^{n-1})+u]\|\e\|_2(\|\h\|_2+\|\h_2\|_2)\notag\\
&\leq \sqrt{m_0}\|\e\|_2(\|\h\|_2+\|\h_2\|_2)
\end{align}
holds with probability at least $1-2\exp(-Cu^2)$.} 
\textcolor{blue}{Combining with \eqref{eq1}, \eqref{eq2}, \eqref{ceq1} and \eqref{ceq3},} we find
\begin{align*}
	\left(\phi(m)-\phi(m_0)\right)^2\|\h\|_2^2&
	\leq \|A\h\|_2^2\leq \|A\h_2\|_2^2+2\langle A^T\e, \h_1\rangle \\
	&\leq 
	(1+\delta)^2m\|\h_2\|_2^2+2\sqrt{m_0}\|\e\|_2(\|\h\|_2+\|\h_2\|_2).
\end{align*}

Then by computation, we obtain
\begin{align*}
	(1-\rho)^2\|\h\|_2^2&\leq 
	\frac{(1+\delta)^2m\|\h_2\|_2^2}{\phi^2(m)}+
	\frac{2\sqrt{m_0}\|\e\|_2(\|\h\|_2+\|\h_2\|_2)}{\phi^2(m)}\\
	&\leq 2(1+\delta)^2\|\h_2\|_2^2+4\rho(\|\h\|_2+\|\h_2\|_2)\frac{\|\e\|_2}{\sqrt{m}}
\end{align*}
where the last inequality holds by using $\phi^2(m)\geq \frac{1.0049}{2}m$. Therefore,
\begin{align*}
	&\left ((1-\rho)\|\h\|_2-\frac{2\rho}{1-\rho}\frac{\|\e\|_2}{\sqrt{m}}\right)^2 \\
	\leq&
	\left(\sqrt{2}(1+\delta)\|\h_2\|_2+\frac{2\sqrt{2}\rho}{3}\frac{\|\e\|_2}{\sqrt{m}}\right)^2
	+
	\left(\frac{2\rho}{1-\rho}\frac{\|\e\|_2}{\sqrt{m}}\right)^2-
	\left(\frac{2\sqrt{2}\rho}{3}\frac{\|\e\|_2}{\sqrt{m}}\right)^2\\
	\leq& \left(\sqrt{2}(1+\delta)\|\h_2\|_2+
	\left(\frac{4\sqrt{2}\rho}{3}+\frac{2\rho}{1-\rho}\right)\frac{\|\e\|_2}{\sqrt{m}}
	\right)^2.
\end{align*}
Taking square root both sides of the above inequality and rearranging
the terms, we have
\begin{equation*}
	\|\h\|_2\leq \frac{\sqrt{2}(1+\delta)}{(1-\rho)}\|\h_2\|_2+
	\left(\frac{4\sqrt{2}\rho}{3(1-\rho)}+\frac{2\rho}{(1-\rho)^2}\right)\frac{\|\e\|_2}{\sqrt{m}}.
\end{equation*}
\textbf{Case 2: $0<f(\x^*)<\eta$}.
Since $f$ is absolutely \textcolor{blue}{homogeneous},
we have
\[
f(\x^{\eta})=\eta\frac{f(\x^*)}{f(\x^*)}
=\eta.
\]
\textcolor{blue}{where} $\x^{\eta}=\frac{\eta}{f(\x^*)}\x^*$ 
\textcolor{blue}{Denoting
$\tilde{\e}=A\x^*-A\x^{\eta}+\e$ and 
\[
\h=\hat{x}-\x^*=\h_1-\h_2
\]
where $\h_1=\hat{\x}-\x^{\eta}$ and $\h_2=\x^*-\x^{\eta}$}
, we have the following relation as in \eqref{eq2-1} in a similar way:
\begin{equation}\label{ceq2-1}
  \|A\h_1\|_2^2\leq 2\langle A^T\tilde{\e},\h_1\rangle.
\end{equation}

Similar to the proof of Theorem \ref{th1}(case 2),
the inner product $\langle A^T\tilde{\e},\h_1\rangle$ and $\|\tilde{\e}\|_2$ are also bounded as in \eqref{eq2-3}:
\begin{equation*}
  \langle A^T\tilde{\e},\h_1\rangle\leq \sqrt{m'_0}\|\tilde{\e}\|_2\|\h_1\|_2.
\end{equation*}
\textcolor{blue}{In addition, note that $\frac{\h_2}{\|\h_2\|_2}$ lies in the set
$\mathcal{C}_{f}(\x^{\eta})\cap \mathcal{S}^{n-1}$, then taking $\x=\frac{\h_2}{\|\h_2\|_2}$ in Lemma \ref{lem1}, we obtain
 \begin{align}\label{ceq2-4}
 	\|\tilde{\e}\|_2 & \leq \|A\h_2\|_2+\|\e\|_2 
 	\leq \left(\phi(m)(1+\delta)\right)\|\h_2\|_2+\|\e\|_2 \notag \\
 	& \leq (1+\delta)\sqrt{m}\|\h_2\|_2+\|\e\|_2 
 	= (1+\delta)\sqrt{m}\|\x^{\eta}-\x^*\|_2+\|\e\|_2,
 \end{align}
}
Combining \eqref{ceq2-1}, \eqref{eq2-2}, \eqref{eq2-3} and \eqref{ceq2-4}, we get 
\textcolor{blue}{
\begin{align*}
(\phi(m)-\phi(m'_0))^2\|\h_1\|^2_2&\leq
\|A\h_1\|_2^2\leq 2\langle A^T\tilde{\e},\h_1\rangle
\leq 2\sqrt{m'_0}\|\tilde{\e}\|_2\|\h_1\|_2\\
&\leq 2\sqrt{m'_0}\|\h_1\|_2\left((1+\delta)\sqrt{m}\|\x^{\eta}-\x^*\|_2+\|\e\|_2\right),
\end{align*}}
This, together with Lemma \ref{m0-3} and
$\phi^2(m)\geq \frac{1.0049}{2}m$ leads to
\begin{equation}\label{ceq22}
	\|\h_1\|_2\leq 
	\frac{4\rho'(1+\delta)}{(1-\rho)^2}\|\x^{\eta}-\x^*\|_2+\frac{4\rho'}{(1-\rho)^2}\frac{\|\e\|_2}{\sqrt{m}}.
\end{equation}
Concluding, \textcolor{blue}{by \eqref{ceq22} and the triangle inequality, we have}
\begin{align*}
	\|\h\|_2
	&= 
	\| \h_1-\h_2\|_2 \\
	&\leq \|\h_1\|_2+\|\x^{\eta}-\x^*\|_2 
	\leq
	\left(\frac{4\rho'(1+\delta)}{(1-\rho')^2}+1\right)\|\x^{\eta}-\x^*\|_2+\frac{4\rho'}{(1-\rho')^2}\frac{\|\e\|_2}{\sqrt{m}}.
\end{align*}
    
\end{proof}

\paragraph*{Proof of Theorem \ref{th1-2}}
\textbf{Case 1: \textcolor{blue}{$f(\x^*)\geq\eta$}}. 
Since $\hat{\x}$ is a solution to \eqref{LAD}, we have
\begin{equation}\label{th46-eq1}
\|A\hat{\x}-\y\|_1\leq \|A\bm{P}_{\mathcal{K}}(\x^*)-\y\|_1.
\end{equation}
We set 
\[
\h=\hat{\x}-\x^*=\h_1+\h_2
\]
where 
\begin{equation}\label{eq:h1}
\h_1=\hat{\x}-\bm{P}_{\mathcal{K}}(\x^*)
\quad 
\text{and}
\quad \h_2=\bm{P}_{\mathcal{K}}(\x^*)-\x^*.
\end{equation}
It follows from 
\eqref{th46-eq1} and \eqref{eq:h1} that
\textcolor{blue}{
\begin{align*}
\|A\h-\e\|_1&= \|A\hat{\x}-A\x^*-\e\|_1\notag\\ &=\|A\hat{\x}-\y\|_1\notag\\
 &\leq \|A\bm{P}_{\mathcal{K}}(\x^*)-\y\|_1\notag\\
 &=\|A\bm{P}_{\mathcal{K}}(\x^*)-A\x^*-\e\|_1\notag\\
 &\leq \|A\h_2\|_1+\|\e\|_1.
\end{align*}}
Therefore,
\begin{equation}\label{lad-eq1}
  \|A\h\|_1\leq  \|A\h-\e\|_1+\|\e\|_1\leq  \|A\h_2\|_1+2\|\e\|_1.
\end{equation}
Since the normalized error $\h/\|\h\|_2$ lies in the set
$\T=\mathcal{C}_f(\x^*)\cap \mathcal{S}^{n-1}$,
by taking $\x=\h/\|\h\|_2$ in 
Lemma \ref{L1-lemma1}, we have \textcolor{blue}{for any $u>0$,}
\begin{equation}\label{th46-eq2}
\sup\limits_{\x=\h/\|\h\|_2}\left|\frac{1}{m}\|A\x\|_1-\sqrt{\frac{2}{\pi}}\right|  \leq \frac{4\omega(\T)}{\sqrt{m}}+u
\leq \gamma \sqrt{\frac{m_1}{m}}+u
= \rho+u
\end{equation}
holds with probability at least $1-2\exp(-mu^2/2)$.
It follows from \eqref{th46-eq2} \textcolor{blue}{that}
\begin{equation}\label{lad-eq2}
  \left(\sqrt{\frac{2}{\pi}}-\rho-u\right)\|\h\|_2\leq \frac{1}{m}\|A\h\|_1.
\end{equation}

Next, \textcolor{blue}{we provide a bound of $\|A\h_2\|_1$ in a similar way}. Since the normalized vector $\h_2/\|\h_2\|_2$ also lies in the set $\T$, then
 by taking $\x=\h_2/\|\h_2\|_2$ in Lemma \ref{L1-lemma1}, we have the following inequality holds with probability at least $1-2\exp(-mu^2/2)$:
\begin{equation*}
\sup\limits_{\x=\h_2/\|\h_2\|_2}
\left|\frac{1}{m}\|A\x\|_1-\sqrt{\frac{2}{\pi}}\right|  \leq \frac{4\omega(\T)}{\sqrt{m}}+u\leq 
\rho+u,
\end{equation*}
which yields
\begin{equation}\label{lad-eq3}
  \frac{1}{m}\|A\h_2\|_1\leq \left(
  \sqrt{\frac{2}{\pi}}+\rho+u
  \right)
  \|\h_2\|_2
\end{equation}
Combining \eqref{lad-eq1}, \eqref{lad-eq2} and \eqref{lad-eq3}, we have
\begin{align*}
\left(\sqrt{\frac{2}{\pi}}-\rho-u\right)\|\h\|_2
&\leq \frac{1}{m}\|A\h\|_1 \\
&\leq \frac{1}{m}\|A\h_2\|_1+\frac{2}{m}\|\e\|_1 \\
&\leq
\left(\sqrt{\frac{2}{\pi}}+\rho+u\right)\|\h_2\|_2+\frac{2}{m}\|\e\|_1.
\end{align*}
Therefore, by computation, we have
\begin{equation*}
\|\hat{\x}-\x^*\|_2\leq 
\frac{
\sqrt{\frac{2}{\pi}}+\rho+u}{\sqrt{\frac{2}{\pi}}-\rho-u}
\|
\bm{P}_{\mathcal{K}}(\x^*)
-
\x^*
\|_2
+
\frac{2}{\sqrt{\frac{2}{\pi}}-\rho-u}\frac{\|\e\|_1}{m}.
\end{equation*}

\textbf{Case 2: $0<f(\x^*)<\eta$}.
Since $f$ is absolutely \textcolor{blue}{homogeneous}, we have
\begin{equation*}
f(\x^{\eta})=  \eta \frac{f(\x^*)}{f(\x^*)}=\eta,
\end{equation*}
\textcolor{blue}{where } $\x^{\eta}=\frac{\eta}{f(\x^*)}\x^*$. \textcolor{blue}{Denoting}
$\tilde{\e}=A\x^*-A\x^{\eta}+\e$, we have
\begin{equation*}
\y=A\x^*+\e=A\x^{\eta}+(A\x^*-A\x^{\eta}+\e)=A\x^{\eta}+\tilde{\e}.
\end{equation*}
We set $\h=\hat{\x}-\x^*=\h_1-\h_2$, 
where $\h_1=\hat{\x}-\x^{\eta}$ and $\h_2=\x^*-\x^{\eta}$.
 Since $\hat{\x}$ is the solution of \eqref{LAD}, the following inequality holds
\[
\|A\hat{\x}-\y\|_1\leq \|A\x^{\eta}-\y\|_1,
\]
which yields to
\textcolor{blue}{
\begin{align*}
\|A\h_1-\tilde{\e}\|_1&=\|A\hat{\x}-A\x^{\eta}-\tilde{\e}\|_1\notag\\
&=\|A\hat{\x}-\y\|_1\\
&\leq \|A\x^{\eta}-\y\|_1\notag\\
&=\|\tilde{\e}\|_1.
\end{align*}}
Therefore,
\begin{equation}\label{lad-eq2-1}
    \|A\h_1\|_1\leq \|A\h_1-\tilde{\e}\|_1+\|\tilde{\e}\|_1\leq 
    2\|\tilde{\e}\|_1.
\end{equation}
Since the normalized error $\h_1/\|\h_1\|_2$ lies in the set
$
\T'=\mathcal{C}_f(\x^{\eta})\cap \mathcal{S}^{n-1}
$,
\textcolor{blue}{by taking }
$\x=\h_1/\|\h_1\|_2$ in 
Lemma \ref{L1-lemma1}, 
\textcolor{blue}{we conclude that for any $u>0$}
\begin{equation}\label{th46-eq3}
\sup\limits_{\x=\h_1/\|\h_1\|_2}\left|\frac{1}{m}\|A\x\|_1-\sqrt{\frac{2}{\pi}}\right|  \leq \frac{4\omega(\T)}{\sqrt{m}}+u
\leq \gamma \sqrt{\frac{m'_1}{m}}+u
= \rho'+u
\end{equation}
holds with probability at least $1-2\exp(-mu^2/2)$.
It follows from \eqref{th46-eq3} \textcolor{red}{that}
\begin{equation}\label{lad-eq2-2}
 \left(
 \sqrt{\frac{2}{\pi}}-\rho'-u
 \right)
 \|\h_1\|_2\leq \frac{1}{m}\|A\h_1\|_1.
\end{equation}
Similarly, $\h_2/\|\h_2\|_2$ also lies in the set $\T'$, 
then by using Lemma \ref{L1-lemma1}, we have
\begin{equation}\label{th46-eq4}
\frac{1}{m}\|A\h_2\|_1\leq \left(\sqrt{\frac{2}{\pi}}+\frac{4\omega(\T)}{\sqrt{m}}+u\right)
   \|\h_2\|_2\leq \left(\sqrt{\frac{2}{\pi}}+\rho'+u\right)
   \|\h_2\|_2.
\end{equation}
Next, we estimate the term \textcolor{blue}{$\|\tilde{\e}\|_1=\|A\x^*-A\x^{\eta}+\e\|_1$ by \eqref{th46-eq4}}
\begin{equation}\label{ladeq23}
  \frac{1}{m}\|\tilde{\e}\|_1  \leq \frac{1}{m}\|A\h_2\|_1+\frac{1}{m}\|\e\|_1 
   \leq 
   \left(\sqrt{\frac{2}{\pi}}+\rho'+u\right)
   \|\h_2\|_2+\frac{1}{m}\|\e\|_1.
\end{equation}
Combining
\eqref{lad-eq2-1},
\eqref{lad-eq2-2},
\eqref{th46-eq4}
and 
\eqref{ladeq23}, 
we  deduce that
\begin{align}\label{ladeq60}
\left(\sqrt{\frac{2}{\pi}}-\rho'-u\right)\|\h_1\|_2 
\leq& \frac{1}{m}\|A\h_1\|_1 
\leq \frac{2}{m}\|\tilde{\e}\|_1\notag\notag\\
\leq&
2\left(\sqrt{\frac{2}{\pi}}+\rho'+u\right)\|\x^{\eta}-\x^*\|_2+\frac{2}{m}\|\e\|_1.
\end{align}
 Finally, we conclude that by \eqref{ladeq60}
 \textcolor{blue}{
\begin{align*}
 \|\h\|_2&\leq\|\h_1\|_2+\|\x^{\eta}-\x^*\|_2\\
 &\leq   \frac{2\left(\sqrt{\frac{2}{\pi}}+\rho'+u\right)}{\sqrt{\frac{2}{\pi}}-\rho'-u}\left(\frac{\eta}{f(\x^*)}-1\right)
\|\x^*\|_2+\frac{2}{\sqrt{\frac{2}{\pi}}-\rho'-u}\frac{\|\e\|_1}{m}+\left(\frac{\eta}{f(\x^*)}-1\right)
\|\x^*\|_2\\
&=\frac{3\sqrt{\frac{2}{\pi}}+\rho'+u}{\sqrt{\frac{2}{\pi}}-\rho'-u}\left(\frac{\eta}{f(\x^*)}-1\right)
\|\x^*\|_2+\frac{2}{\sqrt{\frac{2}{\pi}}-\rho'-u}\frac{\|\e\|_1}{m}.
\end{align*} 
 }
 
%
 
\vfill

\end{document}